\documentclass[a4paper,12pt]{article}
\usepackage[utf8]{inputenc}
\usepackage[UKenglish]{isodate}
\cleanlookdateon
\usepackage{geometry}
\usepackage{tikz}
\usetikzlibrary{matrix}
\usepackage{amssymb}
\usepackage{amsmath}
\usepackage{amsthm}
\usepackage{newpxtext,newpxmath}
\usepackage{csquotes}
\usepackage{xcolor}
\usepackage{enumitem}
\usepackage{thm-restate}
\usepackage{esvect}
\usepackage{subcaption}
\usepackage{pgfplots}
\usepgfplotslibrary{groupplots}

\definecolor{color2}{HTML}{003f5c}
\definecolor{color3}{HTML}{7a5195}
\definecolor{color4}{HTML}{ef5675}
\definecolor{color5}{HTML}{ffa600}

\usepackage[authoryear]{natbib}
\bibliographystyle{chicago}

\usepackage{hyperref}
\hypersetup{colorlinks,
  citecolor=orange!70!gray}

\newcommand{\floor}[1]{\left \lfloor #1 \right \rfloor}
\newcommand{\ceil}[1]{\left \lceil #1 \right \rceil}
\newcommand{\abs}[1]{\left \vert #1 \right \vert}
\newcommand{\eps}{\varepsilon}
\newcommand{\mbf}[1]{\mathbf{#1}}
\newcommand{\N}{\mathbb{N}}
\newcommand{\Prb}{\mathbb{P}}
\newcommand{\Lnk}{\vv{L}(n, \mbf{k})}
\newcommand{\cG}{\mathcal{G}}
\newcommand{\ov}[1]{#1^c}

\newcommand{\zero}{0}
\newcommand{\one}{1}

\newcommand\blfootnote[1]{
  \begingroup
  \renewcommand\thefootnote{}\footnote{#1}
  \addtocounter{footnote}{-1}
  \endgroup
}

\newenvironment{myquote}%
  {\list{}{\leftmargin=0.15in\rightmargin=0.15in}\item[]}%
  {\endlist}

\theoremstyle{plain}
\newtheorem{theorem}{Theorem}
\newtheorem*{theorem*}{Theorem}
\newtheorem{corollary}{Corollary}
\newtheorem{lemma}{Lemma}
\newtheorem{proposition}{Proposition}
\newtheorem{claim}{Claim}

\theoremstyle{definition}
\newtheorem{definition}{Definition}
\theoremstyle{remark}
\newtheorem{remark}{Remark}

\title{Game Connectivity and Adaptive Dynamics}
\author{Tom Johnston\thanks{School of Mathematics, University of Bristol, Bristol, BS8 1UG, UK.}\:\,\thanks{Heilbronn Institute for Mathematical Research, Bristol, UK.}
    \quad Michael Savery\footnotemark[2]\:\,\thanks{Mathematical Institute, University of Oxford, Oxford, OX2 6GG, UK.}
    \quad Alex Scott\footnotemark[3]\:\,\thanks{Supported by EPSRC grant EP/X013642/1.}
    \quad Bassel Tarbush\thanks{Department of Economics, University of Oxford, Oxford, OX1 3UQ, UK.}}
\date{\today}

\textheight=650pt\topmargin=0pt\headsep=0pt
\advance\textwidth2\evensidemargin
\evensidemargin=0pt
\oddsidemargin=0pt
\setlength{\parskip}{0.25em}

\begin{document}

\maketitle

\begin{abstract}
    \noindent We analyse the typical structure of games in terms of the connectivity properties of their best-response graphs. Our central result shows that, among games that are `generic' (without indifferences) and that have a pure Nash equilibrium, all but a small fraction are \emph{connected}, meaning that every action profile that is not a pure Nash equilibrium can reach every pure Nash equilibrium via best-response paths. This has important implications for dynamics in games. In particular, we show that there are simple, uncoupled, adaptive dynamics for which period-by-period play converges almost surely to a pure Nash equilibrium in all but a small fraction of generic games that have one (which contrasts with the known fact that there is no such dynamic that leads almost surely to a pure Nash equilibrium in \emph{every} generic game that has one). We build on recent results in probabilistic combinatorics for our characterisation of game connectivity.
    \blfootnote{\emph{Email}: \textsf{tom.johnston@bristol.ac.uk, \{savery,scott\}@maths.ox.ac.uk, bassel.tarbush@economics.ox.ac.uk}}
    \blfootnote{\emph{Keywords}: game connectivity, adaptive dynamics, best-response graphs, probabilistic combinatorics}
    \blfootnote{\emph{Thanks}: Sergiu Hart, Sam Jindani, George Mailath, Suraj Malladi, Bary Pradelski, John Quah, Marco Scarsini, Ludvig Sinander, Alex Teytelboym, Leeat Yariv, Peyton Young.}
    \blfootnote{\emph{Acknowledgment}: The simulations were run on the computational facilities of the \href{http://www.bris.ac.uk/acrc/}{Advanced Computing Research Centre, University of Bristol}.}
\end{abstract}

\section{Introduction}
A fundamental question at the heart of the literature on learning in games and distributed systems is whether there are adaptive dynamics that are guaranteed to lead to a Nash equilibrium in every game. Examples of adaptive dynamics include better- and best-response dynamics, fictitious play \citep*{fudenberg1998theory}, adaptive play \citep*{young1993evolution}, regret matching \citep*{hart2000simple}, regret testing \citep*{foster2006regret}, trial-and-error learning \citep*{young2009learning}, and many more.\footnote{\citet*{hart2005adaptive} distinguishes between three types of dynamics in games: learning, evolutionary, and adaptive. Learning requires high levels of rationality (e.g.\ \citealp{kalai1993rational}) whereas players in evolutionary dynamics instead mechanically inherit traits (e.g.\ \citealp{weibull1997evolutionary,hofbauer1998evolutionary,sandholm2010population}). Adaptive agents fall somewhere in between: they use relatively little information and take actions that respond to their environment according to basic decision heuristics in a generally improving way.} Several important results have outlined the boundary between the possible and the impossible, i.e.\ between classes of adaptive dynamics that are guaranteed to lead to a Nash equilibrium in every game and classes that lack such a guarantee (e.g.\ see \citealp{young2007possible} for an overview). A particularly influential impossibility result due to \cite*{hart2003uncoupled,hart2006stochastic} establishes that there is no simple adaptive dynamic that is guaranteed to lead to a pure Nash equilibrium in every game that has one, where `simple' qualifies the amount of information that each player has access to.

In this paper, we show that learning pure Nash equilibria via simple adaptive dynamics is not as hopeless an endeavour as the impossibility result of \citeauthor*{hart2003uncoupled} might suggest: we look at the space of all (ordinal and generic) games that have at least one pure Nash equilibrium and we show that simple adaptive dynamics lead to a pure Nash equilibrium in all but a small fraction of such games.
To establish this result, we study the `connectivity' properties of games---a concept that we formalise below---and we show that all but a quantifiably small fraction of games have a connectivity property that is conducive to equilibrium convergence.

What we do can be seen as a `beyond the worst-case' analysis of learning pure Nash equilibria in games. If one interprets simple adaptive dynamics as algorithms whose inputs are games, then \citeauthor*{hart2003uncoupled} have established that simple adaptive dynamics perform very poorly on their worst-case inputs: there are games (inputs) on which such dynamics (algorithms) do not lead to a pure Nash equilibrium; moreover, such `worst-case' games exist for any number of players $n \geq 3$. But, while algorithms are often assessed in terms of their worst-case performance, in practice we are often interested in how they perform on `typical' problem instances, and this has given rise to `beyond the worst-case' analysis of algorithms (\citealp{roughgarden2019beyond}; \citeyear[Chapter 1]{roughgarden2021beyond}). The simplex algorithm, for example, performs poorly on worst-case inputs but tends to perform well on typical problem instances. In this spirit, we characterize the structure of `typical' games and, in contrast with \citeauthor*{hart2003uncoupled}, we establish that simple adaptive dynamics perform very well on typical inputs, i.e.\ are guaranteed to lead to a pure Nash equilibrium in `typical' games.

The central argument of our paper is outlined in points (i)-(iii) below.

\begin{figure}
\centering
\begin{tikzpicture}[scale=0.75]
\begin{scope}[scale=0.8,xshift=-80,yshift=70,every node/.append style={yslant=0,xslant=0.8},yslant=0,xslant=0.8]
\draw[xstep=2cm,ystep=1cm,color=gray] (0,0) grid (4,2);
\node at (1,1.5) {\footnotesize $\one,\one,\one$}; \node at (3,1.5) {\footnotesize$\zero,\zero,\one$};
\node at (1,0.5) {\footnotesize$\zero,\one,\zero$}; \node at (3,0.5) {\footnotesize $\one,\zero,\zero$};
\node[] at (-1,0.5) {$\mathbf{B}$};
\node[] at (-1,1.5) {$\mathbf{A}$};
\node[] at (3,2.5) {$\mathbf{B}$};
\node[] at (1,2.5) {$\mathbf{A}$};
\node[] at (5,1) {$\mathbf{A}$};
\draw[white] (-1.5,0.3) -- (-1.5,2) node [pos=0.5,above,rotate=90,yshift=0.2cm] {\color{black}Player 1};
\draw[white] (0,3) -- (3.2,3) node [pos=0.5,above] {\color{black}Player 2};
\end{scope}

\begin{scope}[scale=0.8,xshift=-80,yshift=0,every node/.append style={yslant=0,xslant=0.8},yslant=0,xslant=0.8]
\draw[xstep=2cm,ystep=1cm,color=gray] (0,0) grid (4,2);

\node at (1,1.5) {\footnotesize $\one,\zero,\zero$};  \node at (3,1.5) {\footnotesize $\zero,\one,\zero$};
\node at (1,0.5) {\footnotesize $\zero,\zero,\one$}; \node at (3,0.5) {\footnotesize $\one,\one,\one$};

\node[] at (5,1) {$\mathbf{B}$};
\end{scope}

\draw[scale=0.8,white] (3.8,1) -- (3.8,3.75) node [pos=0.5,above,rotate=270] {\color{black}Player 3};

\begin{scope}[xshift=150,yshift=-20,
scale=3.5,
roundnode/.style={rectangle, draw=white, fill=white,inner sep=0,outer sep=0}]
\foreach \x in {1,2}
\foreach \y in {1,2}
\foreach \z in {1,2}
{\node[] (\z\x\y) at (\x,\y,\z) {$\circ$};} 
\node[] () at (1,2,1) {$\bullet$};
\node[] () at (2,1,2) {$\bullet$};

\path[->] (211) edge [thick ]  (111);
\path[->] (121) edge [thick ]  (221);

\path[->] (111) edge [thick ]  (121);
\path[->] (211) edge [thick ]  (221);

\path[->] (212) edge [thick ]  (112);
\path[->] (122) edge [thick ]  (222);

\path[->] (122) edge [thick ]  (112);
\path[->] (222) edge [thick ]  (212);

\path[->] (212) edge [thick ]  (211);
\path[->] (111) edge [thick ]  (112);
\path[->] (121) edge [thick ]  (122);
\path[->] (222) edge [thick ]  (221);

\end{scope}
\end{tikzpicture}
\caption{A 3-player 2-action game (left) and its corresponding best-response graph (right). The $\bullet$ vertices are sinks. The $\circ$ vertices form a cycle.}
\label{fig:intro_example}
\end{figure}

\begin{enumerate}[topsep=0ex,itemsep=0ex,leftmargin=0cm,label=(\roman*)]
\item We classify games according to the connectivity properties of their best-response graphs. Our interest in such a classification stems from the fact that the behaviours of many game dynamics are determined by such connectivity properties. A game's best-response graph is a directed graph whose vertex set is the set of pure action profiles and whose directed edges correspond to best-responses \citep*{young1993evolution}. An example is shown in Figure \ref{fig:intro_example}. A game's strict pure Nash equilibria correspond to the sinks of its best-response graph. Well-known classes of games categorized by the connectivity properties of their best-response graphs include \emph{weakly acyclic} games, i.e.\ those for which every vertex of the best-response graph can reach a sink along a directed best-response path, and \emph{acyclic} games, i.e.\ those whose best-response graphs contain no cycles. 
Acyclic games include the extensively studied class of potential games on which learning dynamics tend to perform well, while weak acyclicity has found wide interest because it is a necessary condition for the equilibrium convergence of certain (e.g.\ best-response) dynamics.
We introduce two new classes of games: \emph{connected} and \emph{super-connected} games. We say that a game is connected (super-connected) if its best-response graph has a sink and every non-sink can reach every sink (non-source) via best-response paths. The logical relationships between these game classes are shown below:
\begin{center}
    \begin{tikzpicture}
        \matrix (m) [matrix of math nodes,row sep=1em,column sep=1em,minimum width=0em] {
         \text{acyclic} & \text{weakly acyclic} \\
                        & \text{connected}      \\
                        & \text{super-connected} \\
        };
        \path[-stealth]
        (m-1-1) edge[double] (m-1-2)
        (m-3-2) edge[double] (m-2-2)
        (m-2-2) edge[double] (m-1-2);
    \end{tikzpicture}
\end{center}
The game in Figure \ref{fig:intro_example}, for example, is not acyclic but it is super-connected (and therefore connected and weakly acyclic). 

\item We build on recent results in probabilistic combinatorics to enumerate games and thereby quantify the relative sizes of the game classes shown above. Throughout, we take players' preferences to be encoded ordinally via a preference relation rather than cardinally via a utility function, thus ensuring that the space of (finite) games is countable (and finite for a fixed number of players and number of actions per player). We say that a game is `generic' if there are no indifferences in players' preferences. Our central result on game connectivity is this:
\begin{myquote}
    \emph{Among generic games that have a pure Nash equilibrium, all but a small fraction are connected. Moreover, the small fraction vanishes exponentially rapidly in the number of players}.
\end{myquote}
We prove the above result analytically for the case in which the number of players is much bigger than the maximum number of actions available to each player (Theorem~\ref{thm:main_game}). When the number of actions per player is fixed, our result implies that the fraction of generic games with a pure Nash equilibrium that are not connected vanishes exponentially as the number of players gets large.

Since connectedness is a ubiquitous property of generic games that have a pure Nash equilibrium, so is weak acyclicity. In contrast, acyclicity is very rare: we show that the fraction of generic games with a pure Nash equilibrium that are acyclic vanishes super-exponentially in the number of players (Proposition \ref{prop:acyclic_games}).
Finally, we show that super-connectedness exhibits a type of `phase transition' (Proposition \ref{prop:234}): the fraction of generic 2-action and 3-action games with a pure Nash equilibrium that are super-connected tends to 1 as the number of players gets large, but the fraction of generic $k$-action games that are super-connected tends to zero for any fixed $k \geq 4$.

\item Our characterization of the above game classes gives us insights into the `typical' structure of games, and this has important implications for adaptive dynamics. Indeed, we use our enumeration of connectivity properties in games to show (Theorem \ref{thm:conv}) that:
\begin{myquote}
 \emph{There is a simple adaptive dynamic that is guaranteed to lead to a pure Nash equilibrium in all but a small fraction of generic games that have one}.   
\end{myquote}
By `simple dynamic' we mean a dynamic that is uncoupled (i.e.\ a player's strategy depends only on the actions of other players and on their own preferences), stationary (i.e.\ time-independent), and 1-recall (i.e.\ no more than the last period's play is available to each player), and when we say that a dynamic is `guaranteed to lead to a pure Nash equilibrium' we mean that, starting at any action profile, the period-by-period play almost surely reaches a pure Nash equilibrium in finite time and, once there, never leaves it. \cite*{young2004strategic} shows that the best-response dynamic with inertia, which is a simple adaptive dynamic, is guaranteed to lead to a pure Nash equilibrium in every weakly acyclic game.\footnote{Under the best-response dynamic with inertia, in each period, each player $i$ independently best-responds to the current environment with probability $p_i \in (0,1)$ and does not update their action with probability $1-p_i$.} Our result follows from the observation that this is trivially also true in connected games. Our result contrasts with the aforementioned impossibility result of \cite*{hart2003uncoupled,hart2006stochastic} which states that there is no simple adaptive dynamic that is guaranteed to lead to a pure Nash equilibrium in every (generic) game that has one. Our result does not overturn this impossibility, but it limits its scope.

We also show that our result extends to some uncoupled dynamics that are not `simple', including regret-based and payoff-based dynamics. For example, we conclude that there is a regret-based dynamic that is guaranteed to lead to a pure Nash equilibrium in all but a small fraction of generic games that have one.

Our approach to the problem of finding adaptive dynamics that converge to pure Nash equilibria is to study the connectivity properties of games rather than to study the properties of the dynamics themselves. This approach can deliver novel conclusions, but it naturally also has limitations. An important caveat to our connectivity-based analysis, for example, is that we do not address convergence time to equilibrium since the latter depends on the specific dynamic used and on the game on which it is deployed. The final section of our paper offers a discussion of our approach, limitations, and directions for further work.

\paragraph*{Roadmap.} We define games and notions of game connectivity in Sections \ref{sec:games} and \ref{sec:connect}. Our main results on game connectivity are in Section \ref{sec:main_results}, and their implications for adaptive dynamics are discussed in Section \ref{sec:dynamics}. We conclude in Section \ref{sec:conc}. All proofs and further technical details are in Appendices \ref{sec:tech_conn}-\ref{sec:simulations}.

\end{enumerate}

\section{Games}\label{sec:games}
In this section we recall some standard definitions from the theory of games and introduce our notation.
For $n\in \N$, we use $[n]$ as shorthand for the set $\{1, \dots, n\}$. For each $a\in \N^n$ and $i\in [n]$, we write $a_{-i}$ for the element of $\N^{n-1}$ obtained by deleting the $i$th coordinate of $a$. In an abuse of notation, for $x\in \N$ and $a_{-i}\in \N^{n-1}$, we write $(x,a_{-i})$ for the element of $\N^n$ obtained by inserting $x$ into the $i$th coordinate of $a_{-i}$.

A \emph{game} is a tuple
\[
    \Big([n], \big([k_i]\big)_{i \in [n]} , (\succsim_i)_{i \in [n]}\Big),
\]
where $n\geq 2$ is an integer, $k_i\geq 2$ is an integer for each $i$, and for each $i$, $\succsim_i$ is a total preorder (i.e.\ a complete and transitive binary relation) on $A\coloneqq \prod_{i\in[n]}{[k_i]}$.
We say that $[n]$ is the \emph{player set} of the game and that $[k_i]$ is the \emph{action set} of player~$i$. Elements of $A$ are called \emph{action profiles}, and $\succsim_i$ is known as $i$'s \emph{preference relation}. For each $i$, let $\succ_i$ denote the asymmetric part of $\succsim_i$.

 An action $a_i$ of player $i$ is a \emph{best-response} to $a_{-i}$ if $(a_i, a_{-i}) \succsim_i (x, a_{-i})$ for every $x \in [k_i]$. An action profile $a \in A$ is a \emph{pure Nash equilibrium} if for each player $i \in [n]$, $a_i$ is a best-response to $a_{-i}$.

\section{Notions of game connectivity}\label{sec:connect}
The \emph{best-response graph} of a game is the directed graph $( A, \rightarrow )$ whose vertex set is the set of action profiles $A$ and whose directed edge set $\rightarrow$ is defined such that for $a,b\in A$,
\begin{myquote}
    $a \rightarrow b$ if and only if there exists $i\in[n]$ such that $a_{-i}=b_{-i}$, $b_i$ is a best-response to $a_{-i}$, and $b \succ_i a$.
\end{myquote}
In other words, there is a directed edge from $a$ to $b$ whenever $b_i$ is a strict best-response to $a_{-i}=b_{-i}$ for some player $i$.

We now define various classes of games in terms of the connectivity properties of best-response graphs.
As part of these definitions we will use standard terminology from the theory of directed graphs which we briefly recall here. Given a directed graph $(V,\rightarrow)$ with vertex set $V$ and edge set $\rightarrow$, a vertex $v \in V$ is a \emph{sink} if it has no outgoing edges (so corresponds to a strict pure-strategy Nash equilibrium), and a \emph{non-sink} otherwise. Similarly, a vertex $v \in V$ is a \emph{source} if it has no incoming edges, and a \emph{non-source} otherwise. For any pair of vertices $v,v' \in V$, we say that $v$ can \emph{reach} $v'$ if there is a sequence $(v^1,\dots,v^m)$ of vertices with $v^1 = v$ and $v^m = v'$ such that $v^i \rightarrow v^{i+1}$ for all $i\in[m-1]$; in this case we also say that the vertex~$v'$ can \emph{be reached from} $v$. Note that every vertex can reach and be reached from itself. A \emph{cycle} is a sequence $(v^1,\dots,v^m)$ of distinct vertices that has length $m$ at least $2$ and that satisfies $v^m\rightarrow v^1$ and $v^i \rightarrow v^{i+1}$ for all $i\in[m-1]$.

\begin{definition}
    A game is \emph{acyclic} if its best-response graph has no cycles.
\end{definition}

\begin{definition}
    A game is \emph{weakly acyclic} if its best-response graph has the property that every vertex can reach a sink.
\end{definition}
Observe that, by definition, a weakly acyclic game necessarily has at least one sink, and that acyclic games are weakly acyclic, but the converse need not hold.

Acyclicity and weak acyclicity are standard concepts (see e.g.\ \citealp*{fabrikant2013structure}) though they sometimes appear under different names in the literature.\footnote{For example, \cite*{takahashi2002pure} refer to weak acyclicity as quasi-acyclicity.} In our paper, the terms acyclicity and weak acyclicity follow the terminology of \citet*{young1993evolution}, who introduced the concept of weak acyclicity to the literature on dynamics in games.

Acyclic games are a superset of the very widely studied class of potential games \citep*{monderer1996potential}. Potential games have been the subject of intense research, particularly because many dynamics are guaranteed to converge to a pure Nash equilibrium in such games (e.g.\ \citealp*{hofbauer2002global,roughgarden2016twenty}). Weakly acyclic games are also very widely studied because weak acyclicity is a necessary condition for the guaranteed convergence of best-response dynamics to a pure Nash equilibrium from any starting vertex (e.g.\ see \citealp*{fabrikant2013structure,apt2015classification}). 

This paper introduces two further notions of connectivity.
\begin{definition}
    A game is \emph{connected} if its best-response graph has at least one sink and every non-sink can reach every sink.
\end{definition}

\begin{definition}
    A game is \emph{super-connected} if its best-response graph has at least one sink and every non-sink can reach every non-source.
\end{definition}
We note that super-connectedness implies connectedness, and connectedness implies weak acyclicity but, in each case, the converse need not hold. Moreover, as shown in Figure \ref{fig:a-b}, super-connectedness neither implies nor is implied by acyclicity.

\begin{figure}
    \centering
    \centering
    \begin{tabular}{ccc}
\begin{tikzpicture}[scale=3] 
\foreach \x in {1,2}
\foreach \y in {1,2}
\foreach \z in {1,2}
{\node[] (\z\x\y) at (\x,\y,\z) {$\circ$};} 
\node[] () at (1,2,1) {$\bullet$};
\node[] () at (2,1,2) {$\bullet$};
\node[] () at (1,2,2) {\color{magenta} $\bullet$};
\draw[thick, magenta] (1, 2, 2) circle (2pt);

\path[->] (211) edge [thick ]  (111);
\path[->] (121) edge [thick ]  (221);

\path[->] (111) edge [thick ]  (121);
\path[->] (211) edge [thick ]  (221);

\path[->] (212) edge [thick ]  (112);
\path[->] (122) edge [thick ]  (222);

\path[->] (122) edge [thick ]  (112);
\path[->] (222) edge [thick ]  (212);

\path[<-] (212) edge [thick ]  (211);
\path[->] (111) edge [thick ]  (112);
\path[->] (121) edge [thick ]  (122);
\path[->] (222) edge [thick ]  (221);

\end{tikzpicture}
&\vspace{0.1cm} &
\begin{tikzpicture}[scale=3] 
\foreach \x in {1,2}
\foreach \y in {1,2}
\foreach \z in {1,2}
{\node[] (\z\x\y) at (\x,\y,\z) {$\circ$};} 
\node[] () at (1,2,1) {$\bullet$};
\node[] () at (2,1,2) {$\bullet$};

\path[->] (211) edge [ultra thick, magenta ]  (111);
\path[->] (121) edge [thick ]  (221);

\path[->] (111) edge [ultra thick, magenta ]  (121);
\path[->] (211) edge [thick ]  (221);

\path[->] (212) edge [thick ]  (112);
\path[->] (122) edge [ultra thick, magenta ]  (222);

\path[->] (122) edge [thick ]  (112);

\path[->] (212) edge [ultra thick, magenta ]  (211);
\path[->] (111) edge [thick ]  (112);
\path[->] (121) edge [ultra thick, magenta ]  (122);
\path[->] (222) edge [thick ]  (221);

\path[->] (222) edge [ultra thick, magenta ]  (212);
\end{tikzpicture}
\\
(a) & & (b)
\end{tabular}
\caption{(a) Acyclic but not super-connected because some vertices, like the one shown in magenta, cannot reach every non-source. (b) Super-connected but not acyclic because every non-sink can reach every non-source but there is a cycle, shown in magenta.}
\label{fig:a-b}
\end{figure}

\section{Main results}\label{sec:main_results}

We quantify the relative sizes of the game classes defined in Section~\ref{sec:games} for generic games. We say that a game is \emph{generic} if for every $i$, and distinct action profiles $a$ and $a'$ that differ only in the $i$th index, either $a \succ_i a'$ or $a' \succ_i a$. To characterize the relative sizes of the game classes, we prove results regarding the prevalence of best-response graph properties among generic games and we derive their asymptotic implications as the number of players gets large.

Given an integer $n\geq 2$ and $\mbf{k}=(k_1,\dots,k_n)\in\{2,3,\dots\}^n$, we use $\cG(n,\mbf{k})$ to denote the set of all generic games with player set $[n]$ in which, for every $i \in [n]$, player $i$ has action set $[k_i]$. Since we are working with ordinal games, for a fixed $n$ and $\mbf{k}$, the set $\cG(n,\mbf{k})$ is finite.

The following is our main result on game connectivity.
\begin{theorem}\label{thm:main_game}
    There exist $c,\delta>0$ such that for all integers $n\geq 2$ and all $\mbf{k}\in\{2,3,\dots\}^n$, if $\max_i k_i \leq \delta \sqrt{n/ \log(n)}$ then
    \[
        \frac{| \{g\in\cG(n,\mbf{k})\colon g \text{\textnormal{ is connected}}\} |}{| \{ g\in\cG(n,\mbf{k})\colon g \text{\textnormal{ has a pure Nash equilibrium}}\} |} \geq 1 - e^{- cn}.
    \]
\end{theorem}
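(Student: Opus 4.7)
The plan is to reduce the theorem to a probability bound on random best-response subgraphs of the directed Hamming graph $\Lnk$ and then invoke the appendix's main technical result (Theorem~\ref{thm:main_grids}). First, I would identify a uniformly random generic game with a random orientation of $\Lnk$: on each line $L(a_{-i})$, the $\succsim_i$-maximal action (the ``top'') is uniformly distributed on $[k_i]$ and independent across lines, and the best-response graph then consists, line by line, of arcs from every non-top vertex to the top. The independence and uniformity come from the symmetry of the uniform measure on generic preorders, together with the fact that extensions of any prescribed line-restriction profile to a full preorder on $A$ all come in the same count.

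Next I would control the denominator. The expected number of sinks in the random best-response graph is $|A|\cdot\prod_i 1/k_i=1$, and a second-moment estimate (using that the sink indicator at $a$ is independent of the sink indicator at $a'$ whenever $a$ and $a'$ do not share an incident line) shows that, once $n$ is large enough relative to $\max_i k_i$, the probability that a pure Nash equilibrium exists is bounded below by an absolute constant $c_0>0$. Writing $C$ for the event that the game is connected, the ratio in the theorem then exceeds $1-\Pr(\overline{C})/c_0$, so it suffices to prove the unconditional bound $\Pr(\overline{C})\leq e^{-cn}$.

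The heart of the argument is the third step, recasting $\overline{C}$ as a combinatorial constraint on subsets of $A$. If the best-response graph has a sink $s$ but some non-sink $v$ cannot reach $s$, then the set $S\subseteq A\setminus\{s\}$ of vertices failing to reach $s$ is non-empty and closed under outgoing arcs: if $u\in S$ and $u\to w$ then $w\in S$, for otherwise $u$ would reach $s$ via $w$. Consequently, on every line meeting both $S$ and its complement, the top must lie in $S$, since a non-top vertex $u\in S$ on such a line would otherwise have an arc $u\to t$ leaving $S$. The probability that a fixed candidate $S$ satisfies this constraint decays exponentially in the number of boundary lines of $S$ (those meeting both $S$ and $A\setminus S$): on each such line the top is in $S$ with probability at most $1-1/\max_i k_i$ by independence. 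By edge-isoperimetry in $\Lnk$, any non-trivial $S$ has many boundary lines, and summing the per-$S$ bound by $|S|$ and boundary structure gives the required exponential tail. This is exactly what Theorem~\ref{thm:main_grids} encapsulates for general $\Lnk$, by adapting the hypercube component-structure argument of \citet*{mcdiarmid2021component}.

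The main obstacle is precisely the union bound in the third step: there are up to $2^{|A|}$ candidate subsets $S\subseteq A$, which is doubly exponential in $n$, so a naïve union bound cannot yield a bound of the form $e^{-cn}$. One must cluster candidates by the combinatorial shape of their boundary and exploit the product structure of $\Lnk$ together with the independence of the tops, which is the delicate technical contribution made in the appendix. Once Theorem~\ref{thm:main_grids} is established, Theorem~\ref{thm:main_game} follows from the brief reduction outlined in the first two paragraphs.
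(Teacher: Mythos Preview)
Your reduction in the first two paragraphs is essentially the paper's: identify the best-response graph of a uniform game in $\cG(n,\mbf{k})$ with $\Lnk$, lower bound the denominator by a universal constant (the paper cites \cite{rinott2000number} rather than redoing a second-moment computation), and reduce to an unconditional tail bound. One slip: you write ``it suffices to prove $\Pr(\overline{C})\leq e^{-cn}$'', but $\overline{C}$ contains the event \emph{there is no sink}, whose probability is bounded away from~$0$. What you actually need, and what your third paragraph in fact targets, is $\Pr(S_{n,\mbf{k}}\setminus C)\leq e^{-cn}$, i.e.\ a bound on the event that a sink exists but some non-sink fails to reach it; this is precisely the complement of $R_{n,\mbf{k}}$ in Corollary~\ref{cor:conn}. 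The paper also uses the small observation, absent from your sketch, that a sink is automatically reached from at least $n+1>N$ vertices, which is what lets one read connectedness off the dichotomy in Theorem~\ref{thm:main_grids}.

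Where your proposal genuinely diverges is in the picture you paint of how Theorem~\ref{thm:main_grids} is proved. You frame the obstruction as a set $S$ closed under out-edges and propose edge-isoperimetry plus a clever clustering of candidate $S$'s to beat the $2^{|A|}$-fold union bound. The paper does \emph{not} argue this way. Instead it (i) shows, via a local 3-path argument adapted from \cite{mcdiarmid2021component}, that the ``good'' vertices---those winning between $n/3K$ and $3n/4$ of their lines---form a single strongly connected component wehp; and (ii) uses a \emph{foothold} step in the style of \cite{bollobas1993connectivity}, summing over rooted trees of order $m$ in $H(n,\mbf{k})$ (of which there are at most $(enK)^{m-1}$ per root) to show that every non-sink reaches more than $n/2$ vertices and every vertex reached from more than $N$ vertices is reached from more than $n/2$; a further argument then connects any such vertex to the good component. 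The tree-counting is what tames the union bound you flag: one never enumerates closed sets $S$ at all, only trees witnessing small reachability, and the $m(n-m)$ lines meeting such a tree in a single vertex supply the exponential decay. Your isoperimetric framing is related in spirit but is not the route taken, and as stated it is not clear how ``clustering by boundary shape'' would handle large $S$ without something like the good-component step.
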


This result shows that, strikingly, connectedness is a ubiquitous property among generic games that have a pure Nash equilibrium and sufficiently many players relative to the maximal number of actions. Moreover, since connectedness implies weak acyclicity, the same is true of the latter property as well.

While the possible dependence of $k_i$ on $n$ is suppressed in our notation, Theorem~\ref{thm:main_game} allows for the number of actions per player to be growing with $n$ provided that the condition $\max_i k_i \leq \delta\sqrt{n/\log(n)}$ continues to be met. Of course, if the number of actions per player were fixed, our `sufficiently many players' condition would simplify to $n$ exceeding some constant. We investigate the regime in which the `sufficiently many players' condition fails in a recent companion paper \citep*{johnston2026gameconnectivityadaptivedynamics} and the results there are different: if the number of players is fixed but the number actions per player $k_i=k$ gets large, the limiting fraction of generic games with a pure Nash equilibrium that are connected is strictly less than one; see the companion paper for further details.

Theorem~\ref{thm:main_game} has important implications for adaptive dynamics in games, on which we elaborate in Section~\ref{sec:dynamics}.

Our next result is regarding acyclicity.
\begin{proposition}\label{prop:acyclic_games}
    There exists $c>0$ such that for all integers $n\geq 2$ and all $\mbf{k}\in\{2,3,\dots\}^n$, we have
    \[
        \frac{| \{g\in\cG(n,\mbf{k})\colon g \text{\emph{ is  acyclic}}\} |}{| \{g\in\cG(n,\mbf{k})\colon g \text{\emph{ has a pure Nash equilibrium}}\} |} \leq  e^{- cn2^n}.
    \]
\end{proposition}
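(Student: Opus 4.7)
The plan is to bound the numerator and denominator separately. It is convenient to work in the equivalent random model in which, for each player $i$ and each line $L_i(a_{-i})$, the preference restricted to that line is a uniformly random strict total order, chosen independently across all such lines. The best-response graph depends only on these line-restrictions (indeed only on which action is the top of each line), and a counting argument shows that the ratio in the proposition equals the corresponding ratio in this model, in which the line-tops are independent and uniform in $[k_i]$.

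For the denominator, the expected number of pure Nash equilibria is $\sum_{a\in A}\prod_{i\in[n]}1/k_i=1$, and a direct second-moment computation---using that two profiles differing in exactly one coordinate cannot simultaneously be Nash equilibria, while any two profiles differing in at least two coordinates have Nash conditions involving disjoint lines and hence are independent---yields $\mathbb{E}[X^2]\le 2$, where $X$ is the number of pure Nash equilibria. Paley--Zygmund then gives $\Pr(X\ge 1)\ge 1/2$, so the denominator is at least half of $|\cG(n,\mbf{k})|$. It therefore suffices to show that the probability of acyclicity is at most $e^{-cn2^n}$ for some universal $c>0$.

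For this, I would use the fact that the shortest cycles in any best-response graph are directed 4-cycles, indexed by an unordered pair of distinct players $\{i,j\}$, two values in each of $[k_i]$ and $[k_j]$, and an assignment $a_{-ij}$ of the remaining coordinates. For each fixed $(i,j,a_{-ij})$, the $k_i\times k_j$ two-dimensional subgrid that these parameters define has best-response structure determined by just $k_i+k_j$ independent line-tops, and the expected number of directed 4-cycles within it is $\binom{k_i}{2}\binom{k_j}{2}\cdot 2/(k_i^2k_j^2)=(k_i-1)(k_j-1)/(2k_ik_j)\ge 1/8$. Combined with a uniform second-moment bound, the Paley--Zygmund inequality shows that the probability this subgrid contains at least one directed 4-cycle is bounded below by a universal constant $c_0>0$ independently of $k_i,k_j\ge 2$.

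Two 2D subgrids whose sets of determining lines are disjoint give independent events, so the remaining step is to produce a family of $\Omega(n2^n)$ pairwise line-disjoint 2D subgrids uniformly in $\mbf{k}$. Since each line is contained in exactly $n-1$ subgrids (one per choice of a second direction) and each subgrid uses at most $2\max_i k_i$ lines, a greedy selection in the resulting conflict graph---prioritising subgrids indexed by pairs of players with smallest $k_i+k_j$ (and, in the case where most players have $k_i=2$, restricting attention to pairs of such binary players so the conflict degree is $O(n)$; otherwise exploiting that $\prod_i k_i$ then dominates $2^n$)---yields such a family. Since acyclicity forces every subgrid in this family to avoid a directed 4-cycle, the probability of acyclicity is at most $(1-c_0)^{\Omega(n2^n)}=e^{-cn2^n}$, which combined with the denominator bound completes the proof. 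The main obstacle I anticipate is making both the constant $c_0$ and the $\Omega(n2^n)$ greedy count uniform in $\mbf{k}$: the first needs a careful variance estimate for directed 4-cycle counts across all $k_i,k_j\ge 2$, and the second a case analysis based on the profile of action-set sizes.
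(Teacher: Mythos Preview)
Your high-level strategy coincides with the paper's: bound the denominator away from zero, show each two-dimensional plane contains a directed cycle with uniformly positive probability, and exhibit $\Omega(n2^n)$ planes whose cycle-events are mutually independent. The differences lie in how you execute steps two and three, and in both places the paper's argument is shorter than what you sketch.

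For the per-plane cycle probability, you propose counting directed $4$-cycles and applying Paley--Zygmund, which requires the uniform second-moment bound you flag as an obstacle. The paper avoids this entirely with a direct random-walk argument: start at any vertex of the $k_i\times k_j$ plane and alternately follow the best-response edge in direction $i$, then direction $j$, then $i$, and so on, until you land in a row or column already visited. At that stopping time $T$, the next step is forced to a previously visited vertex; with probability $(\lfloor T/2\rfloor-1)/\lfloor T/2\rfloor$ this vertex is not the one you just came from, closing a nontrivial cycle. Since $\Pr(T\ge 4)\ge(1-1/k_i)(1-1/k_j)\ge 1/4$, the cycle probability is at least $1/8$ uniformly in $k_i,k_j\ge 2$, with no variance computation needed.

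For the independent family of planes, your greedy conflict-graph argument with case analysis is unnecessary. The paper simply pairs the coordinates as $\{1,2\},\{3,4\},\dots,\{2\lfloor n/2\rfloor-1,2\lfloor n/2\rfloor\}$ and takes \emph{all} planes in each of these coordinate pairs. Two planes from the same pair are parallel (disjoint); two planes from different pairs meet in at most one vertex and hence share no line, so their cycle-events are independent. The family has size $\sum_{j=1}^{\lfloor n/2\rfloor}\prod_{\ell\notin\{2j-1,2j\}}k_\ell\ge \lfloor n/2\rfloor\cdot 2^{n-2}$, which is $\Omega(n2^n)$ uniformly in $\mbf{k}$. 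Both of your anticipated obstacles therefore dissolve; the plan is correct but over-engineered.
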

Together, Theorem~\ref{thm:main_game} and Proposition~\ref{prop:acyclic_games} imply that there is a `split' in  game properties: among generic games that have a pure Nash equilibrium and sufficiently many players, acyclic games are very rare, while connected games and weakly acyclic games are very common. Note that since acyclic games are a superset of potential games, this also implies that potential games are very rare among generic games that have a pure Nash equilibrium and sufficiently many players.

Our final main result concerns super-connectedness. Let $\mbf{2}=(2,\dots,2)$, and similarly define $\mbf{3}$ and $\mbf{4}$.
\begin{proposition}\label{prop:234}
For $\mbf{k}=\mbf{2}$ or $\mbf{k}=\mbf{3}$ there exists $c>0$ such that for all integers $n\geq 2$,
    \[
        \frac{| \{g\in\cG(n,\mbf{k})\colon g \text{\textnormal{ is super-connected}}\} |}{| \{ g\in\cG(n,\mbf{k})\colon g \text{\textnormal{ has a pure Nash equilibrium}}\} |} \geq 1 - e^{- cn}.
    \]
    However, for each $\mbf{k} = (k,\dots,k) \geq \mbf{4}$, the fraction above tends to 0 as $n \to \infty$.
\end{proposition}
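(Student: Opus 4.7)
The plan is to handle the two parts of Proposition~\ref{prop:234} by different methods: for the positive part ($\mbf{k} \in \{\mbf{2}, \mbf{3}\}$), I would strengthen the argument behind Theorem~\ref{thm:main_game}; for the negative part ($\mbf{k} \geq \mbf{4}$), I would exhibit a local obstruction that appears asymptotically almost surely.

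For the positive part, the strategy is to upgrade connectedness (every non-sink reaches every sink) to super-connectedness (every non-sink reaches every non-source). For $\mbf{k} = \mbf{2}$ there is a useful symmetry: the map that reverses each player's preference relation is an involution on $\cG(n, \mbf{2})$, and its effect on the best-response graph is exactly to reverse every edge (in a $2$-element line, the top under the reversed preference is the bottom under the original, so every edge flips orientation). Applying Theorem~\ref{thm:main_game} to both $g$ and to its preference-reversed counterpart therefore yields, with high probability, that in the original best-response graph every non-sink reaches every sink \emph{and} every source reaches every non-source. The only remaining case is reachability between intermediate vertices (those that are both non-sink and non-source); establishing this requires strengthening Theorem~\ref{thm:main_grids} to assert that the intermediate vertices form a strongly connected set w.h.p. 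For $\mbf{k} = \mbf{3}$ the reversal trick fails, because reversing the total order on a $3$-element line does not invert the three line edges---they still all point to whichever vertex is top under the new order. Instead, I would adapt the proof of Theorem~\ref{thm:main_grids} directly to show that for every target non-source $w$, the backward reachability set (built by iteratively including in-neighbors of $w$) covers all but an exponentially small fraction of action profiles w.h.p. The main obstacle throughout is the intermediate-to-intermediate case, which will probably require a genuine refinement of the random-subgraph machinery behind Theorem~\ref{thm:main_grids} rather than a black-box application of it.

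For the negative part, the plan is to identify a local obstruction: a small subset $S$ of action profiles that is closed under the out-neighbour operation in the best-response graph, contains at least one non-sink, and excludes at least one non-source. Any such $S$ traps a non-sink in $S$ away from a non-source outside $S$ and hence precludes super-connectedness. I would then apply a second-moment or Poisson-approximation argument over the many translated copies of a well-chosen candidate configuration to show that at least one copy appears w.h.p.\ whenever $k \geq 4$. The main obstacle is pinpointing the right trap configuration: its per-location probability must be computable from the preference distribution, and the expected count (per-location probability times roughly $k^n$ possible locations) must diverge precisely when $k \geq 4$ and remain bounded or vanish for $k \leq 3$, so as to match the phase transition in the statement.
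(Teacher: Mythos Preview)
Your positive-part strategy overlooks the one observation that makes the whole thing nearly immediate. Theorem~\ref{thm:main_grids} is not merely a statement about sinks: it asserts that (w.v.h.p.) \emph{every} vertex that can be reached from more than $N$ vertices can be reached from every non-sink. So to get super-connectedness you only need to check that every non-source is reachable from more than $N$ vertices. But a non-source wins at least one line, and every vertex on that line points to it, so a non-source is reachable from at least $K$ vertices. The question therefore reduces to whether $K>N$. With $N=(1+\eps)K\log K$ this works for $K=2$ (take $\eps$ small) but just fails for $K=3$; the paper handles $K=3$ by a mild sharpening of the $Y_m$ estimate (Theorem~\ref{thm:constant_K}), replacing $N$ by $N'=\log K/(\log K-\log(K-1))$, and then one checks $N'<K$ exactly for $K\in\{2,3\}$. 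No new machinery, no intermediate-to-intermediate case, no reversal symmetry: the threshold $N'$ was built precisely so that this comparison is the whole proof. Your reversal trick for $K=2$ is elegant but, as you note, leaves the hardest case open; it is a detour around an open door.

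Your negative-part plan (second moment on a local obstruction) matches the paper's, but your framing as a \emph{forward}-closed trap is the awkward side to work from: a small forward-closed set must contain the entire forward orbit of its non-sink, which is typically huge. The paper instead exhibits a non-source with tiny \emph{backward} reachability: a vertex $a$ that wins exactly one of its $n$ lines, with every other vertex on that winning line being a source (winning none of its lines). Then $a$ is a non-source reachable from precisely $K$ vertices, so any of the abundant non-sinks elsewhere fails to reach it. The expected number of such $a$ is, up to polynomial factors, $\bigl(K(1-1/K)^K\bigr)^n$, and $K(1-1/K)^K>1$ if and only if $K\geq 4$; a routine second-moment (Chebyshev) argument then gives existence w.h.p.\ for $K\geq 4$. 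This is exactly the phase transition, and the same inequality $K(1-1/K)^K\lessgtr 1$ is equivalent to $N'\lessgtr K$ above, so the positive and negative parts hinge on the \emph{same} numerical threshold.
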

This shows that super-connectedness is a ubiquitous property of generic 2-action games and generic 3-action games that have a pure Nash equilibrium and sufficiently many players.\footnote{For this result we consider only games in which every player has the same number of actions.} However, this is not true of $k$-action games for $k\geq 4$. In fact, for $k \geq 4$, the fraction of generic $k$-action games with a pure Nash equilibrium that are super-connected vanishes as $n \to \infty$.

This shows two things. First, super-connectedness is too strong to be a typical property of generic games with many players. Unlike connectedness, it is not true that all but a small fraction of generic games that have a pure Nash equilibrium are super-connected. Second, connectivity properties that hold for small $\mbf{k}$ do not necessarily extend to large $\mbf{k}$. This is important because one cannot rely on results established for small numbers of actions as a guide for what to expect when the number of actions is large.

Our main results above are proved as corollaries of stronger results concerning the likelihood of analogous conditions holding in certain random directed graphs. The statements of these technical results on random graphs and the proofs themselves are in the appendix. The appendix also contains a section on the tightness of our main results. We provide a high-level discussion of our proof approach, as well as intuition for Theorem \ref{thm:main_game} and Propositions \ref{prop:acyclic_games} and \ref{prop:234}, in Section \ref{subsec:approach} below. 

In Appendix~\ref{sec:simulations}, we show via simulation that in practice all of the above results take hold even for generic games with relatively few players and few actions per player.

\subsection{Better-response graphs}
The results above concerned best-response graphs, but we can also draw similar conclusions for better-response graphs.

An action $a_i$ of player $i$ is a \emph{better-response} than $a_i'$ to $a_{-i}$ if $(a_i,a_{-i}) \succ_i (a_i',a_{-i})$. The \emph{better-response graph} of a game is the directed graph $( A, \rightarrow )$ whose vertex set is the set of action profiles $A$ and whose directed edge set $\rightarrow$ is defined such that for $a,b\in A$,
\begin{myquote}
    $a \rightarrow b$ if and only if there exists $i\in[n]$ such that $a_{-i}=b_{-i}$ and $b_i$ is a better-response to $a_{-i}$ than $a_i$.
\end{myquote}

For each connectivity property $P \in \{$acyclic, weakly acyclic, connected, super-connected$\}$, we say that a game is \emph{better-response $P$} if its better-response graph has that property. For example, a game is better-response connected if its better-response graph has a sink and the property that every non-sink can reach every sink. Observe that generalised ordinal potential games are precisely those that are better-response acyclic \citep*{monderer1996potential, fabrikant2013structure}.\footnote{A game $g=([n], ([k_i])_{i \in [n]} , (\succsim_i)_{i \in [n]})$ is a generalised ordinal potential game if there exists a function $\rho : A \rightarrow \mathbb{R}$ such that for each $i \in [n]$ and each pair of distinct action profiles $a$ and $a'$ that differ in only the $i$th index, $a \succ_i a'$ implies $\rho(a) > \rho(a')$. The game is an ordinal potential game or, simply, a potential game if, additionally, $\rho(a) > \rho(a')$ implies $a \succ_i a'$.}

Since a game's best-response graph is a subgraph of its better-response graph, we obtain the following logical relationships.

\noindent
\begin{tikzpicture}[every node/.style={font=\small}]
    \matrix (m) [matrix of math nodes,row sep=1em,column sep=1em,minimum width=0em] {
        \text{better-response acyclic} & \text{acyclic} & \text{weakly acyclic} & \text{better-response weakly acyclic}\\
                                &                & \text{connected}      & \text{better-response connected}\\
                                &                & \text{super-connected}& \text{better-response super-connected}\\
    };
    \path[-stealth]
    (m-1-1) edge[double] (m-1-2)
    (m-1-2) edge[double] (m-1-3)
    (m-1-3) edge[double] (m-1-4)
    (m-2-3) edge[double] (m-2-4)
    (m-2-3) edge[double] (m-1-3)
    (m-2-4) edge[double] (m-1-4)
    (m-3-3) edge[double] (m-2-3)
    (m-3-3) edge[double] (m-3-4)
    (m-3-4) edge[double] (m-2-4);
\end{tikzpicture}
The implications are now straightforward. Among generic games that have a pure Nash equilibrium, connected games, weakly acyclic games and their better-response counterparts, are very common, while acyclic games and better-response acyclic games are very rare. Analogous conclusions can similarly be drawn for super-connectedness.

\subsection{Classes of games with positive asymptotic density}
We now show that our results on game connectivity can be extended to any class of games $\mathcal{X}(n,\mbf{k}) \subseteq \{g\in\cG(n,\mbf{k})\colon g \text{ has a pure Nash equilibrium}\}$ that has positive asymptotic density, by which we mean that there is a $p \in (0,1]$ such that
    \[
     \frac{| \mathcal{X}(n,\mbf{k}) |}{| \{ g\in\cG(n,\mbf{k})\colon g \text{ has a pure Nash equilibrium}\} |} \geq p
    \]
    for all sufficiently large $n$.

The following is a corollary of Theorem \ref{thm:main_game}.

\begin{corollary}\label{cor:nzm}
If $\mathcal{X}(n,\mbf{k}) \subseteq \{g\in\cG(n,\mbf{k})\colon g \textnormal{ has a pure Nash equilibrium}\}$ has positive asymptotic density then the fraction of games in $\mathcal{X}(n,\mbf{k})$ that are connected gets close to 1 for sufficiently large $n$.
\end{corollary}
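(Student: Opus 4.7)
The plan is to use Theorem~\ref{thm:main_game} directly as a black box, exploiting the fact that the absolute number of disconnected games with a pure Nash equilibrium is small. Let me write $\mathcal{N}(n,\mbf{k}) = \{g\in\cG(n,\mbf{k})\colon g \text{ has a pure Nash equilibrium}\}$ and let $\mathcal{C}(n,\mbf{k})$ denote the set of connected generic games with action structure $\mbf{k}$. By the definition of connectedness (which requires at least one sink), we have $\mathcal{C}(n,\mbf{k}) \subseteq \mathcal{N}(n,\mbf{k})$, and by hypothesis $\mathcal{X}(n,\mbf{k}) \subseteq \mathcal{N}(n,\mbf{k})$ as well.

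First I would apply Theorem~\ref{thm:main_game} to extract a constant $c>0$ such that, for all sufficiently large $n$ (with $\mbf{k}$ satisfying the required growth condition relative to $n$),
\[
|\mathcal{N}(n,\mbf{k}) \setminus \mathcal{C}(n,\mbf{k})| \leq e^{-cn}\, |\mathcal{N}(n,\mbf{k})|.
\]
The games in $\mathcal{X}(n,\mbf{k})$ that are \emph{not} connected form a subset of $\mathcal{N}(n,\mbf{k}) \setminus \mathcal{C}(n,\mbf{k})$, so the same upper bound applies to their count.

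Next I would use the positive asymptotic density hypothesis to get a lower bound on $|\mathcal{X}(n,\mbf{k})|$: there exists $p \in (0,1]$ such that $|\mathcal{X}(n,\mbf{k})| \geq p\,|\mathcal{N}(n,\mbf{k})|$ for all sufficiently large $n$. Combining these two bounds,
\[
\frac{|\{g \in \mathcal{X}(n,\mbf{k}) : g \text{ is not connected}\}|}{|\mathcal{X}(n,\mbf{k})|} \leq \frac{e^{-cn}\,|\mathcal{N}(n,\mbf{k})|}{p\,|\mathcal{N}(n,\mbf{k})|} = \frac{e^{-cn}}{p},
\]
which tends to $0$ as $n \to \infty$. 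Equivalently, the fraction of games in $\mathcal{X}(n,\mbf{k})$ that are connected is at least $1 - e^{-cn}/p$ for all sufficiently large $n$, and in particular gets arbitrarily close to $1$.

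There is no real obstacle here: the argument is simply that Theorem~\ref{thm:main_game} bounds the \emph{absolute} number of disconnected games with a pure Nash equilibrium by an exponentially small fraction of $|\mathcal{N}(n,\mbf{k})|$, and any subclass $\mathcal{X}(n,\mbf{k})$ whose size is at least a constant fraction of $|\mathcal{N}(n,\mbf{k})|$ can absorb the division by $p$ without affecting the qualitative conclusion. The only thing to be careful about is that the definition of positive asymptotic density uses $p>0$, so dividing by $p$ is harmless; and that $\mathcal{X}(n,\mbf{k}) \subseteq \mathcal{N}(n,\mbf{k})$, which is explicit in the hypothesis of the corollary.
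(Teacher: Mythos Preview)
Your proposal is correct and is essentially the same argument as the paper's. The paper phrases it probabilistically---replacing the conditioning event $S_{n,\mbf{k}}$ in the proof of Corollary~\ref{cor:conn}(b) by the event $\Lnk\in\mathcal{X}(n,\mbf{k})$ and using that its probability is bounded below---while you phrase it in counting terms using Theorem~\ref{thm:main_game} directly; the underlying inequality $1-e^{-cn}/p$ is identical.
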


Here is an example. \cite{rinott2000number} show that for any integer $z \geq 0$,
    \[
        \frac{| \{g\in\cG(n,\mbf{k})\colon g \text{ has exactly $z$ pure Nash equilibria}\} |}{| \cG(n,\mbf{k}) |} \to \frac{e^{-1}}{z!}
    \]
as $n \to \infty$ or as $k_i \to \infty$ for at least two players $i$.
From this we infer that for any integer $z \geq 1$, the set 
\[
\{g\in\cG(n,\mbf{k})\colon g \text{ has at least $z$ pure Nash equilibria}\}
\]
has positive asymptotic density, from which we can conclude that for any integer $z\geq 1$, all but a vanishing fraction of generic games that have exactly $z$ pure Nash equilibria are connected.

\subsection{Relationship to \texorpdfstring{\cite*{amiet2021pure}}{Amiet et al. (2021)}}\label{subsec:amiet}
Our paper is related to \cite*{amiet2021pure}. The focus of that paper is not entirely on game connectivity, but it contains a result that is related to our work and we discuss this relationship here.

Consider a vertex $v$ in the best-response graph of a game in $\cG(n,\mbf{k})$. We say that the game is $v$-\emph{connected} if its best-response graph has at least one sink and the property that if $v$ is a non-sink, then it can reach every sink. Similarly, we say that the game is $v$-\emph{super-connected} if its best-response graph has at least one sink and the property that if $v$ is a non-sink, then it can reach every non-source.

Expressed in the language of our paper, the arguments of \citeauthor*{amiet2021pure} imply that there exists $c>0$ such that for all integers $n\geq 2$ and any vertex $v$,
\[
    \frac{| \{g\in\cG(n,\mbf{2})\colon g \text{\textnormal{ is $v$-super-connected}}\} |}{| \{ g\in\cG(n,\mbf{2})\colon g \text{\textnormal{ has a pure Nash equilibrium}}\} |} \geq 1 - e^{- cn}.
\]
This differs from our results in two main ways. First, a game is {(super-)connected} if it is $v$-(super-)connected for \emph{every} vertex $v$, so (super-)connectedness is very much stronger than $v$-(super-)connectedness. Indeed, there are typically almost $2^n$ vertices that are non-sinks and connectedness requires $v$-connectedness to hold \emph{simultaneously} for all of them. Second, the result of \citeauthor*{amiet2021pure} applies only to two-action games, while our results apply much more generally. As we have seen, connectivity properties that hold for games with few actions per player may not extend to games with many actions per player, and these games are fundamentally different.

To be clear, while our results on game connectivity generalise one result found in \citeauthor*{amiet2021pure}, our paper is not a generalisation of that paper. Importantly, \citeauthor*{amiet2021pure} obtain results for non-generic games, which we do not have.

\subsection{Proof approach and intuition for the main results}\label{subsec:approach}
Suppose that a game $G$ is drawn uniformly at random from $\cG(n,\mbf{k})$. Then
\[
\mathbb{P}(\text{$G$ has property $P$} \mid \text{$G$ has a pure Nash equilibrium}) 
\]
is equal to the fraction
\[
\frac{| \{g\in\cG(n,\mbf{k})\colon g \text{\textnormal{ has property $P$}}\} |}{| \{ g\in\cG(n,\mbf{2})\colon g \text{\textnormal{ has a pure Nash equilibrium}}\} |} .
\]
Rather than directly enumerating games in $\cG(n,\mbf{k})$ that have certain properties, we rely on the simple insight above to instead draw games uniformly at random and work out the probability that such randomly drawn games have certain properties.\footnote{Our paper thus contributes methodologically to the literature on `random games'. The distribution of pure Nash equilibria was studied in \cite*{goldberg1968probability}, \cite*{dresher1970probability}, \cite*{powers1990limiting}, and \cite*{stanford1995note}. Further results relating to the number of Nash equilibria also appear in \cite*{mclennan1997maximal,mclennan2005expected}, \cite*{barany2007nash}, \cite*{daskalakis2011connectivity}, and \cite*{pei2023nash}. See also \cite*{alon2021dominance} for dominance-solvable games, and \cite*{mimun2024best,collevecchio2024basins} and \cite*{ashkenazigolan2025simultaneous} for best-response dynamics.} The latter is simpler and yields our desired quantities of interest.

We now give a high-level explanation of some elements of the proofs of our main results. All the proofs for the results in Section \ref{sec:main_results} (and of even stronger results) are in the appendix.

\paragraph{Intuition for Theorem \ref{thm:main_game}} 
 First, draw a game $G$ uniformly at random from $\cG(n,\mbf{k})$, and observe that its best-response graph has a natural product structure: each vertex (action profile) lies on exactly $n$ lines, one per coordinate direction, and for each line, all the edges in the line point toward the unique `winning' action profile; namely, the action profile that corresponds to a best-response for player $i$ given the other players' actions.

At the heart of our proof is an argument showing the likely existence of a large strongly connected component in the best-response graph of $G$; in other words, we find a large set of vertices which can all reach each other along directed paths. Indeed, we define a vertex in the best-response graph of $G$ to be \emph{good} if it wins in roughly the expected number of coordinates, so that it has both a substantial number of incoming and a substantial number of outgoing edges, and we show via a concentration argument that the vast majority (all but an exponentially small fraction) of vertices are good when $n$ is large. Moreover, good vertices form a giant strongly connected component: having many outgoing edges means each good vertex can reach many other vertices, having many incoming edges means each can be reached from many vertices, and the expansion properties of the underlying grid structure of the graph ensure that best-response paths between these large sets of vertices exist with high probability. This part of our proof is based on work of \cite*{mcdiarmid2021component}, who study the component structure of random subgraphs of the undirected hypercube graph.

Once we know that with high probability all good vertices can reach each other, we are then left with `plugging in' the remaining, unusual, vertices. We plug in these vertices by building on arguments in \cite*{bollobas1993connectivity}. 
Broadly, we show that those with at least one outgoing edge can reach many vertices, among which good vertices appear with high probability. Those which can be reached from a moderate number of vertices (and, in particular, sinks) are conversely reachable from many vertices, among which good vertices appear with high probability.

Theorem~\ref{thm:main_game} follows from an even stronger result, Theorem~\ref{thm:main_grids}, which is stated and discussed in Appendix \ref{sec:tech_conn}, and is proved in Appendices \ref{sec:proof_intro}-\ref{sec:main_proof}. We note that our proof employs different methods from the ones used in \cite*{amiet2021pure}. For the result that we discussed in Section \ref{subsec:amiet}, \citeauthor*{amiet2021pure} also draw games uniformly at random, in their case from $\cG(n,\mbf{2})$. But, because there are only two actions per player, all edges in the resulting best-response graphs are independent of each other, and their proofs rely heavily on this feature. Once there are more than two actions per player, the edges are no longer independent, and this requires an alternative approach. Our approach here is also different from the approach in our recent companion paper \citep*{johnston2026gameconnectivityadaptivedynamics} where we consider the setting where the number of players is fixed (or growing slowly) and the number of actions gets large.\footnote{In the `many players' regime considered here, each vertex of the best-response graph is incident to at least $n$ edges, so $n$ being large is likely to contribute to greater connectivity. If instead it is the number of actions that is large, then most vertices remain incident to a fixed number of edges, and fully characterising the `many actions' regime requires different arguments from those given here.}

\paragraph{Intuition for Proposition \ref{prop:acyclic_games}} Cycles are very likely to be present in the best-response graph of a randomly drawn game $G$. Consider a \emph{plane} of action profiles that results from fixing the actions of all but two players. We show that a cycle exists in such a plane with probability at least $1/8$ (a constant independent of the number of actions or players). The fact that there are many planes when $n$ is large implies that it is extremely unlikely that none of them contains a cycle. Our proof is in Appendix \ref{sec:tech_acyclic}.

\paragraph{Intuition for Proposition \ref{prop:234}}
As explained above, if a vertex can be reached from a moderate number of vertices, it can be reached from every non-sink, so for super-connectedness to fail, there must be a vertex that can only be reached by a small number of vertices.
Let us call a vertex a \emph{near-source} if it wins exactly one of its lines, which implies that it has only $k-1$ incoming edges (one per other vertex on the won line). If, other than the near-source, all the vertices on the near-source's won line are themselves sources, we say the near-source is \emph{isolated}. An isolated near-source cannot be reached by any non-sink that does not lie on its won line, implying that super-connectedness fails. 

Conditional on a vertex $v$ being a near-source, the probability that each of the other $k-1$ vertices on its won line are sources is $((k-1)/k)^{(n-1)(k-1)}$. The probability that $v$ is a near-source is given by $(n/k) ((k-1)/k)^{(n-1)}$ since it must lose all but exactly one of its lines, and there are $n$ lines going through $v$. There are $k^n$ vertices in the best-response graph, so the expected number of isolated near-source vertices is:
\[
k^n \cdot \frac{n}{k} \left(\frac{k-1}{k}\right)^{(n-1)} \cdot \left(\frac{k-1}{k}\right)^{(n-1)(k-1)} = n \cdot \left(\frac{(k-1)^k}{k^{k-1}}\right)^{n-1} .
\]
The limit in $n$ is governed entirely by the ratio $(k-1)^k / k^{k-1}$. The ratio is below 1 for $k \in \{2,3\}$ and exceeds 1 for $k \geq 4$, so isolated near-sources are vanishingly rare when $k<4$ but their number grows exponentially when $k\geq4$. The final step to showing that super-connectedness fails for $k \geq 4$, is to compute the second moment of the number of isolated near-sources to show the number is concentrated around its mean.

Of course, showing that there are no isolated near-sources with high probability is not enough to show that super-connectedness occurs with high probability. However, letting $K\coloneq \max_i(k_i)$, we show in Theorem~\ref{thm:main_grids} that every vertex that can be reached from $(1+\eps)K \log(K)$ vertices, can be reached from every non-sink, and for $k = 2$, winning a single line is enough to reach this threshold. For $k = 3$, the threshold is slightly too large for winning a single line to reach it, but a slight modification of the proof allows us to reduce the threshold, at the expense of allowing the failure probability to depend on $K$. This modification leads to Theorem~\ref{thm:constant_K}, which is stated in the appendix, and is enough to prove the $k = 3$ case.

\section{Implications for adaptive dynamics in games}\label{sec:dynamics}

We now consider the implications of our results regarding game connectivity for games played over time according to \emph{adaptive dynamics}. We begin by recalling some standard notions.

First, a player $i$'s \emph{observation set} at time $t$, denoted $o_i^t$, consists of all the information that~$i$ can observe at time $t$. Precisely what objects enter into this set varies depending on the regime under consideration, and below it will be made clear which regimes we are considering. For each integer $k\geq 2$, let $O_k$ denote the set of all possible observation sets (under the given regime) for a player with action set $[k]$. A \emph{strategy} for a player with action set $[k]$ is a function $f\colon O_k \rightarrow \Delta([k])$, where $\Delta([k])$ is the probability simplex over $[k]$. Let $n\geq 2$ and $k_1,\dots,k_n\geq 2$ be integers, and write $\mbf{k}=(k_1,\dots,k_n)$. A \emph{dynamic} on $\cG(n,\mbf{k})$ consists of a specification of what information enters into each player's observation set at each time, and a strategy $f_i$ with action set $[k_i]$ for each player $i$.

The play of a game $g\in \cG(n,\mbf{k})$ under a given dynamic begins at time $t=0$ at an initial action profile $a^0$ chosen arbitrarily. This informs each player's observation set $o_i^1$ according to the dynamic. At time $t=1$, each player updates their action (randomly) according to $f_i(o_i^1)$, and we denote the new (random) action profile by $a^1$. The play continues in this manner, with each player
updating their action at $t=2$ according to $f_i(o_i^2)$ to produce an action profile $a^2$, and so on.

Employing standard terminology, we say that a dynamic is \emph{uncoupled} if at each time $t$, each player $i$'s observation set contains (at most) their own preference relation $\succsim_i$ and the ordered history of play $a^0,\dots,a^{t-1}$. For an integer $m\geq 1$, an uncoupled dynamic is $m$\emph{-recall} if at each time~$t$, each player $i$'s observation set contains (at most) the current time $t$, their own preference relation $\succsim_i$, and the ordered history of play $a^{t-m},\dots,a^{t-1}$ for the past $m$ steps, or the full history of play if $t<m$. An uncoupled and $m$-recall dynamic is \emph{stationary} if, at each time $t$, each player $i$'s observation set consists of their own preference relation $\succsim_i$, and the ordered history of play $a^{t-m},\dots,a^{t-1}$ for the past $m$ steps, or the full history of play if $t<m$. Crucially, for $t\geq m$, the only information about the current time $t$ available to the players is that $t\geq m$, so their strategies become time-independent after this point.

\begin{definition}
    A dynamic is \emph{simple} if it is uncoupled, 1-recall, and stationary. That is, a dynamic is simple if at each time $t$, player $i$'s observation set contains at most their own preference relation $\succsim_i$ and last period's action profile $a^{t-1}$.
\end{definition}

We consider the following strong notion of convergence to a pure Nash equilibrium.
\begin{definition}\label{def:convergence}
    A dynamic on $\cG(n,\mbf{k})$ \emph{converges almost surely to a pure Nash equilibrium} of a game $g\in \cG(n,\mbf{k})$ if when $g$ is played according to the dynamic from any initial action profile, almost surely there exists $T < \infty$ and a pure Nash equilibrium $a^*$ of $g$ such that $a^t=a^*$ for all $t \geq T$.
\end{definition}

\subsection{The possibility of convergence to a pure Nash equilibrium}
As mentioned in the introduction, the following impossibility result is well-known.

\begin{theorem*}[\citealp*{hart2006stochastic,jaggard2014self}]
For all $n\geq 3$ and $\mbf{k}\in \N^n$ with $k_i\geq 2$ for all $i$ (or $k_i\geq 3$ for all $i$ if $n=3$), there is no simple dynamic on $\cG(n,\mbf{k})$ for which play converges almost surely to a pure Nash equilibrium in every game in $\cG(n,\mbf{k})$ that has one. 
\end{theorem*}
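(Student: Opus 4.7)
The plan is to argue by contradiction, following the template of \cite*{hart2006stochastic} and \cite*{jaggard2014self}. Suppose, for contradiction, that there exists a simple dynamic $D=(f_1,\ldots,f_n)$ on $\cG(n,\mbf{k})$ under which, from any initial action profile, play converges almost surely to a pure Nash equilibrium in every game in $\cG(n,\mbf{k})$ that has one. The goal is to exhibit two games $g^1,g^2\in\cG(n,\mbf{k})$, each admitting a pure Nash equilibrium, such that the sets of PNE of $g^1$ and $g^2$ are disjoint, yet $D$ induces the same joint distribution on the random trajectory $(a^0,a^1,a^2,\ldots)$ in both games from some common initial profile $a^0$. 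Since a single trajectory cannot almost surely be absorbed into two disjoint subsets of $A$, this yields the desired contradiction.

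To engineer such a pair $(g^1,g^2)$, I would exploit the limited information available to each player: the strategy $f_i(\succsim_i, a^{t-1})$ depends only on player $i$'s own preference and on the previous period's action profile. If for every player $i$ and every $a\in A$ one has $f_i(\succsim_i^1, a)=f_i(\succsim_i^2, a)$ as distributions on $[k_i]$, then the two induced Markov chains on $A$ coincide period by period. The task therefore reduces to building games whose preference profiles differ enough that the PNE sit in disjoint subsets of $A$, but agree enough on the information each $f_i$ actually uses that the induced chains remain identical. For the base case $n=3$ and $\mbf{k}=(3,3,3)$, I would adopt the gadget of \cite*{hart2006stochastic}: a carefully chosen pair of three-player, three-action games, each with a unique PNE at a different profile, constructed so that any uncoupled, 1-recall, stationary dynamic must produce identical transition distributions at every action profile. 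For larger $n$ or larger $\mbf{k}$, the gadget extends by embedding into the larger action space and adjoining ``dummy'' players or actions whose preferences render them irrelevant to the evolution of the chain, thereby reducing the general case to the base case.

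The main obstacle is the adversarial design itself. Because $f_i$ depends on the full preference $\succsim_i$ rather than on a local restriction, one cannot simply match preferences on a fragment of $A$; instead, $(g^1,g^2)$ must be tailored to the fixed dynamic $D$ so that $D$ is ``fooled'' into producing the same strategy at every relevant state. A further subtlety is that each game must be \emph{generic} (no indifferences) and must actually have a PNE, which constrains how the preferences can be perturbed between $g^1$ and $g^2$ while still keeping the dynamics indistinguishable. The hypothesis $k_i\geq 3$ when $n=3$ is exactly what is needed for the action space to be large enough to accommodate such a gadget: with $n=3$ and $k_i=2$ the $2\times 2\times 2$ action space is too small to admit two games with disjoint PNE whose induced chains coincide, which is why that corner case is excluded from the statement.
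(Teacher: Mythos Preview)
Your overall architecture---handle a $3\times 3\times 3$ base case and then embed into larger $n$ and $\mbf{k}$---matches the paper's. The gap is in your mechanism for the base case: the Hart--Mas-Colell gadget is \emph{not} a pair of games with identical induced chains and disjoint PNE.

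Your proposal requires $f_i(\succsim_i^1,a)=f_i(\succsim_i^2,a)$ for every player $i$ and every profile $a$. But $D$ is the hypothetical dynamic you are trying to defeat; you know nothing about it beyond the convergence assumption, and in particular each $f_i$ may well be injective in its preference argument. In that case matching the chains forces $\succsim_i^1=\succsim_i^2$ for all $i$, i.e.\ $g^1=g^2$, and the disjoint-PNE requirement cannot be met. ``Tailoring $(g^1,g^2)$ to $D$'' is not a usable step without saying \emph{how}, and the convergence hypothesis alone does not provide a handle for it.

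The argument the paper sketches (following \cite*{hart2006stochastic}) uses uncoupledness far more locally. One exhibits a \emph{single} game $g$ that has a PNE together with a ``trap'': a cycle of vertices each having exactly one outgoing best-response edge, say in direction $j(v)$ at vertex $v$. For each trap vertex $v$ one forms an auxiliary game $g_v$ by altering \emph{only} player $j(v)$'s preferences so that $v$ becomes the unique PNE of $g_v$. Almost-sure convergence in $g_v$ forces $f_i(\succsim_i^{g_v},v)=\delta_{v_i}$ for every $i$. For $i\neq j(v)$ the preference $\succsim_i$ is literally unchanged between $g$ and $g_v$, so uncoupledness gives $f_i(\succsim_i^{g},v)=\delta_{v_i}$: in the original game, at $v$, only player $j(v)$ can possibly move. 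The trap is designed so that this confines the trajectory to the cycle forever, never reaching the PNE of $g$. The key point you missed is that uncoupledness is invoked one player and one vertex at a time, with that player's full preference held \emph{identical} across the two games being compared---there is no attempt to match entire Markov chains across two games with different preference profiles.
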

\noindent Observe that the impossibility applies to games of almost any size, including games with many players. It is instructive to revisit a proof of this result. Figure \ref{fig:counter} shows the best-response graph of a game considered in \cite{hart2006stochastic}. The graph's key feature is that each vertex in magenta has exactly one out-going edge. Any simple dynamic on this game initiated at one of the magenta vertices cannot get to any vertex other than the magenta vertices.\footnote{The reason is this. Take a vertex $v$ in the magenta cycle whose only outgoing edge is in direction $j$. By uncoupledness, changing player $j$'s preferences should not affect what $i \neq j$ does at $v$. So change $j$'s preferences so that $v$ becomes the unique Nash equilibrium. If the dynamic must converge to a pure Nash equilibrium, $i \neq j$ must not move at $v$ in this modified game. But this implies that $i\neq j$ must not move at $v$ in the original game either.} In particular, any simple dynamic initiated at one of the magenta vertices cannot reach the pure Nash equilibrium. A game with a feature like this one can be embedded into a larger game, whether in terms of the number of players or the number of actions.\footnote{Best-response graphs with features like the one shown in Figure \ref{fig:counter} are rare. In a typical game, a vertex winning all but one of its lines is very rare when the number of lines is large, and it is exponentially unlikely that there two such vertices connected by an edge.}

\begin{figure}
\centering
    \begin{tikzpicture}[scale=2.2] 
\foreach \x in {1,2,3}
\foreach \y in {1,2,3}
\foreach \z in {1,2,3}
{\node[] (\z\x\y) at (\x,\y,\z) {$\circ$};} 
\node[] () at (3,1,3) {$\bullet$};

\node[] () at (2,3,1) {\color{magenta}$\bullet$};
\node[] () at (1,3,2) {\color{magenta}$\bullet$};
\node[] () at (2,3,2) {\color{magenta}$\bullet$};

\node[] () at (1,2,1) {\color{magenta}$\bullet$};
\node[] () at (2,2,1) {\color{magenta}$\bullet$};
\node[] () at (1,2,2) {\color{magenta}$\bullet$};

\path[->] (113) edge [thick]  (213);
\path[->] (313) edge [thick]  (213);
\path[->] (123) edge [thick]  (223);
\path[->] (123) edge [ultra thick, magenta]  (223);
\path[->] (323) edge [thick]  (223);
\path[->] (133) edge [thick]  (233);
\path[->] (333) edge [thick]  (233);

\path[->] (312) edge [thick]  (112);
\path[->] (212) edge [ultra thick, magenta]  (112);
\path[->] (322) edge [thick]  (122);
\path[->] (332) edge [thick]  (132);

\path[->] (311) edge [thick]  (211);
\path[->] (111) edge [thick]  (211);
\path[->] (321) edge [thick]  (121);
\path[<-] (331) edge [thick]  (131);

\path[->] (113) edge [thick]  (123);
\path[->] (133) edge [thick]  (123);
\path[<-] (213) edge [thick]  (233);
\path[<-] (213) edge [ultra thick, magenta]  (223);
\path[->] (313) edge [thick]  (323);
\path[->] (333) edge [thick]  (323);

\path[->] (112) edge [thick]  (122);
\path[->] (112) edge [ultra thick, magenta]  (122);
\path[->] (132) edge [thick]  (122);
\path[<-] (212) edge [thick]  (232);
\path[<-] (312) edge [thick]  (332);

\path[->] (111) edge [thick]  (121);
\path[->] (131) edge [thick]  (121);
\path[<-] (211) edge [thick]  (231);
\path[->] (311) edge [thick]  (331);

\path[->] (111) edge [thick]  (112);
\path[->] (113) edge [thick]  (112);
\path[->] (121) edge [thick]  (123);
\path[->] (122) edge [ultra thick, magenta]  (123);
\path[->] (131) edge [thick]  (132);
\path[->] (133) edge [thick]  (132);

\path[->] (211) edge [thick]  (212);
\path[->] (213) edge [thick]  (212);
\path[->] (213) edge [ultra thick, magenta]  (212);
\path[->] (221) edge [thick]  (223);
\path[->] (231) edge [thick]  (233);

\path[->] (311) edge [thick]  (312);
\path[->] (313) edge [thick]  (312);
\path[->] (321) edge [thick]  (323);
\path[<-] (331) edge [thick]  (333);

\end{tikzpicture}

\emph{Note}. Some edges of the graph are omitted to keep the illustration uncluttered.
\caption{Best-response graph of a generic game with a pure Nash equilibrium. Any simple dynamic initiated at one of the magenta vertices cycles through those vertices forever.}
\label{fig:counter}
\end{figure}

In contrast with the above impossibility result, we show below that if $n$ grows (much more quickly than $\max_i k_i$), there is a simple dynamic on $\cG(n,\mbf{k})$ that converges almost surely to a pure Nash equilibrium on all but a vanishingly small fraction of generic games in the class that have one.

\begin{theorem}\label{thm:conv}
For $n$ sufficiently large relative to $\max_i k_i$, there is a simple dynamic on $\cG(n,\mbf{k})$ for which play converges almost surely to a pure Nash equilibrium in all but an exponentially small fraction of games in $\cG(n,\mbf{k})$ that have one. 
\end{theorem}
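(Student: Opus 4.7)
The plan is to exhibit the best-response dynamic with inertia as the desired simple dynamic and then invoke Theorem \ref{thm:main_game} together with a known convergence result of \cite{young2004strategic}. Recall that under the best-response dynamic with inertia, in each period $t\geq 1$ each player $i$ independently selects, with probability $p_i\in(0,1)$, an arbitrary best-response to $a^{t-1}_{-i}$ and, with probability $1-p_i$, replays $a^{t-1}_i$. To determine their mixed action, player $i$ needs only their own preference relation $\succsim_i$ (to identify their best-responses to $a^{t-1}_{-i}$) and the previous action profile $a^{t-1}$, and their strategy does not depend on the time index $t$. So the first step is the quick verification that this dynamic satisfies the formal definition of \emph{simple}: uncoupled, $1$-recall, and stationary.

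The second step is to appeal to Young's theorem, which states that on any weakly acyclic game the best-response dynamic with inertia converges almost surely to a pure Nash equilibrium in the sense of Definition~\ref{def:convergence}. The key point is that if the best-response graph has the property that every vertex can reach a sink, then from any action profile $a^{t-1}$ there is a positive probability (bounded below by a product of $p_i$'s and $1-p_i$'s) that in some bounded number of steps exactly the `right' players best-respond and the play follows a directed best-response path all the way to a pure Nash equilibrium; once at a sink no player has a profitable deviation, so by definition of the best-response graph no player ever moves again. A standard Borel--Cantelli / finite Markov chain argument then converts this positive-probability reachability into almost sure absorption in finite time.

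The third step combines the above with our connectivity result. Since every connected game is weakly acyclic (because every non-sink can reach some sink), Young's convergence guarantee applies to every connected game. Theorem~\ref{thm:main_game} provides a constant $c>0$ such that, whenever $n$ is sufficiently large relative to $\max_i k_i$, the fraction of games in $\cG(n,\mbf{k})$ that have a pure Nash equilibrium but are \emph{not} connected is at most $e^{-cn}$. Combining, on all but this exponentially small fraction of games in $\cG(n,\mbf{k})$ that have a pure Nash equilibrium, the best-response dynamic with inertia converges almost surely to a pure Nash equilibrium, which is exactly the conclusion of Theorem~\ref{thm:conv}.

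Essentially none of this is technically difficult given the machinery already in place: the hard work lives in Theorem~\ref{thm:main_game} (the combinatorial structure of random best-response graphs) and in Young's classical analysis of best-response with inertia on weakly acyclic games. The only mild obstacle is bookkeeping: one should be careful that Young's theorem is stated and used in the strong absorbing sense of Definition~\ref{def:convergence} (almost sure finite-time arrival at a sink with the play then remaining there forever), which is immediate from the fact that sinks of the best-response graph are precisely the pure Nash equilibria and no player ever has an incentive to move from one. For completeness, a self-contained proof of Young's lemma adapted to our notation can be included in Appendix~\ref{sec:young}.
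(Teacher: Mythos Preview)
Your proposal is correct and follows essentially the same approach as the paper: exhibit the best-response dynamic with inertia, observe that it is simple, invoke \cite{young2004strategic} to conclude almost-sure convergence on weakly acyclic games, and then apply Theorem~\ref{thm:main_game} together with the implication connected $\Rightarrow$ weakly acyclic to bound the exceptional fraction by $e^{-cn}$. The paper likewise defers the self-contained proof of Young's result to Appendix~\ref{sec:young}.
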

\pagebreak
\noindent The proof is straightforward. The \emph{best-response dynamic with inertia} is defined as follows:\footnote{This dynamic is well-known and versions of it appear in, for example, \cite{young2009learning} and \citet*{swenson2018distributed}. The manner in which ties might be broken among multiple best-responses in non-generic games is immaterial for our purposes.}  at each step $t$, independently of the other players, each player $i$ sets $a_i^t$ to be a best-response to $a_{-i}^{t-1}$ with some fixed probability $p_i \in (0,1)$ and sets $a_i^t= a_i^{t-1}$ with complementary probability $1-p_i$. \cite{young2004strategic} showed that, for any choice of parameters $p_i\in(0,1)$, this dynamic converges almost surely to a pure Nash equilibrium in every weakly acyclic game (Observe that the dynamic allows for simultaneous moves, but all that matters for the convergence result is that paths in which players move sequentially receive positive probability.) Since every connected game is weakly acyclic, from Theorem~\ref{thm:main_game} we obtain that there exists $c>0$ such that for integers $n\geq 2$ and $\mbf{k}\in\{2,3,\dots\}^n$, if $n$ is sufficiently large relative to $\max_i k_i$, then the proportion of games in
    \[
        \{g\in \cG(n,\mbf{k})\colon g \textnormal{ has a pure Nash equilibrium}\}
    \]
for which the best-response dynamic with inertia converges almost surely to a pure Nash equilibrium is at least $1-e^{-cn}$.\footnote{This result also holds for variants of the best-response dynamic with inertia. For example, it straightforwardly holds for the better-response dynamic with inertia. It also holds for a one-at-a-time version of the best-response dynamic in which, at each step $t$, exactly one player $i$ is selected at random from among all players to update their action, and this player plays a best-response to $a_{-i}^{t-1}$. Convergence properties of this one-at-a-time version were investigated by \cite*{heinrich2023best} via simulation.} Noting that the best-response dynamic with inertia is simple completes the proof. 

We now briefly comment on how Theorem \ref{thm:conv} relates to the broader literature on possibility and impossibility results for dynamics in games. Establishing impossibility for a class of dynamics typically consists of finding collections of games such that no dynamic in the class is guaranteed to lead to equilibrium in all of them. Naturally, this hinges on the parameters of the problem, namely, (i) the information that is allowed to determine players' decisions in the dynamic, (ii) the notion of convergence that is required, (iii) the type of equilibrium to which the dynamic converges, and (iv) the class of games to which the dynamic is applied. The impossibility  of \cite*{hart2006stochastic} and \cite*{jaggard2014self} that we stated above concerns (i) simple dynamics with (ii) period-by-period play converging almost surely to a (iii) pure Nash equilibrium in (iv) generic games that have one. But this impossibility is by no means the only one.\footnote{Recent examples: \cite*{milionis2023impossibility} and \cite{schipper2022strategic}.} Much of the literature has focused on relaxing one or several of the requirements (i)-(iv), thereby giving rise to possibility results, or has strengthened some of the requirements while weakening others, thereby characterizing where the `frontier' lies between the possible and the impossible for dynamics in games.\footnote{There are several possibility results for dynamics that lead to mixed Nash or correlated equilibria. For example, there are uncoupled and completely uncoupled dynamics for which the empirical distribution of play converges almost surely to the set of correlated (or coarse correlated) equilibria in all games \citep{foster1997calibrated,fudenberg1999conditional,hart2000simple,hart2001reinforcement}. There are also uncoupled and completely uncoupled dynamics for which the behaviour probabilities converge almost surely to a mixed Nash equilibrium in all generic games \citep{foster2006regret,germano2007global}. \cite*{vlatakis2020no}, however, show that a commonly studied regret-based dynamic does not lead to mixed Nash equilibria. There are stationary, 2-recall, uncoupled dynamics for which the period-by-period play converges almost surely to a pure Nash equilibrium in all games that have one \citep{hart2006stochastic,cesa_bianchi2006prediction}. \citet*{jaggard2014self} identify other uncoupled dynamics with this convergence property in a bounded-recall synchronous setting. \citet{babichenko2012completely} shows that there is no completely uncoupled dynamic for which the period-by-period play converges almost surely to a pure Nash equilibrium in every generic game that has one. On the other hand, there is a completely uncoupled dynamic for which the period-by-period play is at a pure Nash equilibrium `most of the time' in all generic games that have one \citep{young2009learning,pradelski2012learning}.} In contrast, we maintain all the requirements of \cite*{hart2006stochastic} and \cite*{jaggard2014self}, but instead of focusing on the impossible/possible dichotomy, Theorem \ref{thm:conv} \emph{quantifies} the proportion of games for which the impossibility holds; namely, it is exponentially small in the number of players.

\subsection{The scope of existing results on adaptive dynamics}\label{sec:scope}

Our results on game connectivity allow us to quantify the scope of existing results regarding the convergence properties of adaptive dynamics. 

Consider the following (far from exhaustive) list of results on adaptive dynamics:\footnote{E.g.\ see \cite{newton2018evolutionary} for further examples.}
\begin{itemize}
    \item \cite{young1993evolution} shows that `adaptive play' (a class of dynamics involving inertia and finite memory) converges almost surely to a pure Nash equilibrium in all weakly acyclic games.
    \item \cite{friedman2001learning} shows that the `better-reply dynamic with sampling' converges almost surely to a pure Nash equilibrium in all better-response weakly acyclic games.
    \item \cite*{marden2007regret} shows that a regret-based dynamic converges almost surely to a pure Nash equilibrium in all (generic) better-response weakly acyclic games.
    \item \cite*{marden2009payoff} show that a purely payoff-based dynamic leads to play that is at a pure Nash equilibrium in every better-response weakly acyclic game `most of the time'.
\end{itemize}
Since connected games are (better-response) weakly acyclic, it follows from Theorem~\ref{thm:main_game} that all the above results in fact hold in all but a small fraction of generic games that have a pure Nash equilibrium.

Observe that the above examples cover a wide range of dynamics, including ones that are regret-based and payoff-based (and defined in a cardinal setting rather than an ordinal one like ours) and that are not necessarily `simple' according to our definition (since regret-based dynamics are ordinarily not 1-recall, for example). Moreover, the notions of convergence employed in the aforementioned papers are sometimes different from almost sure convergence of period-by-period play.\footnote{Almost-sure convergence is a strong notion of convergence since it requires period-by-period play to eventually settle on, and never leave, a pure Nash equilibrium. We have focused on it, in part, because this is the notion that the impossibility results of \cite{hart2006stochastic} and \cite*{jaggard2014self} pertain to. Weaker notions of convergence (for examples, see \citealp{young2004strategic}) allow for possibility results that are different from ours. For example, \cite{young2009learning} shows that so-called `trial-and-error learning' is an uncoupled (in fact, completely uncoupled) dynamic that, for any $\eps >0$, is at a pure Nash equilibrium for a $1-\eps$ proportion of time steps in any generic game that has one. This is a powerful result because it applies to \emph{every} generic game but the notion of convergence there is weaker than almost-sure convergence of period-by-period play. With the latter, once a pure Nash equilibrium is reached, it is never left, whereas trial-and-error learning requires constant experimentation so there is always a positive probability of leaving a pure Nash equilibrium and wandering before settling on one again.} For example, the notion of convergence in \cite*{marden2009payoff} requires the dynamic to be at a pure Nash equilibrium `most of the time', which is different from almost sure convergence.

\subsection{Dynamics in connected vs weakly acyclic games}

Observe that Theorem~\ref{thm:main_game} is stronger than what we needed for Theorem~\ref{thm:conv} or for the implications for adaptive dynamics that we described in Section \ref{sec:scope}. In all cases, it would have been sufficient to have established the ubiquity of weak acyclicity rather than of connectedness. This raises the question of whether there are situations in which connectedness has implications for dynamics that do not follow from weak acyclicity alone.

There are at least two situations in which connectedness plays an essential role: (i) stochastic stability under evolutionary dynamics, and (ii) design. 

\begin{enumerate}[topsep=0ex,itemsep=0ex,leftmargin=0cm,label=(\roman*)]
\item When there are multiple equilibria, it is natural to ask which of these equilibria will be played. An approach taken in evolutionary game theory is to determine at which of the equilibria perturbed dynamics will spend most of their time; these are adaptive dynamics in which players' choices are subject to random errors parametrised by some $\eps >0$.\footnote{A perturbed version of the best-response dynamic with inertia might specify that, at each time, any updating player plays a best-response with probability $1-\eps$ and, with complementary probability $\eps > 0$, selects an action uniformly at random.} The `stochastically stable' states of such a dynamic---the states at which the dynamic spends most of its time---are the action profiles that are assigned positive probability as $\eps \to 0$ in the invariant distribution of the Markov process induced by the dynamic.

Several methods have been proposed to work out the stochastically stables states, such as the minimum-cost tree technique or the radius-coradius technique \citep*{kandori1993learning,kandori1995evolution,young1993evolution,freidlin2012random,ellison2000basins}. In all cases, one assigns a `cost' of moving from one action profile to another and calculates the total cost of transitions from one Nash profile to another. The cost of any best-response move is always set to zero in such calculations, and is positive otherwise. The Nash profiles that are relatively the most costly to escape and cheapest to return to are the stochastically stable states. In connected games, since there is a best-response path from any non-Nash profile to every Nash profile, the cost of moving from a non-Nash profile to any Nash profile is \emph{zero} (observe that this is not generally true in weakly acyclic games). Therefore, in a connected game, the total cost of transitioning from Nash equilibrium $a$ to Nash equilibrium $a'$ is just the cost of the initial deviation that kicks the system out of $a$. This local property, which \cite{newton2024conventions} refer to as the `one-shot' property, is very easy to calculate (and therefore side-steps having to do a full tree calculation à la \citealt*{freidlin2012random}, for example).

\cite{newton2024conventions} recently leveraged the one-shot property of stochastically stable states in connected games to show that, in typical games with a large number of players, different evolutionary dynamics select different types of Nash equilibria according to the welfare properties of the equilibria.

\item In a connected game, a planner who can determine the order with which players get a chance to update their actions can always steer a best-response dynamic from any non-Nash profile to any Nash equilibrium, and therefore to their most preferred one. Such steering to \emph{any} Nash profile is not generally possible in weakly acyclic games. This is a somewhat abstract point, but it serves to illustrate a substantive wedge between connectedness and weak acyclicity.\footnote{The idea of steering dynamics to desirable equilibria (though not by determining order of play) is discussed, for example, in \cite*{balcan2011leading}.}
\end{enumerate}

\section{Discussion}\label{sec:conc}

Let us return to our analogy of `beyond the worst-case' analysis of algorithms. As we have argued, simple adaptive dynamics perform poorly (i.e.\ are not guaranteed to lead to a pure Nash equilibrium) on worst-case inputs of the type given in the impossibility result of \citeauthor*{hart2003uncoupled}, but they perform very well on `typical' inputs because all but a small fraction of generic games with a pure Nash equilibrium have a property (namely, connectedness) which is conducive to convergent dynamics. While this provides a hopeful message, there are important caveats.

First, there is our notion of what a `typical' input is. Each possible configuration of players' preferences is given equal weight in our enumeration of generic games.
One consequence of this is that some well-known classes of games are `rare' according to our enumeration. For example, since potential games have acyclic best-response graphs \citep*{monderer1996potential}, our result on the prevalence of acyclic games implies that potential games are very rare. However, this is still an important class of games: they are an appropriate model for certain types of strategic interaction such as congestion \citep*{rosenthal1973class}. Our uniform enumeration reflects the fact that it is not a priori obvious which game structures ought to be given more or less weight when considering which of them are `typical' in the context of adaptive dynamics.\footnote{A player's outcomes can depend in complex ways on the actions of other players, particularly when there are many players. In such complex situations, players may not attempt to take account of all aspects of the strategic interaction before making their choices as they might in, say, an experimental setting, but may instead simply resort to `local' improving moves; in other words, in complex environments, one can reasonably expect players to resort to simple decision heuristics (e.g.\ better- and best-response dynamics) like the ones considered in this paper \citep{sandholm2010population}.} If one were nevertheless interested in questions of connectivity in restricted classes of games, we believe that a probabilistic combinatorial approach like ours may be usefully applied but would take us beyond the scope of our paper. It may be fruitful to investigate the connectivity properties of, for example, anonymous games \citep{gradwohl2021large},\footnote{A game is anonymous if, at any action profile, a player's ranking of their actions depends only on the numbers of other players selecting particular actions but not on the identities of those other players.} or games featuring local interactions, such as graphical games \citep*{kearns2007graphical,daskalakis2011connectivity,kearns2013graphical} or action-graph games \citep*{jiang2011action}.

Second, as mentioned in the introduction, since our focus is not on any specific dynamic, we have not addressed the question of the speed of convergence to equilibrium \citep{arieli2016stochastic}, and this is important when it comes to finding equilibria in games. In fact, it is known that adaptive dynamics can take a long time to converge \citep{hart2010long}.\footnote{We expect convergence to take exponential time for the best-response dynamic with inertia.} That said, greater knowledge of connectivity properties may help to establish general results on the speed of convergence in some classes of games.\footnote{Convergence can be fast in potential games (e.g.\ see \citealp*{awerbuch2008fast}) and in anonymous games (e.g.\ see \citealp{babichenko2013best}).}

Many other open questions remain. For example, further analysis of game connectivity may shed light on the relationship between connectivity and \emph{completely} uncoupled dynamics \citep{babichenko2012completely},\footnote{Completely uncoupled dynamics have even lower informational requirements than uncoupled ones: a dynamic is completely uncoupled if, at each time $t$, each player's observation set contains only their own realised utility payoffs and their own past actions.}. Also, our focus was on generic games but there are, to our knowledge, few results on the connectivity properties of non-generic games.\footnote{Every game $([n], ([k_i])_{i \in [n]} , (\succsim_i)_{i \in [n]})$ has a utility-based representation $([n], ([k_i])_{i \in [n]} , (u_i)_{i \in [n]})$ with, for each player $i$, a utility function $u_i :A \rightarrow \mathbb{R}$ representing their preference relation $\succsim_i$. Our notion of genericity is equivalent to the condition that such a utility-based game has no payoff ties, i.e.\ that for each $i$ and any distinct profiles $a$ and $a'$ that differ only in the $i$th index, $u_i(a) \neq u_i(a')$. Furthermore, any utility-based game for which the utility numbers are perturbed by small random shocks independently drawn from an atomless distribution is almost surely generic in our sense of the term. Generic games are therefore common. That said, non-genericity arises in certain natural situations, such as in normal-form representations of sequential games. While \cite*{amiet2021pure} and \cite*{collevecchio2024finding} derive some connectivity properties of non-generic games with two actions per player there are, to our knowledge, no results for non-generic games with more than two actions per player.}

\clearpage

\appendix

\section{Connectivity of directed grids}\label{sec:technical}\label{sec:tech_conn}

Theorem~\ref{thm:main_grids}, the main result of this section and the central technical contribution of our paper, is about the connectivity properties of random subgraphs of directed Hamming graphs. We first introduce our notation and then state the theorem, before explaining how Theorem~\ref{thm:main_game} follows.

For $n\in \N$ and $\mbf{k}=(k_1,\dots,k_n)\in \{2, 3, \dots\}^n$, the \emph{Hamming graph} $H(n,\mbf{k})$ is the graph with vertex set $V(n,\mbf{k})\coloneqq \prod_{i=1}^n[k_i]$ and edges between $n$-tuples precisely when they differ in exactly one coordinate. For $i\in[n]$, a \emph{line of $V(n,\mbf{k})$ in coordinate $i$} is a subset of $V(n,\mbf{k})$ of size $k_i$ whose elements pairwise differ in exactly the $i$th coordinate. A \emph{line} of $V(n,\mbf{k})$ is a subset which is a line of $V(n,\mbf{k})$ in coordinate $i$ for some $i$. Note that a line induces a complete subgraph of $H(n,\mbf{k})$. The \emph{directed Hamming graph} $\vv{H}(n,\mbf{k})$ is the simple directed graph formed by replacing each edge $uv$ of $H(n,\mbf{k})$ with directed edges $u\rightarrow v$ and $v\rightarrow u$.

Let $\Lnk$ be the random subgraph of the directed Hamming graph defined by independently and uniformly at random choosing a \emph{winner} among the vertices of each line of $\vv{H}(n,\mbf{k})$, and within that line keeping only those edges $u \rightarrow v$ whose endpoint,~$v$, is the winner. Observe that in this random subgraph, each line induces a directed star in which all edges are oriented towards the winner. Our interest in $\Lnk$ stems from the following.

\begin{remark}
    The graph $\Lnk$ has the same distribution as the best-response graph of a game drawn uniformly at random from amongst all games in $\cG(n,\mbf{k})$.
\end{remark}

As in Theorem~\ref{thm:main_game}, we will study these objects when $n$ is large relative to $\max_i(k_i)$; our proof breaks down when this is not the case. We now state our main theorem.

\begin{restatable}{theorem}{main}\label{thm:main_grids}
    For all $\eps>0$ there exist $c,\delta>0$ such that for all integers $n\geq 2$ and all $\mbf{k}\in\{2,3,\dots\}^n$, if $K\coloneq \max_i(k_i)$ satisfies $K \leq \delta \sqrt{n/ \log(n)}$, then with failure probability at most $\prod_{i=1}^{n}{k_i}^{-c}$, every vertex of $\Lnk$ can either be reached from at most $N \coloneqq(1+\eps)K\log(K)$ vertices, or from every non-sink.
\end{restatable}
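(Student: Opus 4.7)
The plan is to adapt the component-structure approach of \cite*{mcdiarmid2021component} for random subgraphs of the undirected hypercube to the directed Hamming setting $\Lnk$, combined with a ``plugging-in'' argument in the style of \cite*{bollobas1993connectivity} for thin backward tendrils. The key quantity attached to each vertex $v$ is $W(v)$, the number of lines through $v$ on which $v$ is the winner: the in-degree of $v$ equals $\sum_{i\colon v\text{ wins coord }i}(k_i-1)$ and the out-degree equals $n-W(v)$. Declare $v$ \emph{good} if $W(v)\in[\alpha n/K,\beta n/K]$ for suitable constants $0<\alpha<\beta$, so that both in- and out-degree are $\Theta(n)$. Since $W(v)$ is a sum of $n$ independent Bernoulli$(1/k_i)$ indicators, Chernoff gives $\Pr[v\text{ bad}]\leq e^{-cn/K}$; the hypothesis $K\leq\delta\sqrt{n/\log n}$ keeps bad vertices sparse enough for the arguments to follow, though not rare enough to support a naive union bound asserting that every one of the $\prod_i k_i$ vertices is good.

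The central step is to establish a giant strongly connected component $C$ containing almost every good vertex. The strategy is simultaneous expansion: run forward BFS from any good vertex $u$ and backward BFS from any good vertex $v$, show that each frontier grows multiplicatively with each round until it covers a constant fraction of $V(n,\mbf{k})$, and conclude that the two sets must intersect, giving $u\to^\ast v$. This is the most delicate step: in the undirected hypercube of \cite*{mcdiarmid2021component} edges are independent, whereas in $\Lnk$ edges on a common line are tightly coupled (exactly one winner per line), so the expansion must exploit independence across distinct lines while controlling the within-line correlation, and must separately handle forward expansion (driven by out-degree) and backward expansion (driven by in-degree).

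Given $C$, I next show that with high probability every non-sink reaches some vertex of $C$. By strong connectivity of $C$, it follows that whenever any vertex $w$ is reached from $C$, the backward reach $R(w)$ contains all of $C$ and, via $C$, every non-sink; in particular $R(w)\supseteq\{\text{non-sinks}\}$. The remaining case is $v$ not reached by $C$, i.e.\ $R(v)\cap C=\emptyset$. Then $R(v)$ contains no good vertex, so the backward BFS from $v$ is confined to bad vertices, each of which (having atypically small $W$) has few in-neighbors and therefore provides only limited branching. The bound $|R(v)|\leq N=(1+\eps)K\log K$ is then obtained by a union bound, over a root $v$ and over all candidate rooted backward trees in $\Lnk$ of size in $(N,2N]$ at $v$, of the probability that such a tree actually occurs using only bad vertices; this is the ``plugging-in'' component inspired by \cite*{bollobas1993connectivity}.

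The main obstacle is the expansion step for the giant SCC, which has to work in the directed Hamming setting with the line-wise dependencies that are absent in hypercube percolation; obtaining the required multiplicative growth in both directions at once, and tracking good versus bad frontier vertices throughout, is technically involved. A second hurdle is the sharp constant $(1+\eps)K\log K$ in the tendril bound: the union bound over candidate backward trees is sensitive to how the exploration interacts with already-revealed line winners, and achieving a logarithmic (rather than polynomial) dependence on $K$ in this threshold requires a carefully staged exposure argument that separates the line-winner randomness affecting the tree edges from that affecting potential escapes to good vertices.
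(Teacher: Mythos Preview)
Your high-level architecture matches the paper's (define good vertices, build a strongly connected component on them, plug in the rest), but the plugging-in step as you describe it does not deliver the stated threshold $N=(1+\eps)K\log K$, and this is where the theorem's content lies.

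You propose to bound $|R(v)|\leq N$ whenever $R(v)$ misses the good component by union-bounding over backward trees of size about $N$ whose vertices are all bad. Quantitatively this does not close. Chernoff gives $\Prb[v\text{ bad}]\leq e^{-c_0 n/K}$ with an absolute constant $c_0<1$ (the lower tail of a sum of Bernoulli$(1/k_i)$'s can do no better: already $\Prb[W(v)=0]\geq(1-1/K)^n\approx e^{-n/K}$). Over a tree of size $m\approx K\log K$ the bad-vertex product is roughly $e^{-c_0 n\log K}$, while the cost of the union over roots and trees is $K^n(enK)^m\approx e^{n\log K+o(n\log K)}$. The balance requires $c_0>1/(1+\eps)$, which you cannot arrange; your argument would yield a threshold of order $c_0^{-1}K\log K$, not $(1+\eps)K\log K$ for every $\eps>0$. (There is also a logical slip: with your two-sided definition of good, a ``bad'' vertex may have large $W$ and hence many in-neighbours, so ``$R(v)$ confined to bad vertices'' does not by itself limit backward branching.)

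The paper obtains the sharp $N$ from a different mechanism, split into two stages. First, a \emph{foothold} (Lemma~\ref{lem:Y_small}): no vertex has $|R(v)|=m$ for any $m\in(N,n/2]$. The controlling event here is not ``all tree vertices are bad'' but ``the winner of every incomplete line of the tree lies outside it'', with probability at most $(1-1/K)^{m(n-m)}$; the threshold $N=(1+\eps)K\log K$ is exactly what makes $K^{1/m}(1-1/K)<1$. Second (Lemma~\ref{lem:event_C}), for $|R(v)|>n/2$ the paper does run a bad-vertex tree argument, but with trees of size $M=\lfloor n/3\rfloor$, so that $e^{-c_0 Mn/K}=e^{-\Theta(n^2/K)}$ overwhelms the tree count. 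You have collapsed these two stages into one, and the collapsed version loses the constant.

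Two smaller points. Your upper cap $W(v)\leq\beta n/K$ is miscalibrated when the $k_i$ vary: for $\mbf{k}=\mbf{2}$ one has $\mathbb{E}W(v)=n/2$, so almost no vertex would be good; the paper uses the asymmetric window $[n/(3K),3n/4]$. And the paper's SCC argument is not expansion-to-constant-fraction but a local-to-global step (every vertex has a good $H(n,\mbf{k})$-neighbour; every good vertex reaches every good vertex at $H(n,\mbf{k})$-distance $\leq 3$ via an explicit $9$-edge path construction), which sidesteps the issue you flag about line-wise dependence in a BFS.
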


Theorem~\ref{thm:main_grids} shows that there is a sharp dichotomy in the number of vertices which can reach each vertex in the random graph $\vec{L}(n,\mathbf{k})$: with high probability, every vertex $v$ is one of exactly two types. Either $v$ can only be reached by a small number of vertices (explicitly, $N = (1+\varepsilon)K\log(K)$), or it can be reached from every non-sink. There is no intermediate possibility. The threshold $N$ is small relative to the total number of vertices (which grows exponentially in $n$), so the ``small'' category is genuinely small even when $K$ is increasing with $n$.

The full proof of Theorem~\ref{thm:main_grids} is postponed to Appendices~\ref{sec:proof_intro},~\ref{sec:foothold},~\ref{sec:strong_cmpnt}, and~\ref{sec:main_proof}.  In Appendix~\ref{sec:tech_tight} we examine the tightness (or lack thereof) of various aspects of Theorem~\ref{thm:main_grids}. In particular, we show that neither the failure probability nor the value of $N$ can be significantly improved in general.

We use the remainder of this subsection to explain how Theorem~\ref{thm:main_game} follows from Theorem~\ref{thm:main_grids}. The key observation is that sinks fall into the category of vertices that have maximal in-neighbourhood size: a sink wins every coordinate line so it accumulates incoming edges from many vertices (at least $n+1$) and, in particular, more than the threshold $N$ (the latter is upped-bounded by $n$ given our assumption on $K$). Hence every sink can be reached from every non-sink, which is precisely connectedness. To derive this implication more formally, we highlight the following corollary of Theorem~\ref{thm:main_grids}, in which we denote by $R_{n,\mbf{k}}$ the event that every non-sink in $\Lnk$ can reach every sink, and by $S_{n,\mbf{k}}$ the event that $\Lnk$ has at least one sink.

\begin{corollary}\label{cor:conn}
    There exist $c_0,c_1>0$ and $\delta \in (0,1]$ such that for all integers $n\geq 2$ and all $\mbf{k}\in\{2,3,\dots\}^n$, if $K\coloneq \max_i(k_i)$ is such that $K \leq \delta \sqrt{n/ \log(n)}$, then
    \begin{enumerate}[label=\textnormal{(\alph*)}]
        \item $\Prb(R_{n,\mbf{k}}) \geq 1 - e^{-c_0 n}$,
        \item $\Prb(R_{n,\mbf{k}}\, |\, S_{n,\mbf{k}}) \geq 1 - e^{-c_1 n}$.
    \end{enumerate}
\end{corollary}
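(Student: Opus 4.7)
The plan is to derive both parts of Corollary~\ref{cor:conn} from Theorem~\ref{thm:main_grids} by exploiting the fact that every sink of $\Lnk$ has at least $n$ direct in-neighbours. Indeed, $v$ is a sink precisely when $v$ is the chosen winner on each of the $n$ lines through $v$, and in that case every other vertex on those lines is an in-neighbour of $v$; so $v$ can be reached from at least $\sum_{i=1}^{n}(k_i-1)\geq n$ distinct vertices besides itself.

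For part (a), I would fix any $\eps>0$ and apply Theorem~\ref{thm:main_grids} to obtain constants $c,\delta_0>0$. I would then shrink $\delta\leq\delta_0$ so that the hypothesis $K\leq\delta\sqrt{n/\log n}$ forces $N=(1+\eps)K\log K<n+1$; a short calculation using $K\geq 2$ gives $K\log K\leq \tfrac{\delta}{2}\sqrt{n\log n}$, which is smaller than $n/(1+\eps)$ whenever $\delta$ is a sufficiently small absolute constant. Under this choice the first alternative of Theorem~\ref{thm:main_grids} is impossible for any sink, so every sink is reached by every non-sink---this is precisely the event $R_{n,\mbf{k}}$. Since $\prod_i k_i\geq 2^n$, the failure probability $\prod_i k_i^{-c}$ is at most $2^{-cn}=e^{-(c\log 2)n}$, which delivers part (a).

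For part (b), I would first note that $R_{n,\mbf{k}}$ holds vacuously when $\Lnk$ has no sink, so $\ov{R_{n,\mbf{k}}}\subseteq S_{n,\mbf{k}}$ and consequently $\Prb(\ov{R_{n,\mbf{k}}}\mid S_{n,\mbf{k}})=\Prb(\ov{R_{n,\mbf{k}}})/\Prb(S_{n,\mbf{k}})$. By part (a) it then suffices to bound $\Prb(S_{n,\mbf{k}})$ below by an absolute constant, which I would do with a short second moment computation on the number $Z$ of sinks. Each vertex is a sink with probability $\prod_i k_i^{-1}$, giving $\mathbb{E}[Z]=1$; two vertices at Hamming distance $1$ cannot both be sinks as they share a line and compete for a single winner, while any two at distance $\geq 2$ have pairwise distinct lines in every coordinate and so yield independent sink events. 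A routine calculation then gives $\mathbb{E}[Z^2]\leq 2$, and the second moment inequality yields $\Prb(S_{n,\mbf{k}})\geq(\mathbb{E}[Z])^2/\mathbb{E}[Z^2]\geq 1/2$. Combining with part (a) produces part (b) for a suitable $c_1>0$, absorbing the constant factor $2$ into the exponent by shrinking $c_1$ and using that the growth hypothesis together with $K\geq 2$ already forces $n$ to exceed an absolute threshold. There is essentially no obstacle beyond this: all the probabilistic content is packaged inside Theorem~\ref{thm:main_grids}, and the corollary amounts to feeding the in-degree-$n$ bound for sinks into that theorem and then converting the resulting unconditional bound into the desired conditional statement.
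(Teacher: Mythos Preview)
Your proposal is correct and follows essentially the same route as the paper: apply Theorem~\ref{thm:main_grids} (the paper fixes $\eps=1$), observe that the growth condition forces $N<n+1$ while every sink is reachable from at least $n+1$ vertices, and convert $\prod_i k_i^{-c}\leq 2^{-cn}$ into the desired exponential. For part~(b) the paper proceeds via the same inequality $\Prb(R\mid S)\geq 1-\Prb(\ov{R})/\Prb(S)$ but simply cites \cite{rinott2000number} for a universal lower bound on $\Prb(S_{n,\mbf{k}})$; your self-contained second moment computation showing $\mathbb{E}[Z]=1$, $\mathbb{E}[Z^2]\leq 2$, and hence $\Prb(S_{n,\mbf{k}})\geq 1/2$ is a clean and correct substitute (the key independence claim, that vertices at Hamming distance $\geq 2$ share no lines, is exactly right). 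Your observation that $\ov{R_{n,\mbf{k}}}\subseteq S_{n,\mbf{k}}$ is also slightly tidier than the paper's chain of inequalities, but the two are equivalent.
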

\begin{proof}
    Let $c$ and $\delta'$ be as given by Theorem~\ref{thm:main_grids} in the case $\eps =1$, and let $\delta$ be the minimum of~$1$ and $\delta'$.
    Then for $n$, $\mbf{k}$, and $K$ as in the statement of the corollary, we have that with failure probability at most $\prod_{i=1}^{n}{k_i^{-c}}$ every vertex of $\Lnk$ can either be reached from at most $2K\log(K)$ vertices or from every non-sink. However, we have $2K\log(K)\leq K^2\log(K)\leq n$, so if this event holds then every non-sink can reach every sink, because all sinks can be reached from at least $n+1$ vertices. Finally, note that $\prod_{i=1}^{n}{k_i^{-c}}\leq 2^{-cn}\leq e^{-c_0 n}$ for some $c_0>0$, which proves part (a).

    Next,
    \[
        \Prb( R_{n,\mbf{k}} \, | \, S_{n,\mbf{k}} )
        = \frac{\Prb(R_{n,\mbf{k}} \cap S_{n,\mbf{k}})}{\Prb(S_{n,\mbf{k}})}
        \geq \frac{ \Prb(R_{n,\mbf{k}} ) - (1-\Prb( S_{n,\mbf{k}} )) }{ \Prb(S_{n,\mbf{k}})}
        \geq 1 - \frac{e^{-c_0 n}}{\Prb(S_{n,\mbf{k}})},
    \]
    where we used part (a) in the final step. It follows from work in \cite{rinott2000number} that there exists a positive universal constant which lower bounds $\Prb(S_{n,\mbf{k}})$ for all $n\geq 2$ and all $\mbf{k}\in \{2,3,\dots\}^{n}$, completing the proof of part (b).
\end{proof}

As remarked above, $\Lnk$ has the same distribution as the best-response graph of a game drawn uniformly at random from among all games in $\cG(n,\mbf{k})$. Because our draws are uniform, we have that
\[
\Prb( R_{n,\mbf{k}} \, | \, S_{n,\mbf{k}} ) =
\frac{| \{g\in\cG(n,\mbf{k})\colon g \text{ is connected}\} |}{| \{ g\in \cG(n,\mbf{k}) \colon g \text{ has a pure Nash equilibrium}\} |},
\]
so Theorem~\ref{thm:main_game} follows immediately from part (b) of Corollary~\ref{cor:conn}.

\begin{proof}[Proof of Corollary \ref{cor:nzm}]
Replace $S_{n,\mbf{k}}$ in the proof of Corollary~\ref{cor:conn} part (b) by the event that $\Lnk$ is in $\mathcal{X}(n,\mbf{k})$. 
\end{proof}

\subsection{Outline of the proof of Theorem~\ref{thm:main_grids}}\label{sec:outline}

We detail the proof of Theorem~\ref{thm:main_grids} in Appendices~\ref{sec:proof_intro},~\ref{sec:foothold},~\ref{sec:strong_cmpnt}, and~\ref{sec:main_proof}. We first start by providing a high-level overview of our arguments that is a bit more technical than the one we provided for Theorem~\ref{thm:main_game} in the main text.

At the heart of our proof is an argument showing the likely existence of a certain large strongly connected component in $\Lnk$; in other words, we find a large set of vertices which can all reach each other along directed paths. Indeed, we define a vertex of $\Lnk$ to be \emph{good} if the number of lines that it wins is close to the expected number, and then show that with high probability every good vertex can reach all other good vertices along directed paths. Since most vertices win close to the expected number of lines, this connects up a good proportion of the graph and it remains to `plug in' the remaining, unusual, vertices. This part of our proof is based on work of \cite*{mcdiarmid2021component}, who study the component structure of random subgraphs of the undirected hypercube graph.

Once we know that all good vertices are in the same strongly connected component, it is sufficient to show that every non-sink $x$ can reach some good vertex $u$, and every vertex $y$ which can be reached from more than $N=(1+\eps)K\log(K)$ vertices can be reached from some good vertex $v$. Indeed, this yields a directed path from $x$ to $y$ via $u$ and $v$, where we utilise the strongly connected component to get from $u$ to $v$. The first step towards this is to `establish a foothold' by showing that with high probability (a) every non-sink can in fact reach at least $n/2$ vertices, and (b) every vertex that can be reached from more than $N$ vertices can in fact be reached from at least $n/2$ vertices. The inspiration for considering such an event comes from the work of~\cite*{bollobas1993connectivity}. The final step in the proof is then to show that if a vertex can reach or be reached from at least $n/2$ vertices, then it is very unlikely that none of these vertices is good.

\section{Proof of Theorem~\ref{thm:main_grids}: preliminaries}\label{sec:proof_intro}

Throughout the proof of Theorem~\ref{thm:main_grids}, i.e.\ throughout Appendices~\ref{sec:proof_intro},~\ref{sec:foothold},~\ref{sec:strong_cmpnt}, and~\ref{sec:main_proof}, we will take $n\geq 2$ to be an integer, we will take $\mbf{k}\in\{2,3,\dots\}^n$, and we will let $K\coloneqq \max_i(k_i)$.

\begin{definition}
We will describe a probability $p_{\eps}(n,\mbf{k})$ with parameters $\eps>0$, $n$, and $\mbf{k}$ as being \emph{very small} if for all $\eps$ there exist $c_{\eps},\delta_{\eps}>0$ depending only on $\eps$ such that $p_{\eps}(n,\mbf{k})\leq \prod_{i=1}^{n}{k_i^{-c_{\eps}}}$ for all $K\leq \delta_{\eps} \sqrt{n /\log (n)}$. For a probability $p(n,\mbf{k})$ with no dependence on $\eps$, the constants $c_{\eps}$ and $\delta_{\eps}$ should be replaced by universal constants.
\end{definition}

\begin{definition}
We will say that $p(n,\mbf{k})$ is \emph{extremely small} if there exist $c,\delta_0>0$ such that for all $\delta \in (0,\delta_0)$, if $K\leq \delta \sqrt{n/\log(n)}$, then $p(n,\mbf{k})\leq e^{-cn\log(K)/\delta}$. 
\end{definition}
Observe that every extremely small probability is also very small.

\begin{definition}
If $p_\eps(n,\mbf{k})$ or $p(n,\mbf{k})$ is very or extremely small, we will say that the complementary probability is very or extremely high, respectively.   
\end{definition}

\begin{definition}
We say that an event $F_\eps(n,\mbf{k})$ or $F(n,\mbf{k})$ occurs \emph{with very high probability} (wvhp) or \emph{with extremely high probability} (wehp) if the probability that it occurs is very or extremely high respectively.   
\end{definition}

Given this terminology, Theorem~\ref{thm:main_grids} is equivalent to the statement that wvhp, every vertex of $\Lnk$ can either be reached from at most $(1+\eps)K\log(K)$ vertices or from every non-sink. This motivates our definition of very small probabilities. Our definition of extremely small probabilities is motivated by the following lemma, which demonstrates that such probabilities are amenable to union bounds over $V(n,\mbf{k})$.

\begin{lemma}\label{lem:union_bound}
    If $p(n,\mbf{k})$ is an extremely small probability, then for all fixed $a>0$ the probability $K^{an}\cdot p(n,\mbf{k})$ is also extremely small.
\end{lemma}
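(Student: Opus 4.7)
The plan is a direct calculation that absorbs the factor $K^{an}$ into the exponential bound by shrinking the parameter $\delta$. Let $c,\delta_0>0$ be constants witnessing that $p(n,\mbf{k})$ is extremely small, so that for every $\delta\in(0,\delta_0)$ with $K\leq\delta\sqrt{n/\log(n)}$ we have $p(n,\mbf{k})\leq e^{-cn\log(K)/\delta}$. Writing $K^{an}=e^{an\log K}$, the product satisfies
\[
    K^{an}\cdot p(n,\mbf{k}) \;\leq\; e^{n\log(K)\,(a-c/\delta)}.
\]

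The key observation is that the negative term $-c/\delta$ dominates the constant $a$ whenever $\delta$ is sufficiently small. Concretely, restricting to $\delta\leq c/(2a)$ forces $a-c/\delta\leq -c/(2\delta)$, so the product above is bounded by $e^{-(c/2)\,n\log(K)/\delta}$. Setting $c'\coloneqq c/2$ and $\delta_0'\coloneqq\min\{\delta_0,\,c/(2a)\}$ then witnesses that $K^{an}\cdot p(n,\mbf{k})$ is extremely small, as required. (Note that $\log K>0$ throughout, since $K\geq 2$.)

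No real obstacle arises: the definition of \emph{extremely small} has been calibrated precisely so that the parameter $\delta$ can be tightened to absorb any fixed polynomial factor of the form $K^{an}$. This robustness is exactly what is needed to make \emph{extremely small} the right notion for the union-bound arguments over $V(n,\mbf{k})$, whose size is at most $K^n$, that appear later in the proof of Theorem~\ref{thm:main_grids}.
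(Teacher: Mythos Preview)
Your proof is correct and follows essentially the same approach as the paper's own argument: both write $K^{an}\cdot p(n,\mbf{k})\leq e^{n\log(K)(a-c/\delta)}$ and then shrink the threshold on $\delta$ so that $a-c/\delta\leq -c/(2\delta)$, yielding new witnesses $c'=c/2$ and $\delta_0'=\min\{\delta_0,c/(2a)\}$. The only cosmetic difference is that the paper phrases the choice as ``$\delta_0'$ small enough that $c/(2\delta_0')>a$'' rather than writing out the explicit minimum.
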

\begin{proof}
    Let $c$ and $\delta_0$ witness the fact that $p(n,\mbf{k})$ is extremely small. Then for all $\delta\in(0,\delta_0)$, if $K \leq \delta \sqrt{n/\log(n)}$, then
    \[
        K^{an}\cdot p(n,\mbf{k})
        \leq e^{an\log(K)-cn\log(K)/\delta}
        = e^{n\log(K)(a-c/\delta)}.
    \]
    Let $\delta'_0\in (0,\delta_0)$ be small enough that $c/(2\delta'_0)>a$, then for all $\delta\in(0,\delta'_0)$ we have $a-c/\delta < -c/(2\delta)$, so letting $c'=c/2$ we see that $c',\delta'_0$ witness the fact that $K^{an}\cdot p(n,\mbf{k})$ is extremely small.
\end{proof}

We will also make frequent use of the following simple result which follows from elementary analyses of the various cases.

\begin{lemma}\label{lem:add_probs}
    The sum of two very small probabilities is very small and the sum of two extremely small probabilities is extremely small.
\end{lemma}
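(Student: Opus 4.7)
The plan is to prove both statements via the triangle inequality, absorbing the resulting factor of $2$ into a slightly smaller constant in each exponent. The argument is essentially bookkeeping, and the only subtlety is that one must ensure the regime in which the bounds hold automatically forces $n$ to be large enough for this absorption to be possible.

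For the very small case, suppose $p^{(1)}_\eps$ and $p^{(2)}_\eps$ are very small with witnessing constants $(c^{(j)}_\eps, \delta^{(j)}_\eps)$ for $j=1,2$. I would take $c^*_\eps = \min(c^{(1)}_\eps, c^{(2)}_\eps)$. For $K \leq \min(\delta^{(1)}_\eps, \delta^{(2)}_\eps)\sqrt{n/\log n}$, the two bounds sum to at most $2\prod_{i=1}^{n} k_i^{-c^*_\eps}$. Setting $c_\eps = c^*_\eps/2$, it remains to verify $2 \leq \prod_i k_i^{c^*_\eps/2}$; since $k_i \geq 2$, this holds as soon as $n \geq 2/c^*_\eps$. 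This latter condition can be made automatic by shrinking $\delta_\eps$, because the constraint $K \leq \delta_\eps\sqrt{n/\log n}$ together with $K \geq 2$ forces $n/\log n \geq 4/\delta_\eps^2$, which pushes $n$ to be arbitrarily large as $\delta_\eps$ shrinks.

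The extremely small case is essentially the same. Given witnessing constants $(c^{(j)}, \delta^{(j)}_0)$ for $j=1,2$, I would set $c = \tfrac{1}{2}\min(c^{(1)}, c^{(2)})$ and $\delta_0 = \min(\delta^{(1)}_0, \delta^{(2)}_0, 2c)$. Then for $\delta \in (0, \delta_0)$ and $K$ in the relevant range, the sum is bounded by $2 e^{-2cn\log(K)/\delta}$. Absorbing the factor of $2$ requires $cn\log(K)/\delta \geq \log 2$, and since $n \geq 2$ and $K \geq 2$ give $n\log K \geq 2\log 2$, this holds whenever $\delta \leq 2c$, which is ensured by construction. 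The resulting bound $e^{-cn\log(K)/\delta}$ then witnesses that the sum is extremely small.

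There is no genuine obstacle here; the only thing to watch is the interdependence of the constants, and in particular the fact that shrinking $\delta$ in the defining inequality automatically forces $n$ to be large enough to accommodate the slight shrinkage of the exponent constant. Both claims are routine consequences of the definitions.
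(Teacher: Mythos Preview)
Your proposal is correct and matches the paper's approach: the paper in fact omits the proof entirely, stating only that the lemma ``follows from elementary analyses of the various cases,'' and your argument is precisely such an elementary analysis. The bookkeeping you carry out---taking minima of the constants, halving the exponent, and shrinking $\delta$ to force $n$ large enough to absorb the factor of $2$---is exactly what is needed and contains no gaps.
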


Before starting the proof Theorem~\ref{thm:main_grids} in earnest, we record the following two standard results which will be useful at various points. For a discussion of these results (and much more), we refer the reader to~\cite{frieze2015introduction}.

\begin{lemma}[Chernoff bound]
    \label{lem:chernoff}
    Let $X_1,\dots,X_n$ be independent Bernoulli random variables, let $X=\sum_{i=1}^n X_i$, and let $\mu=\mathbb{E}[X]$. Then for all $\eps\geq 0$ we have \[\Prb(X\leq(1-\eps)\mu)\leq e^{-\eps^2\mu/2}.\]
\end{lemma}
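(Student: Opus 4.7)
The plan is the standard Cramér--Chernoff argument via the moment generating function. Since $X \geq 0$, for $\eps \geq 1$ the bound is trivial: either the event is empty or it reduces to $\Prb(X=0) = \prod_i (1-p_i) \leq e^{-\mu} \leq e^{-\eps^2 \mu / 2}$. So I would focus on $\eps \in (0,1)$.

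The first step is to pass to the exponential tail via Markov's inequality. For any $t > 0$,
\[
    \Prb(X \leq (1-\eps)\mu)
    = \Prb\bigl(e^{-tX} \geq e^{-t(1-\eps)\mu}\bigr)
    \leq e^{t(1-\eps)\mu}\,\mathbb{E}\bigl[e^{-tX}\bigr].
\]
By independence of $X_1,\dots,X_n$ the MGF factors, and writing $p_i = \Prb(X_i=1)$ I would use the elementary bound $1 - p_i(1-e^{-t}) \leq \exp(-p_i(1-e^{-t}))$ to obtain
\[
    \mathbb{E}\bigl[e^{-tX}\bigr]
    = \prod_{i=1}^n \bigl(1 - p_i(1-e^{-t})\bigr)
    \leq \exp\bigl(-\mu(1-e^{-t})\bigr).
\]

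Next I would optimize the resulting exponent $t(1-\eps)\mu - \mu(1 - e^{-t})$ in $t$. Differentiating gives the optimal choice $e^{-t} = 1-\eps$, i.e.\ $t = -\log(1-\eps) > 0$, and substituting yields
\[
    \Prb(X \leq (1-\eps)\mu) \leq \exp\bigl(\mu\bigl[-(1-\eps)\log(1-\eps) - \eps\bigr]\bigr).
\]
The final step is the calculus inequality $-(1-\eps)\log(1-\eps) - \eps \leq -\eps^2/2$ for $\eps \in [0,1)$, which I would verify by setting $\varphi(\eps) = -(1-\eps)\log(1-\eps) - \eps + \eps^2/2$, noting $\varphi(0)=0$, and checking $\varphi'(\eps) = \log(1-\eps) + \eps \leq 0$ via the Taylor expansion $\log(1-\eps) = -\sum_{k\geq 1} \eps^k/k$. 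Combining gives the claimed bound $e^{-\eps^2\mu/2}$.

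There is no real obstacle here: the proof is entirely routine and, as the paper itself notes, it is standard textbook material (see \cite{frieze2015introduction}). The only mildly delicate point is the calculus inequality at the end, which is easy but must be checked to make sure the clean form $e^{-\eps^2\mu/2}$ really falls out of the optimized Chernoff exponent.
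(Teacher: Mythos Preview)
Your argument is correct and is the standard Cram\'er--Chernoff derivation. The paper does not actually prove this lemma; it simply records it as a standard result and refers the reader to \cite{frieze2015introduction}, so there is nothing to compare against beyond noting that you have supplied exactly the textbook proof the paper is citing. One cosmetic point: in your $\eps\geq 1$ case, the displayed chain $e^{-\mu}\leq e^{-\eps^2\mu/2}$ is only needed (and only valid) at $\eps=1$; for $\eps>1$ with $\mu>0$ the event is empty, so you should phrase the case split so that this inequality is invoked only when $\eps=1$.
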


\begin{lemma}[Application of Markov's inequality]
    \label{lem:markov}
    Let $Z_1, \dots, Z_m$ be non-negative integer valued random variables, and suppose that $\sum_{i=1}^m\mathbb{E}[Z_{i}]\leq p$ for some $p\in [0,1]$. Then the probability that $Z_1 = \dotsb = Z_m  = 0$ is at least $1-p$.
\end{lemma}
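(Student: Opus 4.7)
The plan is to reduce the statement to a single application of Markov's inequality applied to the sum $Z \coloneqq \sum_{i=1}^m Z_i$. The key observation is that since each $Z_i$ is non-negative and integer-valued, the event $\{Z_1 = \dotsb = Z_m = 0\}$ coincides with the event $\{Z = 0\}$, which in turn (by integrality) coincides with $\{Z < 1\}$.

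Having made this reduction, I would compute $\mathbb{E}[Z] = \sum_{i=1}^m \mathbb{E}[Z_i] \leq p$ by linearity of expectation and the hypothesis. Then, since $Z$ is a non-negative random variable, Markov's inequality gives $\Prb(Z \geq 1) \leq \mathbb{E}[Z] \leq p$, so $\Prb(Z = 0) \geq 1 - p$, which is what we want.

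There is no real obstacle here; the lemma is a routine packaging of the first moment method for integer-valued random variables, and the only substantive step is noticing that non-negativity plus integrality lets us identify the joint event $\{Z_1 = \dotsb = Z_m = 0\}$ with the tail event $\{Z \geq 1\}^c$. The hypothesis $p \in [0,1]$ is not actually needed for the proof to go through (the conclusion is vacuous when $p > 1$), so no care is required there.
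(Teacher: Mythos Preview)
Your proof is correct and matches the paper's own argument essentially line for line: define $Z=\sum_i Z_i$, use linearity of expectation to get $\mathbb{E}[Z]\leq p$, apply Markov's inequality to obtain $\Prb(Z\geq 1)\leq p$, and identify $\{Z\geq 1\}^c$ with $\{Z_1=\dotsb=Z_m=0\}$.
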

\begin{proof}
    Let $Z = \sum_{i=1}^m Z_i$ and note that $\mathbb{E}[Z] = \sum_{i=1}^m\mathbb{E}[Z_{i}]\leq p$. By Markov's inequality, $\Prb(Z\geq 1)\leq \mathbb{E}[Z]\leq p$, and the complement of the event $\{Z \geq 1\}$ is $\{Z_1 = \dotsb = Z_m  = 0\}$.
\end{proof}

\section{Proof of Theorem~\ref{thm:main_grids}: establishing a foothold}
\label{sec:foothold}

Throughout this section we let $\eps>0$ and set $N=(1+\eps)K\log(K)$, as in the statement of Theorem~\ref{thm:main_grids}. We will also assume (without loss of generality) that $k_1\leq k_2\leq \dots\leq k_n$. Recall from Section~\ref{sec:outline} that one step in our proof of Theorem~\ref{thm:main_grids} will be to show that, wvhp, in $\Lnk$ all non-sinks can reach more than $n/2$ vertices, and all vertices which can be reached from more than $N$ vertices can be reached from more than $n/2$ vertices. We write $A_\eps$ for the event that this condition holds in $\Lnk$.

\begin{definition}[Event $A_\eps$]
    Let $A_\eps$ be the event that the following two conditions are satisfied:
    \begin{itemize}
        \item all non-sinks can reach more than $n/2$ vertices; and
        \item every vertex that can be reached from more than $N$ vertices can be reached from more than $n/2$ vertices.
    \end{itemize}
\end{definition}

This section is devoted to establishing the following lemma.

\begin{lemma}\label{lem:foothold}
    The event $A_\eps$ occurs with very high probability.
\end{lemma}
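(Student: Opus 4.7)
The plan is to split the complementary event $A_\eps^c$ into two pieces and bound each via a union bound over candidate \emph{witness sets}, exploiting the independence of the winners across distinct lines.

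Let $B_1$ denote the event that some non-sink has forward-reachable set of size at most $n/2$, and $B_2$ the event that some vertex has backward-reachable set of size strictly between $N$ and $n/2$. Then $A_\eps^c = B_1 \cup B_2$, so by Lemma~\ref{lem:add_probs} it suffices to show that each of $\Prb(B_1)$ and $\Prb(B_2)$ is very small. The key structural observation is that any forward-reachable set $R(x)$ is \emph{out-closed} (for every line $\ell$ meeting $R(x)$, the winner $w(\ell)$ lies in $R(x)$), and any backward-reachable set $R^{-1}(y)$ is \emph{in-closed} (for every line $\ell$ with $w(\ell)\in R^{-1}(y)$, the whole line lies in $R^{-1}(y)$). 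Since winners of distinct lines are independent, for any fixed $S\subseteq V(n,\mbf{k})$,
\[
    \Prb(S\text{ is out-closed}) = \prod_{\ell\in L(S)} \frac{|S\cap\ell|}{k_{i(\ell)}}, \qquad
    \Prb(S\text{ is in-closed}) = \prod_{\ell\,:\,1\leq |S\cap\ell|<k_{i(\ell)}}\left(1-\frac{|S\cap\ell|}{k_{i(\ell)}}\right),
\]
where $L(S)$ denotes the set of lines meeting $S$ and $i(\ell)$ the direction of $\ell$. A union bound over candidate witnesses then yields
\[
    \Prb(B_1) \leq \sum_{s=2}^{\lfloor n/2\rfloor}\sum_{|S|=s}\Prb(S\text{ out-closed}), \quad
    \Prb(B_2) \leq \sum_{s=N+1}^{\lfloor n/2\rfloor}\sum_{|S|=s}\Prb(S\text{ in-closed}).
\]

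For $B_1$, I would partition the inner sum by the profile of projection sizes $m_i=|\pi_i(S)|$. Coordinate-wise AM-GM yields $\Prb(S\text{ out-closed})\leq \prod_i (s/(k_i m_i))^{m_i}$, and the Loomis--Whitney inequality $\prod_i m_i\geq s^{n-1}$ together with the trivial lower bound $m_i\geq s/k_i$ then gives enough decay in $s$. To count the sets $S$ with a prescribed projection profile, I would encode each $S$ via a rooted out-tree in the induced subgraph of $\vv{H}(n,\mbf{k})$ on $S$: there are at most $(s-1)!\,(nK)^{s-1}$ such rooted trees on $s$ labelled vertices, built by successively choosing, for each new vertex, a parent among the previously chosen ones, a direction, and a child on the line through the parent in that direction. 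Each tree edge contributes a factor $1/k_{d(e)}\leq 1/2$ to the probability of the tree appearing in $\Lnk$, and crucially in a rooted out-tree distinct edges use distinct lines (since the winner of a line is unique, any two edges sharing a line would share both endpoints), so these factors are independent. The very small values of $s$ are handled by direct inspection: for $s=2$ an out-closed pair containing a non-sink must consist of the non-sink together with a sink on a common line, contributing at most $\mathrm{poly}(n,K)\cdot 2^{-n}$ in total.

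For $B_2$, the hypothesis $s>N=(1+\eps)K\log K$ is what makes the bound work. Using $1-x\leq e^{-x}$ on each factor gives $\Prb(S\text{ in-closed})\leq \exp\bigl(-\sum_{\ell\text{ bdry}}|S\cap\ell|/k_{i(\ell)}\bigr)$, and an exchange-of-summation calculation shows the exponent is $\Omega(n|S|/K)$ in our regime $K\leq \delta\sqrt{n/\log n}$. The candidate in-closed sets, however, cannot be counted by the naive $\binom{\prod k_i}{s}$ bound (which is too generous by a factor of roughly $\log K$ in the exponent). The refined count uses the structural fact that an in-closed set decomposes as the union of the complete lines forced by its non-source vertices together with a collection of isolated sources, and the number of such structures is much smaller than an arbitrary subset count. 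The threshold $N=(1+\eps)K\log K$ is calibrated precisely so that the $\eps$-slack absorbs the remaining multiplicative factors. The main obstacle will be carrying out this refined enumeration cleanly; once that is done, summing the resulting bounds for $B_1$ and $B_2$ via Lemma~\ref{lem:union_bound} and combining them by Lemma~\ref{lem:add_probs} yields the desired very high probability of $A_\eps$.
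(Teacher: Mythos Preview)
Your high-level split into $B_1$ and $B_2$ is exactly what the paper does (it calls the corresponding counts $X_m$ and $Y_m$), and the out-closed/in-closed characterisations of $R(x)$ and $R^{-1}(y)$ are correct. But from there the execution diverges from the paper and, as written, does not close.

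For $B_1$ you mix two incompatible strategies. You begin by bounding $\Prb(S\text{ out-closed})$ for a fixed set $S$ via AM--GM and Loomis--Whitney, but then switch to a tree-encoding in which ``each tree edge contributes a factor $1/k_{d(e)}$''. These are different probabilities: the latter is merely $\Prb(\text{tree edges appear})\leq 2^{-(s-1)}$, which is far weaker than out-closedness and is swamped by any reasonable tree count. Moreover, your tree count $(s-1)!\,(nK)^{s-1}$ carries a spurious $(s-1)!$ from the insertion order; the correct bound is $(enK)^{s-1}$ for rooted subtrees of a graph of maximum degree $nK$. Finally, note that for the AM--GM/LW step you need $m_i=|\pi_{-i}(S)|$ (the number of direction-$i$ lines meeting $S$), not $|\pi_i(S)|$ as you wrote; with $|\pi_i(S)|$ neither the AM--GM bound nor Loomis--Whitney holds in the form you state.

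For $B_2$ you correctly diagnose that the naive subset count $\binom{\prod k_i}{s}$ is off by a factor of $\log K$ in the exponent, but the ``refined enumeration'' you propose is not actually a structural restriction on $S$: in-closedness is a random event, so every $S$ must still appear in the union bound. The paper avoids this issue entirely by never summing over sets. Instead, it sums over rooted spanning trees $T$ of order $m$ (at most $K^n(enK)^{m-1}$ of them) and bounds, for each $T$, the probability that \emph{every incomplete line of $T$ has its winner in $T$} (for $X_m$) or \emph{outside $T$} (for $Y_m$). A simple double-count shows $T$ has at least $m(n-m)$ incomplete lines, giving probabilities at most $\prod_{i=1}^{n-m}k_i^{-m}$ and $(1-1/K)^{m(n-m)}$ respectively; the threshold $m>(1+\eps)K\log K$ then makes $K^{1/m}(1-1/K)<1$ strictly, which is exactly what drives the $Y_m$ estimate. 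This tree-plus-incomplete-line argument replaces both your Loomis--Whitney step and your refined enumeration, and is what you should use.
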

Our proof of Lemma \ref{lem:foothold} follows the approach used by \cite*{bollobas1993connectivity} to study a random subgraph of $\vv{H}(n,\mbf{2})$ with a similar distribution to that of ${\vv{L}(n, \mbf{2})}$. Explicitly, \citeauthor*{bollobas1993connectivity} consider the subgraph $\vv{Q}^n_p$ of $\vv{H}(n,\mbf{2})$ in which each directed edge is kept independently with probability $p$ (where  $p$ is close to $1/2$). In their model, two vertices differing in a single coordinate may be joined by no directed edges, one directed edge or even both directed edges. This is in contrast to ${\vv{L}(n, \mbf{2})}$ where such vertices are always joined by exactly one directed edge.
We remark that, although the models are quite different, the main result of \cite*{bollobas1993connectivity} is similar to Corollary~\ref{cor:constant_K} (see Section \ref{sec:tech_tight}) in the case that $K = 2$. Indeed, they show that with high probability as $n \to \infty$, every non-sink can reach every non-source in $\vv{Q}^n_{1/2}$, which is the same up to the model.

Imitating \citeauthor*{bollobas1993connectivity}, for each $1\leq m \leq \prod_{i=1}^n k_i$, define the random variable $X_m$ to be the number of vertices of $\Lnk$ which can reach exactly $m$ vertices (recall that every vertex can reach and be reached from itself). Analogously, let $Y_m$ be the number of vertices which can be reached from exactly $m$ vertices. Observe that $A_{\eps}$ can equivalently be defined as the event that $X_m=0$ for all $2\leq m\leq n/2$ and $Y_m=0$ for all $N< m\leq n/2$.

Given a set $S\subseteq V(n,\mbf{k})$, we say that a line of $V(n,\mbf{k})$ in coordinate $i$ is an \emph{incomplete line of $S$} if its intersection with $S$ has size other than $0$ or $k_i$.  If $v$ can reach exactly $m$ vertices, then running a depth first search from $v$ gives a tree $T$ with $m$ vertices, in which all edges are oriented away from $v$ and the winner of every incomplete line of~$T$ is in $T$. Similarly, if $v$ can be reached from exactly $m$ vertices, then we may build a tree~$T$ with $m$ vertices where all the edges are oriented towards $v$ and the winner of every incomplete line of $T$ is outside of $T$. It follows that $X_m$ and $Y_m$ are bounded above by the number of pairs $(v, T)$, where $T$ is an appropriate tree with $m$ vertices rooted at~$v$.

We will use the following folklore result to upper bound the numbers of such trees in $H(n,\mbf{k})$. A short combinatorial proof is given in~\cite*{mcdiarmid2021component}.

\begin{lemma}
    \label{lem:trees}
    If $G$ is a graph with maximum degree $\Delta$, then for each $m\in \N$ there are at most $(e\Delta)^{m-1}$ trees of order $m$ in $G$ that contain a given vertex.
\end{lemma}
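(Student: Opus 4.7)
The plan is to establish Lemma~\ref{lem:trees} via a weighted encoding of rooted subtrees as embeddings of abstract plane tree shapes into $G$. Root each subtree $T$ of order $m$ containing $v$ at $v$ and, for each $w\in T$, write $c_w$ for the number of children of $w$ in $T$; note $\sum_w c_w=m-1$. A \emph{plane ordering} $\pi$ of $T$ is a linear ordering of the children at each vertex, so $T$ has exactly $\prod_w c_w!$ plane orderings. Each plane subtree $(T,\pi)$ determines, and is determined by, an abstract plane shape $s$ on $m$ nodes together with the sequence of $m-1$ ``neighbor choices'' made during the depth-first traversal of $s$ starting at $v$: at each forward step we pick one of the at most $\Delta$ neighbors of the current vertex in $G$ as the next child. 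Hence, for each fixed shape $s$, the number of plane subtrees with shape $s$ is at most $\Delta^{m-1}$.

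Passing from plane subtrees back to unordered subtrees via $1=(\prod_w c_w!)/(\prod_w c_w!)$ gives
\[
    \#\{\text{subtrees of order } m \text{ containing } v\}
    \;=\; \sum_{(T,\pi)}\frac{1}{\prod_w c_w!}
    \;\leq\; \Delta^{m-1}\sum_{s}\frac{1}{\prod_w c_w(s)!},
\]
where the outer sum on the right runs over abstract plane tree shapes on $m$ nodes. The weighted count $F(x)=\sum_{s}x^{|s|}/\prod_w c_w(s)!$ satisfies the functional equation $F(x)=xe^{F(x)}$, which follows from the decomposition of a plane tree as a root together with an ordered $k$-tuple of plane subtrees, weighted by $1/k!$. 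A standard application of Lagrange inversion then gives $[x^m]F(x)=m^{m-2}/(m-1)!$.

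Finally, the elementary inequality $m^{m-2}/(m-1)!\leq e^{m-1}$ holds for all $m\geq 1$: the case $m=1$ is equality, and the ratio of consecutive terms is $(1+1/m)^{m-1}/e\leq 1$, so the sequence is non-increasing. Combining, the number of subtrees of order $m$ containing $v$ is at most $\Delta^{m-1}\cdot m^{m-2}/(m-1)!\leq(e\Delta)^{m-1}$, as required.

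The main obstacle is identifying the right weighted count. A naive encoding that records a Dyck path of up/down DFS moves together with a neighbor choice per forward step yields only $C_{m-1}\Delta^{m-1}\leq(4\Delta)^{m-1}$, which is asymptotically worse than $(e\Delta)^{m-1}$ since the Catalan number grows like $4^{m-1}$. The refinement above bypasses this loss by weighting each plane shape by $1/\prod_w c_w!$ before summing, trading the Catalan growth for the tree-function growth $m^{m-2}/(m-1)!$ supplied by the equation $F=xe^{F}$.
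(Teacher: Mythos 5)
Your proof is correct, but note that the paper itself supplies no proof of Lemma~\ref{lem:trees}: it records the bound as a folklore result and points to McDiarmid, Scott, and Withers (2021) for a short combinatorial proof, so you are being compared against a citation rather than an in-paper derivation. On its own terms your argument is sound and self-contained. The DFS encoding does show that each rooted plane shape on $m$ nodes admits at most $\Delta^{m-1}$ embeddings into $G$ rooted at $v$ (the map from a plane subtree to its sequence of neighbour choices is injective, and the fact that some sequences fail to produce valid subtrees only helps an upper bound); the weight $1/\prod_w c_w!$ exactly cancels the $\prod_w c_w!$ plane orderings of each unordered subtree; the functional equation $F=xe^{F}$ is the correct decomposition of a weighted plane tree into a root together with an ordered $k$-tuple of subtrees carrying a factor $1/k!$; Lagrange inversion gives $[x^m]F=\tfrac{1}{m}[t^{m-1}]e^{mt}=m^{m-2}/(m-1)!=m^{m-1}/m!$; and your ratio computation $(1+1/m)^{m-1}/e\leq 1$ correctly propagates the bound $m^{m-2}/(m-1)!\leq e^{m-1}$ from the base case $m=1$. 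The usual folklore routes to the same constant go either via an exact count of $m$-vertex subtrees of the infinite $\Delta$-branching tree (cycle lemma) or via a Galton--Watson comparison; your weighted-plane-tree identity is a clean alternative, and your closing remark correctly diagnoses why the naive Dyck-path encoding loses a Catalan factor and how the $1/\prod_w c_w!$ weighting repairs it.
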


When applied to $H(n,\mbf{k})$, Lemma~\ref{lem:trees} gives that there are at most $(enK)^{m-1}$ trees of order~$m$ in $H(n,\mbf{k})$ that contain a given vertex. Lemma~\ref{lem:foothold} follows from the next two lemmas, which handle the $X_m$ and $Y_m$ parts of the statement respectively.

\begin{lemma}\label{lem:X_small}
    With very high probability, $X_m=0$ for all $2\leq m\leq n/2$.
\end{lemma}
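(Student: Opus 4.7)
The plan is to apply the first-moment method to $Z = \sum_{m=2}^{\lfloor n/2\rfloor} X_m$, the number of vertices of $\vv{L}(n,\mbf{k})$ whose out-component has size in $[2,n/2]$. By Lemma~\ref{lem:markov} it will suffice to show that $\mathbb{E}[Z]$ is very small; then $\Prb[Z=0]$ is very high, which is exactly the claim of the lemma.

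Fixing $v$ and $m$, if the out-component of $v$ has size $m$ then it is a connected subset $V$ of $H(n,\mbf{k})$ of size $m$ containing $v$ which is \emph{out-closed}, meaning every line $\ell$ of $\vv{H}(n,\mbf{k})$ meeting $V$ has its winner in $V$. Writing $P(V) = \prod_{\ell\,:\,|\ell \cap V| \geq 1}|\ell \cap V|/|\ell|$ for the out-closure probability and using Lemma~\ref{lem:trees} to bound the number of size-$m$ connected subsets of $H(n,\mbf{k})$ through $v$ by $(enK)^{m-1}$ (each such subset admits a spanning tree), I would first reduce to
\[
\mathbb{E}[X_m] \leq \Bigl(\prod_i k_i\Bigr)(enK)^{m-1}\max_V P(V).
\]

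The heart of the argument is the bound on $\max_V P(V)$. My plan is to isolate the contribution of the ``singleton'' lines, i.e., lines $\ell$ with $|\ell \cap V| = 1$, each of which contributes a factor $1/k_{i(\ell)} \leq 1/2$, so that $P(V) \leq 2^{-|\mathcal{L}_1(V)|}$ where $|\mathcal{L}_1(V)|$ is the number of singleton lines. A vertex--line counting argument gives $|\mathcal{L}_1(V)| = mn - \sum_{u \in V} d^*_u$, where $d^*_u$ is the number of coordinate directions in which $u$ has an $H$-neighbour in $V$; bounding $d^*_u \leq |N_H(u) \cap V|$ gives $\sum_u d^*_u \leq 2|E(H[V])|$. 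The edge-isoperimetric inequality in the Hamming graph (Harper's theorem for $K=2$, with analogues for general $K$) then yields $|E(H[V])| \leq \tfrac{1}{2}(K-1)m\log_K m$; in the regime $K \leq \delta\sqrt{n/\log n}$ a short calculation shows $(K-1)\log_K m = o(n)$ uniformly for $m \leq n/2$, so $|\mathcal{L}_1(V)| \geq m(n-o(n))$ and $P(V) \leq 2^{-m(n-o(n))}$. Plugging back in,
\[
\log\mathbb{E}[X_m] \leq n\log K + m\log(enK) - m(n-o(n))\log 2 \leq -\Omega(mn)
\]
uniformly for $m \in [2,\lfloor n/2\rfloor]$ once $n$ is sufficiently large relative to $K$. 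Summing this bound over $m$ leaves $\mathbb{E}[Z]$ extremely small, and Lemma~\ref{lem:markov} closes the argument.

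The hard part will be controlling $|E(H[V])|$ uniformly across all connected $V$ of size $m$. For small $m$ this is essentially free because any connected $V$ is forced to be nearly tree-like, with $|E(H[V])| \approx m-1$; but for larger $m$ one needs the full strength of the edge-isoperimetric inequality in the Hamming graph. A secondary subtlety is ensuring the bound behaves well when $\mbf{k}$ is non-symmetric, though this is routine since $1/k_i \leq 1/2$ for every $i$.
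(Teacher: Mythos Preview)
Your first-moment setup is the right one, and the paper does essentially the same thing. The genuine gap is in the step where you bound each singleton-line factor by $1/k_{i(\ell)}\leq 1/2$, giving $P(V)\leq 2^{-|\mathcal{L}_1(V)|}$. This throws away a factor of roughly $(K/2)$ per singleton line, and that loss is fatal when $K$ grows with $n$. Concretely, take $k_i=K$ for all $i$ and $m=2$. Your displayed inequality becomes
\[
\log\mathbb{E}[X_2]\ \leq\ n\log K + 2\log(enK) - 2(n-o(n))\log 2,
\]
and for any $K\geq 5$ the right-hand side is positive and growing, so the claimed bound $\leq -\Omega(mn)$ is simply false. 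In the regime $K\sim\delta\sqrt{n/\log n}$ the term $n\log K$ is of order $n\log n$, while the negative term is only of order $n$; the bound is then useless for all bounded $m$. The definition of ``very small'' requires $\mathbb{E}[Z]\leq \prod_i k_i^{-c}$, which in this symmetric case is $K^{-cn}$, so you cannot afford to replace the $1/k_i$ factors by $1/2$.

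The paper's fix is to keep the $1/k_i$ factors. For each singleton line in direction $i$ the winner lies in $T$ with probability exactly $1/k_i$; since there are at most $m$ singleton lines in any one direction and at least $m(n-m)$ singleton lines in total, the worst case (ordering $k_1\leq\cdots\leq k_n$) gives the product bound $\prod_{i=1}^{n-m}k_i^{-m}$. This is what beats the vertex count $\prod_i k_i$ and yields $\mathbb{E}[X_m]\leq \prod_{i=1}^{n-m}k_i^{-m/3}$ uniformly in $2\leq m\leq n/2$.

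A secondary point: the edge-isoperimetric inequality is unnecessary and, for non-symmetric $\mbf{k}$, not a standard off-the-shelf result. The elementary observation that any two vertices of $V$ lie on at most one common line already gives $|\mathcal{L}_1(V)|\geq mn-2\binom{m}{2}=m(n-m+1)$, which (since $m\leq n/2$) is at least $mn/2$. This is all the paper needs, and it works for arbitrary $\mbf{k}$ with no extra effort.
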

\begin{proof}
    We need to show that there exist universal $c,\delta>0$ such that if $K\leq \delta \sqrt{n/\log(n)}$, then $X_m=0$ for all $2\leq m\leq n/2$ with probability at least $1-\prod_{i=1}^{n}{k_i^{-c}}$.
    Thus, let $\delta>0$ be small and assume that $K\leq \delta \sqrt{n\log(n)}$.

    Fix $2\leq m\leq n/2$ and let $T$ be a tree of order $m$ in $H(n,\mbf{k})$. Given the discussion preceding Lemma~\ref{lem:trees}, we wish to upper bound the probability that the winner of every incomplete line of $T$ is in $T$. To this end, it will be helpful to lower bound the number of lines containing exactly one vertex of $T$. Each vertex of $T$ is in $n$ lines, so there are~$mn$ pairs $(u,l)$ consisting of a vertex $u$ in $T$ and a line $l$ containing it. For each pair of distinct vertices $u$ and $v$ in $T$, if $u$ and $v$ are contained in some common line $l$, then delete the pairs $(u,l)$ and $(v,l)$ from this set. Since any pair of vertices have at most one common line, this process removes at most $2\binom{m}{2}$ pairs from the set, and we deduce that there are at least $mn-m^2$ lines of $V(n,\mbf{k})$ which contain exactly one vertex of $T$.

    The winner of each of these lines is in $T$ independently. Since we want to upper bound the probability that the winner of all of these lines is in $T$, we may assume that they are all in as low a coordinate direction as possible (recall that $k_1\leq\dots\leq k_n$ by assumption). At most $m$ incomplete lines are in any given coordinate direction, so the probability that the winner of every incomplete line of $T$ is in $T$ is at most $\prod_{i=1}^{n-m}{k_i^{-m}}$.

    By Lemma~\ref{lem:trees}, the number of pairs $(v,T)$ where $v\in V(n,\mbf{k})$ and $T$ is a tree of order~$m$ in $H(n,\mbf{k})$ containing $v$ is at most $(enK)^{m-1} \cdot \prod_{i=1}^n{k_i}$, so by the discussion before that lemma we have
    \begin{equation*}
            \mathbb{E}[X_m]
            \leq \frac{(enK)^{m-1} \cdot \prod_{i=1}^n{k_i}}{\prod_{i=1}^{n-m}{k_i^m}}
            \leq \frac{K^m(enK)^{m-1} \cdot \prod_{i=1}^{n-m}{k_i}}{\prod_{i=1}^{n-m}{k_i^m}}
            \leq \frac{(enK^2)^m}{\prod_{i=1}^{n-m}{k_i^{m-1}}}.
    \end{equation*}

    Applying the fact that $m-1\geq m/2$ (since $m\geq 2$), we obtain
    \begin{equation*}
        \begin{split}
            \mathbb{E}[X_m]
            \leq \frac{(enK^2)^m}{\prod_{i=1}^{n-m}{k_i^{m/2}}}
            \leq \left(\frac{enK^2}{\prod_{i=1}^{n-m}{k_i^{1/2}}}\right)^m
            \leq \prod_{i=1}^{n-m}{k_i^{-m/3}}
        \end{split}
    \end{equation*}
    where the final inequality follows by taking $\delta$ small enough that $enK^2\leq 2^{n/12}$, which is at most $\prod_{i=1}^{n-m}{k_i^{1/6}}$ since $m\leq n/2$.

    \begin{claim}\label{claim:X}
        If $\delta$ is small enough, then $\prod_{i=1}^{n-m}{k_i^{-m/3}}\leq \prod_{i=1}^{n}{k_i^{-1/2}}$ for all $2\leq m\leq n/2$.
    \end{claim}
    \begin{proof}
        After rearranging, we need to show that $\prod_{i=n-m+1}^{n}{k_i^{1/2}}\leq \prod_{i=1}^{n-m}{k_i^{m/3-1/2}}$ for all $2\leq m\leq n/2$. The left-hand side of this inequality is at most $K^{m/2}$ and the right-hand side is at least $2^{(n-m)(m/3-1/2)}$. Raising both sides to the power of $2/m$, it is sufficient that $K\leq 2^{(n-m)(2/3-1/m)}$. The right-hand side of this inequality is at least $2^{n/12}$, and we can take $\delta$ small enough that $K\leq 2^{n/12}$, so the claim is proved.
    \end{proof}

    Applying the claim, we have
    \[
        \sum_{m=2}^{n/2}\mathbb{E}[X_m]\leq \frac{n}{2}\cdot\prod_{i=1}^{n}{k_i^{-1/2}}.    \]
    By taking $\delta$ to be sufficiently small we can ensure that this is at most $\prod_{i=1}^{n}{k_i^{-c}}$ for some $c>0$. Lemma~\ref{lem:markov} now yields that $X_m=0$ for all $2\leq m \leq n/2$ with failure probability at most~$\prod_{i=1}^{n}{k_i^{-c}}$, as required.
\end{proof}

The next lemma deals with the $Y_m$ part of Lemma~\ref{lem:foothold}. Note that Lemma~\ref{lem:foothold} follows immediately from Lemma~\ref{lem:add_probs}, Lemma~\ref{lem:X_small}, and Lemma~\ref{lem:Y_small}.

\begin{lemma}\label{lem:Y_small}
    With very high probability, $Y_m=0$ for all $N< m\leq n/2$.
\end{lemma}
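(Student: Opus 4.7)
The plan mirrors the proof of Lemma~\ref{lem:X_small}, switching from out-reachability to in-reachability. Suppose $v$ is reached from exactly $m$ vertices in $\Lnk$, and let $S$ denote the set of these vertices. Take $T$ to be any spanning tree of the subgraph of $\Lnk$ induced on $S$, rooted at $v$ with edges oriented toward $v$. Then: (i) every tree edge lies in $\Lnk$, and (ii) the winner of every incomplete line of $T$ lies outside $T$. Property (ii) holds because $S$ is closed under incoming edges: if the winner $w$ of some incomplete line $l$ of $S$ were in $S$, the edges from the other vertices of $l$ to $w$ would force them all into $S$, contradicting the incompleteness of $l$. Hence $Y_m$ is bounded above by the number of pairs $(v,T)$ satisfying (i) and (ii).

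The crucial structural observation is that (i) and (ii) jointly force every tree edge of a valid pair to lie in a line that is \emph{complete} in $T$: otherwise the parent of some non-root vertex would need to be simultaneously the winner of an incomplete line (by (i)) and outside $T$ (by (ii)), a contradiction. This cleanly separates the two probability factors discussed below, since they involve disjoint sets of lines.

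As in Lemma~\ref{lem:X_small}, Lemma~\ref{lem:trees} yields at most $\prod_i k_i \cdot (enK)^{m-1}$ such pairs. For the per-pair probability I would combine the contribution $\leq \prod 1/k_{i(l)}$ from (i), over the (necessarily complete in $T$) lines containing tree edges, with the closure probability $\prod_{l \text{ incomplete}} (k_i - j_l)/k_i \leq \exp\bigl(-\sum_l j_l/k_i\bigr)$, where $j_l = |l \cap T|$. The key identity
\[
\sum_{l \text{ incomplete}} \frac{j_l}{k_i} \;=\; m\sum_i \frac{1}{k_i} - C,
\]
with $C$ the total number of lines complete in $T$, converts the closure bound into an isoperimetric problem on $H(n,\mbf{k})$: I would upper-bound $C$ by using that each vertex of $T$ lies in at most a limited number of complete lines (analogous in spirit to the ``at least $mn - m^2$ lines with exactly one vertex of $T$'' step in Lemma~\ref{lem:X_small}), together with a careful assignment of lines to the smallest-$k_i$ coordinates as in Claim~\ref{claim:X}. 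Summing the resulting per-$m$ estimate over $m \in (N, n/2]$ should then yield the required very small probability.

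The hard part will be obtaining a closure estimate sharp enough for values of $m$ close to $N = (1+\eps)K\log(K)$, where the tree-enumeration factor $\prod_i k_i \cdot (enK)^{m-1}$ is the most stubborn. The threshold $N$ is exactly calibrated so that a closure exponent of order $mn/K$ dominates $n\log K + (m-1)\log(enK/2)$ once $m > N$; tracking the constant $(1+\eps)$ carefully through this inequality, and exploiting both the lower bound $m > N$ (ensuring sufficiently many incomplete-line constraints) and the upper bound $m \leq n/2$ (ensuring each vertex of $T$ contributes enough to them), will be the most delicate part of the argument.
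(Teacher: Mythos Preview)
Your overall strategy matches the paper's—bound $\mathbb{E}[Y_m]$ by (number of rooted trees) $\times$ (closure probability), then sum over $m$ and apply Markov—but you are working much harder than necessary, and the one genuinely uncertain step in your sketch (the isoperimetric bound on $C$) is entirely avoidable.

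The paper discards property (i) and uses only (ii). It then restricts attention to the lines meeting $T$ in \emph{exactly one} vertex; by the same count you cite from Lemma~\ref{lem:X_small} there are at least $m(n-m)$ of these. For each such line the winner lies outside $T$ with probability at most $1-1/K$, independently, so the closure probability is at most $(1-1/K)^{m(n-m)}$. Combining with Lemma~\ref{lem:trees} gives
\[
\mathbb{E}[Y_m]\;\leq\;\Bigl[enK^2\cdot\bigl(K^{1/m}(1-\tfrac1K)\bigr)^{\,n-m}\Bigr]^m,
\]
and the one-line estimate $K^{1/m}(1-1/K)\leq \exp\bigl(-\eps/((1+\eps)K)\bigr)$, valid precisely when $m>N=(1+\eps)K\log(K)$, is the whole reason the threshold $N$ appears. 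Everything after that is routine.

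Your structural observation that tree edges must lie in complete lines is correct but buys nothing: the factor from (i) is at most $1$, and invoking it forces you through the identity $\sum_{l\text{ inc.}} j_l/k_{i(l)}=m\sum_i 1/k_i - C$ and then into bounding $C$. You assert this is ``analogous'' to the $mn-m^2$ step, but that step counted lines with $j_l=1$, not complete lines; bounding $C$ for an arbitrary $m$-vertex subset of the Hamming graph is a separate isoperimetric question that you have not actually carried out. The paper sidesteps all of this by never introducing $C$ in the first place. Drop (i), drop the identity, and use the crude $(1-1/K)^{m(n-m)}$ bound directly.
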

\begin{proof}
    We need to show that there exist $c_\eps,\delta_\eps>0$ depending only on $\eps$ such that if $K\leq \delta_\eps \sqrt{n/\log(n)}$, then $Y_m=0$ for all $N< m\leq n/2$ with failure probability at most $\prod_{i=1}^{n}{k_i^{-c_\eps}}$. In fact, we will show a stronger failure probability of at most $e^{-c_\eps n\log(K)}$. Thus, let $\delta_\eps>0$ be small and assume that $K\leq \delta_\eps \sqrt{n/\log(n)}$.

    We will employ a similar strategy to that used to prove Lemma~\ref{lem:X_small}. Fix $N< m\leq n/2$ and let $T$ be a tree of order $m$ in $H(n,\mbf{k})$. We will upper bound the probability that the winner of every incomplete line of $T$ is not in $T$ using the lower bound of $mn - m^2$ on the number of incomplete lines of $T$ (from the proof of Lemma~\ref{lem:X_small}).
    The winner of each of these lines is in $T$ independently, so the probability that all the winners are outside~$T$ is at most $(1-1/K)^{m(n-m)}$.

    Hence, by Lemma~\ref{lem:trees} and the discussion preceding it, we have
    \begin{equation*}
        \begin{split}
            \mathbb{E}[Y_m]
            & \leq K^n \cdot (enK)^{m-1} \cdot \left(1-\frac{1}{K}\right)^{m(n-m)} \\
            & \leq \left[enK^2 \cdot \left(K^{1/m}\left(1-\frac{1}{K}\right)\right)^{n-m}\right]^m.
        \end{split}
    \end{equation*}
    Using that $m> N=(1+\eps)K\log(K)$ and $1+x\leq e^x$ for all $x$ we have
    \begin{equation}\label{eq:K_constant_different}
        K^{1/m}\left(1-\frac{1}{K}\right)
        \leq K^{1/(1+\eps)K\log(K)}e^{-1/K}
        = \exp\left(\frac{-\eps}{(1+\eps)K}\right).
    \end{equation}
    Assuming that $\delta_\eps \leq 1/2$, we have that $K^2\leq n$. Applying this and $m\leq n/2$ yields
    \[
        \mathbb{E}[Y_m]
        \leq \left[en^2\cdot \exp\left(\frac{-\eps(n-m)}{(1+\eps)K}\right)\right]^m
        \leq \left[en^2\cdot \exp\left(\frac{-\eps n}{2(1+\eps)K}\right)\right]^m.
    \]
    By making $\delta_\eps$ small enough that $\eps n/(4(1+\eps)K)\geq \log(en^2)$ and using the fact that $m\geq K\log(K)$, we have
    \[
        \mathbb{E}[Y_m]
        \leq \exp\left(\frac{-\eps mn}{4(1+\eps)K}\right)
        \leq \exp\left(\frac{-\eps n\log(K)}{4(1+\eps)}\right).
    \]
    Thus,
    \[
        \sum_{N<m\leq n/2}\mathbb{E}[Y_m]
        \leq \frac{n}{2}\cdot \exp\left(\frac{-\eps n\log(K)}{4(1+\eps)} \right) \leq \exp\left( - \frac{\eps}{8(1+\eps)} n \log(K)\right)
    \]
    for sufficiently large (depending only on $\eps$) $n$.
    By making $\delta_\eps$ sufficiently small relative to~$\eps$, we can ensure that we only need to consider values for $n$ which are sufficiently large, and we find that $\sum_{N<m\leq n/2}\mathbb{E}[Y_m]$ is at most $e^{-c_{\eps} n\log(K)}$ for some $c_\eps>0$ depending only on $\eps$.
    Lemma~\ref{lem:markov} now yields that $Y_m=0$ for all $N< m \leq n/2$ with failure probability at most~$e^{-c_\eps n\log(K)}$, as required.
\end{proof}

\section{Proof of Theorem~\ref{thm:main_grids}: a strongly connected component}
\label{sec:strong_cmpnt}

The next step in the proof of Theorem~\ref{thm:main_grids} will be to find a large strongly connected component in $\Lnk$.
To this end, we make the following two definitions.
\begin{definition}[Good vertex]
    A vertex of $\Lnk$ is \emph{good} if it wins at least $n/3K$ but at most $3n/4$ of its lines.
\end{definition}
\begin{definition}[Event $B$]
    Let $B$ be the event that every good vertex of $\Lnk$ can reach all other good vertices along directed paths, that is, all good vertices are in the same strongly connected component of $\Lnk$. 
\end{definition}
The main result of this section is that $B$ occurs wehp.

\begin{lemma}
    \label{lem:good_cmpnt}
    The event $B$ occurs with extremely high probability.
\end{lemma}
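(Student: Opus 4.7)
The plan is to show that, for each good vertex $u$ of $\Lnk$, the forward-reachable set $R^+(u)\coloneqq \{v : u \text{ can reach } v\}$ and the backward-reachable set $R^-(u)\coloneqq \{v : v \text{ can reach } u\}$ each satisfy $|R^+(u)|,|R^-(u)|>|V(n,\mbf{k})|/2$ with extremely high probability. Once this is known, for any two good vertices $u,v$ a pigeonhole argument yields a vertex $w \in R^+(u) \cap R^-(v)$, giving a directed path from $u$ to $v$ through $w$. A union bound over the at most $|V(n,\mbf{k})|\leq K^n$ good vertices, absorbed by Lemma~\ref{lem:union_bound}, then produces $B$.

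Fix a good vertex $u$ and focus on $R^+(u)$; the case of $R^-(u)$ is analogous. If $R^+(u)=S$ with $|S|=m$, then $S$ contains $u$, is connected in $H(n,\mbf{k})$ (because every vertex of $S$ is reached from $u$ along edges internal to $S$), and is \emph{closed} in the sense that every line meeting $S$ has its winner in $S$, since otherwise an edge would exit $S$. As distinct lines have independent winners, closedness gives
\[
    \Prb(R^+(u)=S) \leq \prod_{\ell \text{ a boundary line of } S}\frac{|\ell \cap S|}{k_{i(\ell)}} \leq \left(1-\frac{1}{K}\right)^{b(S)},
\]
where $b(S)$ counts lines meeting both $S$ and $V(n,\mbf{k})\setminus S$. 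Lemma~\ref{lem:trees} applied to $H(n,\mbf{k})$ shows that there are at most $(enK)^{m-1}$ connected sets $S$ of size $m$ containing $u$.

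Combining these ingredients with a suitable line-isoperimetric lower bound for Hamming graphs of the form $b(S)\geq c\, m\log(|V(n,\mbf{k})|/m)/\log K$ valid for $|S|=m\leq |V(n,\mbf{k})|/2$ (the natural analogue of the edge-isoperimetric estimates underlying \cite{mcdiarmid2021component}), the product $(enK)^{m-1}\exp(-b(S)/K)$ sums over $m$ above the out-degree threshold to an extremely small total under the hypothesis $K\leq \delta\sqrt{n/\log n}$. The small $m$ regime is handled deterministically: a good vertex wins at most $3n/4$ of its $n$ lines, so its out-degree in $\Lnk$ is at least $n/4$, forcing $|R^+(u)|\geq n/4+1$. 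The argument for $R^-(u)$ uses that a good vertex wins at least $n/(3K)$ lines and hence has in-degree at least $n/(3K)=\Omega(\sqrt{n\log n})$ under our hypothesis on $K$, which still leaves a large enough range for the isoperimetric step to close the gap up to $|V(n,\mbf{k})|/2$.

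The main obstacle will be balancing the isoperimetric gain against the combinatorial costs: the tree count $(enK)^{m-1}$ and the union bound over $K^n$ good vertices together contribute an exponent of order $n\log K + m\log(nK)$, and the isoperimetric savings from $b(S)/K$ must comfortably dominate this at every $m$ in the intermediate range for the bound to come out extremely small. The hypothesis $K\leq \delta\sqrt{n/\log n}$ is precisely what makes this arithmetic close. The adaptation compared to the hypercube percolation of \cite{mcdiarmid2021component} is forced by the fact that edges sharing a line of $\Lnk$ are correlated through a common winner, which is exactly why the relevant combinatorial quantity has to be the line-boundary rather than the edge-boundary.
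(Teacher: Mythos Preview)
Your plan is a Peierls-style first-moment argument: bound $\Prb(|R^+(u)|=m)$ by summing over connected closed sets $S\ni u$ of size $m$, using the tree count $(enK)^{m-1}$ for the entropy and a line-isoperimetric lower bound on $b(S)$ for the energy, then pigeonhole $R^+(u)\cap R^-(v)$. This is a different route from the paper, which argues locally: it shows that with extremely high probability every vertex has a good $H(n,\mbf{k})$-neighbour, and that every good vertex can reach every good vertex within $H(n,\mbf{k})$-distance $3$; chaining these along an undirected $u$--$v$ path yields a directed path from $u$ to $v$.

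The gap in your approach is that the entropy--energy balance fails for $m$ near $|V(n,\mbf{k})|/2$, and the isoperimetric inequality you invoke is not true with a uniform constant. For $\mbf{k}=(K,\dots,K)$ and the half-space $S=\{v:v_1\leq K/2\}$ one has $|S|=K^n/2=m$ but $b(S)=K^{n-1}=2m/K$, which forces $c\leq 2\log K/(K\log 2)\to 0$ in your claimed bound $b(S)\geq c\,m\log(|V|/m)/\log K$. More fatally, even the correct minimum $b(S)\approx 2m/K$ at $m=|V|/2$ gives $(1-1/K)^{b(S)}\approx e^{-2m/K^2}$, which is crushed by the tree count $(enK)^{m-1}\approx e^{m\log(enK)}$: you would need $\log(enK)<2/K^2$, impossible for $K\geq 2$. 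The same breakdown is already visible at $K=2$: with $m=2^{n-1}$, Harper gives $b(S)\geq m$, so the single-$m$ contribution is of order $(en)^m$, not small. Peierls arguments of this shape control closed sets of moderate size, but the ``moderate'' window stops far short of $|V|/2$; the paper's local connectivity argument sidesteps exactly this obstacle by never needing to control sets of macroscopic size.
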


Our proof of Lemma~\ref{lem:good_cmpnt} is based on an approach taken by \cite*{mcdiarmid2021component}, and will proceed via the following two auxiliary lemmas.

\begin{lemma}
    \label{lem:good_nbr}
    With extremely high probability, every vertex in $V(n,\mbf{k})$ has an $H(n,\mbf{k})$-neighbour which is good in $\Lnk$.
\end{lemma}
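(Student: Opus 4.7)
The plan is to apply a union bound over vertices $v \in V(n,\mbf{k})$, showing for each $v$ that the probability that $v$ has no good neighbour in $\Lnk$ is extremely small. The core device will be to exhibit $n$ neighbours of $v$ whose goodness events are mutually independent, so that a single-vertex failure bound can be raised to the $n$-th power.

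First, for a fixed vertex $u$, let $W_u$ be the number of lines $u$ wins. Writing $W_u = \sum_{i=1}^n I_i$ with $I_i$ the independent Bernoulli indicator (of mean $1/k_i$) that $u$ wins its line in coordinate $i$, we have $\mathbb{E}[W_u] \in [n/K,\, n/2]$. The lower-tail bound of Lemma~\ref{lem:chernoff} with $\eps = 2/3$ yields $\Prb(W_u < n/(3K)) \leq e^{-2n/(9K)}$, and a standard upper-tail Chernoff (or Hoeffding) bound gives $\Prb(W_u > 3n/4) \leq e^{-n/8}$ since $3n/4 - \mathbb{E}[W_u] \geq n/4$. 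Combining, $\Prb(u \text{ is not good}) \leq e^{-c n/K}$ for some universal $c > 0$.

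Next, fix $v$ and, for each coordinate $i \in [n]$, pick any neighbour $u^{(i)}$ of $v$ that differs from $v$ only in coordinate $i$ (such a neighbour exists as $k_i \geq 2$). The key claim is that the events $\{u^{(i)} \text{ is good}\}_{i \in [n]}$ are mutually independent. Each depends only on the winners of the $n$ lines through $u^{(i)}$, so it suffices to check that these line-sets are pairwise disjoint. For $i \neq j$, $u^{(i)}$ and $u^{(j)}$ differ in both coordinates $i$ and $j$; a line through $u^{(i)}$ in coordinate $m$ could coincide with a line through $u^{(j)}$ in coordinate $m'$ only if $m = m'$ and $u^{(i)}, u^{(j)}$ agreed outside coordinate $m$, which is impossible. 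Independence then yields $\Prb(\text{no } u^{(i)} \text{ good}) \leq e^{-c n^2/K}$.

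A union bound over the at most $K^n$ vertices $v$ gives a total failure probability of at most $\exp\bigl(n \log K - c n^2/K\bigr)$. Under the hypothesis $K \leq \delta \sqrt{n/\log n}$ we have $n^2/K \geq n \sqrt{n \log n}/\delta$, which dominates $n \log K$ by a factor tending to infinity with $n$; a short calculation then shows the bound is extremely small in the sense defined at the start of Appendix~\ref{sec:proof_intro}, for $\delta$ chosen sufficiently small relative to the Chernoff constant. The main obstacle here is really just the independence claim: without it, correlations among neighbours of $v$ sharing lines would reduce the per-vertex exponent and leave no room to absorb the $K^n$ union bound. The observation that neighbours of $v$ in distinct coordinate directions have disjoint line-sets is what makes the argument go through.
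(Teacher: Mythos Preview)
Your proposal is correct and follows essentially the same approach as the paper: bound the probability a single vertex fails to be good via Chernoff, pick one neighbour of $v$ in each of the $n$ coordinate directions, observe that these neighbours have pairwise disjoint line-sets and hence are good independently, raise the single-vertex failure probability to the $n$th power, and finish with a union bound over $V(n,\mbf{k})$. The only cosmetic differences are that the paper handles the upper tail by applying the lower-tail Chernoff bound to $n-X$ rather than invoking Hoeffding, and it packages the final ``extremely small'' verification into Lemma~\ref{lem:union_bound} rather than sketching the inequality directly.
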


\begin{lemma}
    \label{lem:good_3path}
    With extremely high probability, every good vertex in $\Lnk$ can reach all good vertices within $H(n,\mbf{k})$-distance 3 of it.
\end{lemma}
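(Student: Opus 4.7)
The plan is to fix a pair $(u,v)\in V(n,\mbf{k})^2$ at $H(n,\mbf{k})$-distance at most $3$ and to bound the probability that both $u$ and $v$ are good in $\Lnk$ while $u$ cannot reach $v$ along a directed path. Once this is shown to be extremely small in the sense of Appendix~\ref{sec:proof_intro}, a union bound over the at most $O(K^{n+3}n^3)$ such pairs, controlled via Lemma~\ref{lem:union_bound}, immediately yields the lemma.

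For a single pair, let $D\subseteq [n]$ be the set of coordinates in which $u$ and $v$ differ, so that $|D|\le 3$. I would exhibit a large family of candidate short directed paths from $u$ to $v$ whose existence events depend on pairwise disjoint sets of lines of $\vv{H}(n,\mbf{k})$, and are therefore mutually independent. The candidates are indexed by ordered tuples $(j_1,\dots,j_\ell)$ of distinct \emph{auxiliary} directions drawn from $[n]\setminus D$: the associated ``detour'' path, of length $|D|+2\ell$, first makes single moves in directions $j_1,\dots,j_\ell$ in turn, then performs the $|D|$ mandatory coordinate changes (in a fixed order) to convert the $D$-coordinates of $u$ into those of $v$, and finally retraces the auxiliary moves in directions $j_\ell,\dots,j_1$ to restore the auxiliary coordinates. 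Such a path is realised in $\Lnk$ exactly when the winners of each of its $|D|+2\ell$ lines take prescribed values, an event of probability at least $c/K^{|D|+2\ell}$ for some absolute constant $c>0$. Distinct tuples of auxiliary directions yield candidates using disjoint sets of lines (the outer $j_s$-lines depend only on $u$ and $v$ and on the $j_s$'s; the inner lines are offset by the auxiliary detours), so the corresponding realisation events are independent.

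For a fixed $\ell$, the number of valid auxiliary tuples is $\Theta(n^\ell)$, and so the probability that none of the corresponding candidates is realised is at most $\exp(-\Omega(n^\ell/K^{|D|+2\ell}))$. Under the hypothesis $K\le \delta\sqrt{n/\log(n)}$, one has $\log(n)-2\log(K)\gtrsim \log\log(n)$, and a short computation shows that choosing $\ell$ of order roughly $\log(n)/\log\log(n)$ makes the exponent exceed $cn\log(K)/\delta$, giving an extremely small per-pair failure probability. If it is desirable to keep $\ell$ small, one can instead condition at the outset on the sets of lines won by $u$ and $v$: the goodness assumption then guarantees that $u$ has at least $n/4$ out-edges and $v$ at least $n/(3K)$ in-edges in $\Lnk$, and the candidate paths may be chosen to begin with any out-edge of $u$ and end with any in-edge of $v$, multiplying the family by an additional $\Omega(n)\cdot\Omega(n/K)$ factor.

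The main obstacle is the combinatorial bookkeeping needed to guarantee the disjointness of lines across candidate tuples. If an auxiliary direction $j_s$ coincides with a coordinate in $D$, or if two tuples share a direction (or have detour endpoints that collide with the mandatory $D$-moves), the associated paths can secretly share a line and the independence argument breaks down. I would deal with this by restricting to tuples with $\{j_1,\dots,j_\ell\}\cap D=\emptyset$ and requiring distinct tuples to differ in a controlled way, and then checking that the restricted family still has size $\Omega(n^\ell)$. A secondary technical point is that conditioning on the goodness of $u$ and $v$ biases the winners of the $2n$ lines through these two vertices, so one must verify that the remaining line-winners, on which the candidate paths depend, are unconditionally distributed. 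Once these points are handled uniformly across the three cases $|D|\in\{1,2,3\}$, the per-pair probability bound is extremely small, and Lemma~\ref{lem:union_bound} finishes the proof.
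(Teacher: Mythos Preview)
Your broad strategy---exhibit many candidate $u$-to-$v$ paths depending on disjoint line sets, bound the probability that all fail, and union bound over pairs---matches the paper. The gap is in the independence claim. You assert that distinct auxiliary tuples $(j_1,\dots,j_\ell)$ use disjoint sets of lines, but this is false already for $\ell=2$: the tuples $(a,b)$ and $(a,c)$ both begin with the line through $u$ in direction $a$, and both end with the line through $v$ in direction $a$. You acknowledge the issue, but the natural fix (require the direction \emph{sets} to be pairwise disjoint) collapses the family from $\Theta(n^\ell)$ to $O(n/\ell)$ tuples. With at most $O(n)$ independent trials, each succeeding with probability $O(K^{-(|D|+2\ell)})$, the exponent in the failure bound is $O(n/K^{|D|+2\ell})$, and for $K\ge 2$ this is never of the form $cn\log(K)/\delta$ uniformly over small $\delta$, so the per-pair probability is not extremely small and the union bound over $K^n$ pairs fails. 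The same obstruction blocks the ``large $\ell$'' route: you need $\Omega(n^\ell)$ genuinely independent trials, not $O(n)$, for your exponent calculation to go through.

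The paper's construction avoids this by a different path shape. It conditions on the lines through $u$ and through $v$ (this is your second suggestion), picks $i$ from the $\ge n/4$ directions in which $u$ loses and $j$ from the $\ge n/(3K)$ directions in which $v$ wins, and builds a path $u\to u+\alpha_i\mbf{e}_i\to u+\alpha_i\mbf{e}_i+\beta\mbf{e}_j\to\cdots\to v+\beta\mbf{e}_j\to v$. The crucial difference from your detour-block structure is how the three mandatory $D$-moves are executed: each is performed not directly (which would cost a factor $1/K$ in the success probability) but via its own 3-step detour using an auxiliary coordinate drawn from a dedicated interval of $[n]$, with the three intervals disjoint. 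Because each interval contains $\Theta(n)$ coordinates, each 3-step detour succeeds with probability $\ge 1/2$, and the whole $(i,j)$-path succeeds with probability $\Omega(1/K)$. The paper then checks that every line used in an $(i,j)$-path identifies the unordered pair $\{i,j\}$, giving $\Omega(n^2/K)$ genuinely independent candidates and an exponent of order $n^2/K^2$, which is extremely small. Your proposal has neither the two-index $(i,j)$ scheme nor the per-$D$-move detours; both are needed to get enough independent trials with high enough individual success probability.
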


Given Lemmas~\ref{lem:good_nbr} and~\ref{lem:good_3path}, Lemma~\ref{lem:good_cmpnt} follows easily. This proof and the proof of Lemma~\ref{lem:good_nbr} are straightforward generalisations of work in~\cite*{mcdiarmid2021component}, but we include them for completeness.

\begin{proof} [Proof of Lemma~\ref{lem:good_cmpnt}]
    Let $\vv{F}$ be any realisation of $\Lnk$ for which the conclusions of Lemma~\ref{lem:good_nbr} and Lemma~\ref{lem:good_3path} both hold, and note that this happens wehp by those two lemmas and Lemma~\ref{lem:add_probs}. That is, let $\vv{F}$ be any outcome of $\Lnk$ in which every good vertex can reach all other good vertices at distance at most~3 from it in $H(n,\mbf{k})$, and in which every vertex has a neighbour in $H(n,\mbf{k})$ which is good. To prove Lemma~\ref{lem:good_cmpnt}, it suffices to show that for all good vertices $x$ and $y$, there exists a directed path in $\vv{F}$ from $x$ to $y$.

    Let $x$ and $y$ be good vertices of $\vv{F}$ and choose a path $P=p_0p_1\dots p_t$ in $H(n,\mbf{k})$ where $p_0=x$ and $p_t=y$. If $t\leq 3$, then there exists a directed path from $x$ to $y$ in $\vv{F}$. Otherwise $p_2,p_3,\dots,p_{t-2}$ have good $H(n,\mbf{k})$-neighbours $q_2,\dots,q_{t-2}$ respectively. Then $x$ and $q_2$ are at distance at most 3, $q_i$ and $q_{i+1}$ are at distance at most 3 for all $2\leq i\leq t-3$, and~$q_{t-2}$ and $y$ are at distance at most 3, so $\vv{F}$ contains a path from $x$ to $y$ via $q_2,\dots,q_{t-2}$.
\end{proof}

\begin{proof}[Proof of Lemma~\ref{lem:good_nbr}]
    For a vertex $v\in V(n,\mbf{k})$, let $X$ be the number of lines that $v$ wins in $\Lnk$ and note that $v$ is good if and only if $n/3K\leq X \leq 3n/4$. Since $X$ is a sum of $n$ independent Bernoulli random variables and has mean $\mu=\sum_{i=1}^n{1/k_i}\geq n/K$, by Lemma~\ref{lem:chernoff} we have
    \[
        \Prb\Big(X< \frac{n}{3K}\Big)
        \leq \Prb\Big(X\leq \frac{\mu}{3}\Big)
        \leq e^{-2\mu/9}
        \leq e^{-2n/(9K)}.
    \]
    Similarly, $n-X$ is a sum of $n$ independent Bernoulli random variables and has mean $\mu'\geq n/2$, so by Lemma~\ref{lem:chernoff} again we have
    \[
        \Prb\Big(X> 3n/4\Big)
        = \Prb\Big(n-X< n/4\Big)
        \leq \Prb\Big(n-X\leq \frac{\mu'}{2}\Big)
        \leq e^{-\mu'/8}
        \leq e^{-n/16}.
    \]
    It follows (using $K\geq 2$) that $v$ is good with failure probability at most $e^{-n/(10K)}$.

    Now fix $u\in V(n,\mbf{k})$ and pick one vertex other than $u$ from each of the $n$ lines containing it, say $v_1,\dots,v_n$. The $v_i$ are distinct and no two of them share a line, so they are good independently of one another. Hence, the probability that $u$ has no good $H(n,\mbf{k})$-neighbour is at most $e^{-n^2/(10K)}$, so by a union bound over $u$, the probability that there exists a vertex with no good $H(n,\mbf{k})$-neighbour is at most $K^n\cdot e^{-n^2/(10K)}$. For $0 < \delta \leq 1$, if $K\leq \delta \sqrt{n/\log(n)}$, then $K^2 \log(K) \leq \delta n$ and $n/K\geq K\log(K)/\delta\geq \log(K)/\delta$, so $e^{-n^2/(10K)}$ is extremely small. By Lemma~\ref{lem:union_bound}, the same is true of $K^n\cdot e^{-n^2/(10K)}$, which completes the proof of the lemma.
\end{proof}

It remains to give the (slightly more involved) proof of Lemma~\ref{lem:good_3path}.

\begin{proof}[Proof of Lemma~\ref{lem:good_3path}]
    In this proof we will relabel $V(n,\mbf{k})$ as $\prod_{i=1}^n\{0,\dots,k_i-1\}$ in the natural way and will consider these vertices as elements of the vector space $\mathbb{R}^n$. We will write $\mbf{e}_1,\dots,\mbf{e}_n$ for the standard basis of this space.

    We need to show that if $\delta>0$ is sufficiently small, then whenever $K\leq \delta \sqrt{n/\log(n)}$, the probability that there exists a good vertex that cannot reach some other good vertex within $H(n,\mbf{k})$-distance 3 of it is at most $e^{-cn\log(K)/\delta}$ for some universal $c>0$.
    Thus, let $\delta>0$ be small and assume that $K\leq \delta \sqrt{n/\log(n)}$. Note that since $n\geq 2$, by choosing $\delta$ small enough we may assume that $n/K^2 \geq \log(n)/\delta^2$ (and hence also $n$ and $n/K$) is large in absolute terms.

    Our proof will focus on pairs of vertices at distance exactly $3$ from one another, and it will be clear how to adapt the argument to pairs at distance $1$ or $2$.
    Let $u$ and $v$ be vertices of $H(n,\mbf{k})$ at distance $3$ from each other. After relabelling, we may assume that $u=\mbf{0}$ and $v=\mbf{e}_1+\mbf{e}_2+\mbf{e}_3$.
    Fix subsets $A',B'\subseteq [n]$ of sizes at least $n/4$ and $n/3K$ respectively; later we will assume that $u$ and $v$ are good vertices and take these to be the sets of coordinate directions in which $u$ and $v$ do not win and, respectively, win their lines. For each $i\in A'$ fix some $\alpha_i\in [k_i-1]$; later we will take $\alpha_i$ to be the $i$th coordinate of the winner of the line through $u$ in direction $i$. Now pick any $A\subseteq A'\setminus\{1,2,3\}$ and $B\subseteq B'\setminus\{1,2,3\}$ such that $\abs{A}=\ceil{n/5}$ and $\abs{B}=\ceil{n/4K}$. Relabelling again, we may assume that $A,B\subseteq[\floor{n/2}]$. 

    Having fixed $A$, $B$, and the $\alpha_i$, we will now define a certain type of path in $\vv{H}(n,\mbf{k})$. First, let $i,j\in [n]$ be distinct with $i\in A$ and $j\in B$, then let $\beta\in [k_j-1]$.
    A directed path in $\vv{H}(n,\mbf{k})$ from $\alpha_i\mbf{e}_i+\beta\mbf{e}_j$ to $\alpha_i\mbf{e}_i+\beta\mbf{e}_j + v$ will be called an \emph{$(i,j,\beta)$-path} if it has the following form: the path starts at $\alpha_i\mbf{e}_i+\beta\mbf{e}_j$ then follows a path of length 3 to $\alpha_i\mbf{e}_i+\beta\mbf{e}_j+\mbf{e}_1$ in which the first and third edges are in a coordinate direction taken from the interval of integers $[\floor{n/2}+1,\floor{2n/3}]$. That is, the path starts
    \[\alpha_i\mbf{e}_i+\beta\mbf{e}_j, \alpha_i\mbf{e}_i+\beta\mbf{e}_j + \gamma \mbf{e}_\ell, \alpha_i\mbf{e}_i+\beta\mbf{e}_j + \gamma \mbf{e}_\ell + \mbf{e}_1, \alpha_i\mbf{e}_i+\beta\mbf{e}_j + \mbf{e}_1,\]
    for some $\ell \in [\floor{n/2}+1,\floor{2n/3}]$ and $\gamma\in[k_\ell -1]$.
    Next, the path follows a path of length $3$ to $\alpha_i\mbf{e}_i+\beta\mbf{e}_j+\mbf{e}_1+\mbf{e}_2$ in which the first and third edges are in a coordinate direction taken from $[\floor{2n/3}+1,\floor{5n/6}]$, before finally following a path of length $3$ to $\alpha_i\mbf{e}_i+\beta\mbf{e}_j+\mbf{e}_1+\mbf{e}_2+\mbf{e}_3=\alpha_i\mbf{e}_i+\beta\mbf{e}_j+v$ in which the first and third edges are in a coordinate direction taken from $[\floor{5n/6}+1,n]$.

    Let $E_{(i,j,\beta)}^{(1)}$ be the event that there exists in $\Lnk$ a path of length $3$ from $\alpha_i\mbf{e}_i+\beta\mbf{e}_j$ to $\alpha_i\mbf{e}_i+\beta\mbf{e}_j+\mbf{e}_1$ in which the first and third edges are in a coordinate direction taken from $[\floor{n/2}+1,\floor{2n/3}]$. That is, $E_{(i,j,\beta)}^{(1)}$ is the event that there is some $\ell \in [\floor{n/2}+1,\floor{2n/3}]$ and some $\gamma \in [k_\ell - 1]$ such that all of the edges in the directed path
    \[
    \alpha_i\mbf{e}_i+\beta\mbf{e}_j, \alpha_i\mbf{e}_i+\beta\mbf{e}_j + \gamma \mbf{e}_\ell, \alpha_i\mbf{e}_i+\beta\mbf{e}_j + \gamma \mbf{e}_\ell + \mbf{e}_1, \alpha_i\mbf{e}_i+\beta\mbf{e}_j + \mbf{e}_1
    \]
    are present in $\Lnk$.
    Define $E_{(i,j,\beta)}^{(2)}$ and $E_{(i,j,\beta)}^{(3)}$ analogously for the second and third parts of the $(i,j,\beta)$-path. Note that there exists an $(i,j,\beta)$-path in $\Lnk$ if and only if all three of these events occur.

    The event that $\Lnk$ contains a path of length $3$ from $\alpha_i\mbf{e}_i+\beta\mbf{e}_j$ to $\alpha_i\mbf{e}_i+\beta\mbf{e}_j+\mbf{e}_1$ in which the first and third edges are in a given coordinate direction $\ell$ has probability
    \[
    \left(1-\frac{1}{k_\ell}\right)\frac{1}{k_1}\frac{1}{k_\ell}\geq \frac{1}{2K^2}.
    \]
    Indeed, the probability that $\alpha_i\mbf{e}_i+\beta\mbf{e}_j$ does not win its line in direction $\ell$ is $1-1/k_{\ell}$; 
    the probability that $\alpha_i\mbf{e}_i+\beta\mbf{e}_j+\mbf{e}_1$ wins its line in direction $\ell$ is $1/k_{\ell}$, the probability that the required edge in the direction 1 is present is $1/k_1$, and these three events occur independently.
    Since the existence of such a path is independent for different $\ell$, the failure probability of $E_{(i,j,\beta)}^{(1)}$ is at most
    \[
        \prod_{\ell=\floor{n/2}+1}^{\floor{2n/3}}{\left(1-\frac{1}{2K^2}\right)}
        \leq \left(1-\frac{1}{2K^2}\right)^{n/7}
        \leq e^{-n/(14K^2)}
        < \frac{1}{2},
    \]
    where we have used that $1+x\leq e^{x}$ for all $x\in \mathbb{R}$ and that $n/K^2$ is large.

    Similarly, $E_{(i,j,\beta)}^{(2)}$ and $E_{(i,j,\beta)}^{(3)}$ each occur with probability at least $1/2$. Moreover, it is not difficult to see that the sets of lines on whose presence each of these three events depend are pairwise disjoint, from which it follows that the events are independent. We deduce that $\Lnk$ contains an $(i,j,\beta)$-path with probability at least $1/8$.

    Next, we will call a path in $\vv{H}(n,\mbf{k})$ an \emph{extended $(i,j,\beta)$-path} if it is an $(i,j,\beta)$-path extended by one vertex at the end to $\beta\mbf{e}_j+v$. The line containing
    $\beta\mbf{e}_j+v$ and $\alpha_i\mbf{e}_i+\beta\mbf{e}_j+v$ could never be used in any $(i,j,\beta)$-path, so the probability that $\Lnk$ contains an extended $(i,j,\beta)$-path is at least $1/(8k_i)$.

    To conclude our definitions, a path in $\vv{H}(n,\mbf{k})$ will be called an \emph{$(i,j)$-path} if there exists $\beta\in[k_j-1]$ for which it is an extended $(i,j,\beta)$-path extended by one vertex at the start to $\alpha_i\mbf{e}_i$. Note that the line through $\alpha_i\mbf{e}_i$ in coordinate $j$ cannot be used in an extended $(i,j,\beta)$-path for any $\beta\in[k_j-1]$. Hence, the probability that $\Lnk$ contains an $(i,j)$-path is the probability that the winner of the line through $\alpha_i\mbf{e}_i$ in coordinate $j$ is some $\alpha_i\mbf{e}_i+\beta \mbf{e}_j$ with $\beta\in [k_j-1]$, multiplied by the probability that $\Lnk$ contains an extended $(i,j,\beta)$-path for this $\beta$. By the above, this probability is at least 
    \[
    \frac{1}{8k_i}\left(1-\frac{1}{k_j}\right)\geq \frac{1}{16k_i}.
    \]
    
    Next, observe that every line that could possibly be used in an $(i,j)$-path identifies the set $\{i,j\}$. There are at least $\abs{A}(\abs{B}-1)/2\geq n^2/(50K)$ ways to choose $\{i,j\}$, so the probability that $\Lnk$ does not contain an $(i,j)$-path for any $(i,j)$ is at most 
    \[
    \left(1-\frac{1}{16k_i}\right)^{n^2/(50K)}\leq e^{-n^2/(800K^2)}
    \]
    Observe also that the event that there exists an $(i,j)$-path in $\Lnk$ for some $(i,j)$ is independent of the behaviour of any lines of $V(n,\mbf{k})$ containing $u$ or $v$.

    This analysis holds for any choice of $A'$ and $B'$ containing at least $n/4$ and $n/3K$ directions respectively, so if $u$ and $v$ are good vertices, then we may take $A'$ to be the set of coordinate directions in which $u$ does not win its line, and $B'$ to be the set of directions in which $v$ does win its line. For each $i\in A'$, let $\alpha_i\in [k_i-1]$ be the $i$th coordinate of the winner of the line through $u$ in direction $i$. This means that the edges from $u$ to $\alpha_i \mbf{e}_i$ and from $\beta \mbf{e}_j + v$ to $v$ are both present for any choice of $i \in A$, $j \in B$, and $\beta \in[k_j-1]$, so if there is no path from $u$ to $v$ in $\Lnk$, then there is no $(i,j)$-path in $\Lnk$ for any $(i,j)$. We have shown that this happens with probability at most $e^{-n^2/(800K^2)}$.

    By a similar argument, the same holds for every pair of vertices at distance $1$ or~$2$ from each other. A union bound yields that the probability that there exists a good vertex which cannot reach in $\Lnk$ some other good vertex within $H(n,\mbf{k})$-distance 3 of it is at most $K^{2n}\cdot e^{-n^2/(800K^2)}$. Clearly $e^{-n^2/(800K^2)}$ is extremely small, so Lemma~\ref{lem:union_bound} implies that $K^{2n}\cdot e^{-n^2/(800K^2)}$ is also extremely small, which completes the proof of the lemma.
\end{proof}

\section{Proof of Theorem~\ref{thm:main_grids}: accessing good vertices}
\label{sec:main_proof}

Recall that in Section~\ref{sec:outline} and Appendix~\ref{sec:strong_cmpnt} we defined a vertex of $\Lnk$ to be \emph{good} if it wins at least $n/3K$ but at most $3n/4$ of its lines. We will start by considering the following event.

\begin{definition}[Event $C$]
    Let $C$ be the event that every vertex in $V(n,\mbf{k})$ that can be reached from more than $n/2$ vertices in $\Lnk$ can be reached from a good vertex.
\end{definition}

We show that $C$ is very likely.

\begin{lemma}\label{lem:event_C}
    The event $C$ occurs with extremely high probability.
\end{lemma}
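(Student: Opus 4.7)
I would first observe that $\ov{C}$ occurs precisely when there is some vertex $v$ with more than $n/2$ ancestors in $\Lnk$, all of which are non-good (recall that $v$ is always an ancestor of itself). When this happens, the BFS tree rooted at $v$ in the reversed graph yields a subtree $T$ of $H(n,\mbf{k})$ with $v\in T$, $|T|>n/2$, and every vertex of $T$ non-good. Thus, by a union bound,
\[
    \Prb(\ov{C}) \;\leq\; \sum_v \sum_{m>n/2} \sum_{T\ni v,\;|T|=m} \Prb\bigl(\text{every vertex of } T \text{ is non-good}\bigr),
\]
where the innermost sum runs over trees of $H(n,\mbf{k})$.

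Next, I would bound the individual probabilities using the fact that the map $(x_1,\dots,x_n)\mapsto \sum_i x_i\bmod K$ is a proper $K$-colouring of $H(n,\mbf{k})$, since adjacent vertices differ in one coordinate by a nonzero amount of absolute value less than $K$. Pigeonholing, every tree $T$ of size $m$ contains a colour class $I$ of size at least $m/K$ that forms an independent set in $H(n,\mbf{k})$. Since distinct vertices of $I$ pairwise share no line, their goodness statuses are functions of disjoint collections of line winners and hence are mutually independent. Combining this with the single-vertex bound $\Prb(u\text{ non-good})\leq e^{-n/(10K)}$ from the proof of Lemma~\ref{lem:good_nbr} gives
\[
    \Prb(\text{every vertex of } T \text{ is non-good}) \;\leq\; e^{-|I|n/(10K)} \;\leq\; e^{-mn/(10K^2)}.
\]
Using Lemma~\ref{lem:trees} to bound the number of trees of size $m$ containing a fixed vertex by $(enK)^{m-1}$, and the trivial bound $|V(n,\mbf{k})|\leq K^n$, this yields
\[
    \Prb(\ov{C}) \;\leq\; K^n \sum_{m>n/2} (enK)^{m-1}\, e^{-mn/(10K^2)}.
\]

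To finish, I would check that this bound is extremely small. For $K\leq \delta\sqrt{n/\log(n)}$ with $\delta$ sufficiently small, we have $n/(10K^2)\geq \log(n)/(10\delta^2)$, which dwarfs $\log(enK)\leq \tfrac{3}{2}\log(n)$ and makes the base $enK\cdot e^{-n/(10K^2)}$ of the resulting geometric series much smaller than any polynomial. The sum is then dominated by its first term at $m=\lceil n/2\rceil+1$, and a routine calculation gives $\Prb(\ov{C})\leq \exp(-\Theta(n\log(n)/\delta^2))$, which meets the extremely-small threshold with suitable universal constants $c,\delta_0>0$. The main point to be careful about in this final estimate is the bookkeeping: the $K^n$ factor from the union bound over $v$ together with the $(enK)^{n/2}$ factor from the union bound over trees of size close to $n/2$ must be comfortably absorbed by the non-goodness factor $e^{-n^2/(20K^2)}$, and this is precisely where the hypothesis $K\leq \delta\sqrt{n/\log(n)}$ is used.
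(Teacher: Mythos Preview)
Your argument is correct, but it differs from the paper's in two notable ways.

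First, the independence mechanism. The paper fixes a single tree size $M=\lfloor n/3\rfloor$ and exploits the fact that in a tree with at most $M$ vertices, each vertex can be assigned $n-M\geq \lceil 2n/3\rceil$ ``private'' lines containing no other tree vertex; Chernoff on these private lines gives a non-good probability of at most $e^{-n/(50K)}$ per vertex, and these events are independent across the $M$ vertices since the private line-sets are disjoint. This yields a failure bound of $e^{-\Theta(n^2/K)}$ per tree. Your approach instead extracts an independent set of size $\geq m/K$ via the $K$-colouring $\sum_i x_i\bmod K$, and uses the full single-vertex bound $e^{-n/(10K)}$ on that set, giving $e^{-\Theta(n^2/K^2)}$ per tree. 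The paper's private-lines trick buys a factor of $K$ in the exponent, which is why their bound survives without needing the stronger $n/K^2\gg\log n$ regime you invoke at the end; but since the hypothesis $K\leq\delta\sqrt{n/\log n}$ grants exactly that regime, your weaker exponent still clears the ``extremely small'' bar.

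Second, the tree sizes. You sum over all $m>n/2$, whereas the paper works with a single fixed size. Your sum is valid but unnecessary: once $v$ has more than $n/2$ non-good ancestors, truncating the BFS tree gives a tree of any fixed size $M_0\leq n/2$ with all vertices non-good, so a single value of $m$ would do. This would spare you the geometric-series bookkeeping, though as you note the ratio $(enK)e^{-n/(10K^2)}$ is comfortably below $1$ in your regime, so the series is harmless.

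Your colouring idea is a clean, reusable device; the paper's private-lines assignment is more bespoke but quantitatively sharper.
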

\begin{proof}
    We need to show that if $\delta>0$ is sufficiently small, then whenever $K\leq \delta \sqrt{n/\log(n)}$, we have $\Prb(\ov{C})\leq e^{-c n\log(K)/\delta}$ for some $c >0$.
    To this end, let $\delta>0$ be small and assume that $K\leq \delta \sqrt{n/\log(n)}$. For each $y\in V(n,\mbf{k})$, define $C_y$ to be the event that either $y$ can be reached from at most $n/2$ vertices in $\Lnk$, or $y$ can be reached from a good vertex in $\Lnk$

    Let $M=\floor{n/3}$. Fix $y$ and let $S$ be the set of trees of order $M$ in $H(n,\mbf{k})$ that contain $y$. For a tree $T\in S$, let $C_T$ be the event that all the edges of $T$ are oriented towards $y$ in $\Lnk$ (so in particular, if~$C_T$ holds, then all vertices of $T$ can reach $y$). If $\ov{(C_y)}$ holds, then $y$ can be reached from more than $n/2\geq M$ vertices, so $C_T$ occurs for some $T\in S$. Hence, by a union bound
    \begin{equation}\label{eq:C}
        \Prb\big(\ov{(C_y)}\big)
        = \Prb\left(\ov{(C_y)} \cap \left(\bigcup_{T\in S}C_T\right)\right)
        \leq \sum_{T\in S}\Prb\big(\ov{(C_y)}\cap C_T\big).
    \end{equation}

    For fixed $T\in S$, if $\ov{(C_y)}$ and $C_T$ both hold, then no vertex in $T$ is good. Since $T$ contains exactly $M$ vertices, each of its vertices can be assigned a set of $n-M= \ceil{2n/3}$ lines which contain that vertex and no other vertex of~$T$ (so that the number of lines that each vertex of $T$ wins out of the $\ceil{2n/3}$ assigned to it is independent).
    For a fixed vertex $v$ of $T$, let $X$ be the number of the $\ceil{2n/3}$ lines assigned to $v$ in which~$v$ is the winner, so that $X$ is a sum of $\ceil{2n/3}$ independent Bernoulli random variables and has mean $\mu\geq 2n/3K$.

    By Lemma~\ref{lem:chernoff},
    \[
    \Prb\left(X\leq\frac{n}{3K}\right)
    \leq \Prb\left(X\leq\frac{\mu}{2}\right)
    \leq e^{-\mu/8}
    \leq e^{-n/(12K)}.
    \]
    Meanwhile, the number of the lines assigned to $v$ in which $v$ is not the winner, $\ceil{2n/3}-X$, is a sum of $\ceil{2n/3}$ independent Bernoulli random variables with mean $\mu'\geq n/3$. Hence, by Lemma~\ref{lem:chernoff} again,
    \[
    \Prb\Big(X> \frac{5n}{12}\Big)
    \leq \Prb\Big(\Big\lceil\frac{2n}{3}\Big\rceil-X\leq \frac{n}{4}\Big) 
    \leq \Prb\Big(\Big\lceil\frac{2n}{3}\Big\rceil-X\leq \frac{3\mu'}{4}\Big) 
    \leq e^{-\mu'/32}
    \leq e^{-n/96}.
    \]
    
    If $n/3K<X\leq 5n/12$, then in total $v$ wins at least $n/3K$ and at most $5n/12+\floor{n/3}\leq 3n/4$ of its lines, and hence is a good vertex.  
    It follows that each vertex of $T$ is good with failure probability at most $e^{-n/(50K)}$ (for $n \geq 101$), so at least one of these vertices is good with failure probability at most $e^{-Mn/(50K)}\leq e^{-n^2/(200K)}$ (by choosing $\delta$ small enough, we can ensure $M\geq n/4$).
    This is therefore an upper bound on $\Prb(\ov{(C_y)}\cap C_T)$. It follows from Lemma~\ref{lem:trees} that $|S|\leq (enK)^{M-1}$, so by~\eqref{eq:C} we have
    \begin{equation*}
        \begin{split}
            \Prb\big(\ov{(C_y)}\big)
            & \leq (enK)^{M-1}\cdot e^{-n^2/(200K)} \\
            & \leq \exp\left(\frac{n\log(enK)}{3}-\frac{n^2}{200K}\right) \\
            & = \exp\left(\frac{n^2}{K}\left(\frac{K\log(enK)}{3n}-\frac{1}{200}\right)\right).
        \end{split}
    \end{equation*}
    If $\delta$ is small enough, then $n$ is large relative to $K\log(enK)$, so this probability is at most $e^{-n^2/(300K)}$.

    Finally, since $C=\bigcap_y{C_y}$, by a union bound we have
    \[
        \Prb(\ov{C})\leq K^n\cdot e^{-n^2/(300K)}.
    \]
    Clearly $e^{-n^2/(300K)}$ is extremely small, so by Lemma~\ref{lem:union_bound} the same is true of $K^n\cdot e^{-n^2/(300K)}$, and the lemma follows.
\end{proof}

Next, we prove a very similar result for vertices that can \emph{reach} more than $n/2$ vertices rather than can be \emph{reached from} more than $n/2$ vertices.

\begin{definition}[Event $D$]
    Let $D$ be the event that every vertex in $\Lnk$ that can reach more than $n/2$ vertices can reach a good vertex.
\end{definition}

By an argument similar to that used to prove Lemma~\ref{lem:event_C}, we obtain the following.

\begin{lemma}\label{lem:event_D}
    The event $D$ occurs with extremely high probability.
\end{lemma}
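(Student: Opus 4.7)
The plan is to mirror the proof of Lemma~\ref{lem:event_C} with edge orientations reversed; no genuinely new idea is required, just a careful check that the symmetric argument still goes through. First, for each $x\in V(n,\mbf{k})$ I would define $D_x$ to be the event that either $x$ can reach at most $n/2$ vertices in $\Lnk$, or $x$ can reach a good vertex in $\Lnk$. Since $D=\bigcap_x D_x$, a union bound over $x$ at the end will suffice, and it is therefore enough to bound $\Prb(\ov{D_x})$.

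Next, I set $M=\floor{n/3}$ and let $S$ be the family of trees of order $M$ in $H(n,\mbf{k})$ that contain $x$. For $T\in S$, let $D_T$ be the event that every edge of $T$, viewed as rooted at $x$, is oriented away from $x$ in $\Lnk$. If $\ov{D_x}$ holds then $x$ reaches more than $n/2\geq M$ vertices in $\Lnk$, so running a depth-first search from $x$ produces a tree in $S$ on which $D_T$ holds. Hence
\[
    \Prb(\ov{D_x})\leq \sum_{T\in S}\Prb(\ov{D_x}\cap D_T).
\]

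The core calculation is then identical to the one in Lemma~\ref{lem:event_C}. If $\ov{D_x}\cap D_T$ holds for a fixed $T$, then every vertex of $T$ is reachable from $x$ in $\Lnk$, so none of them can be good. I would then assign to each vertex $v$ of $T$ a set of $\ceil{2n/3}$ lines through $v$ containing no other vertex of $T$; the winners of these assigned lines are independent of the tree-line winners (hence of $D_T$) and independent across vertices of $T$. Two Chernoff bounds, exactly as before, show that any such $v$ fails to be good with probability at most $e^{-n/(50K)}$, so the probability that no vertex of $T$ is good is at most $e^{-n^2/(200K)}$. Combining with $|S|\leq (enK)^{M-1}$ from Lemma~\ref{lem:trees} and union-bounding over $x\in V(n,\mbf{k})$ yields the bound $K^n\cdot e^{-n^2/(300K)}$, which is extremely small by Lemma~\ref{lem:union_bound}.

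The only step that needs a moment's care—and the closest thing to an obstacle—is verifying that reversing the tree-edge orientations does not disturb the independence structure exploited in the Chernoff step. It does not: the assigned lines at each vertex $v$ are chosen disjoint from the tree lines, and whether $v$ wins a given line depends only on that line, independently across lines, regardless of whether we are tracking in-edges or out-edges of the tree. So the orientation flip is cosmetic and the argument for Lemma~\ref{lem:event_C} transfers essentially verbatim.
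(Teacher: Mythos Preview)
Your proposal is correct and is precisely the approach the paper intends: the paper's own ``proof'' of Lemma~\ref{lem:event_D} is simply the sentence ``By an argument similar to that used to prove Lemma~\ref{lem:event_C}, we obtain the following,'' and your write-up is exactly that symmetric argument with the tree-edge orientations flipped and the independence check made explicit.
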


We are now ready to put everything together to prove the main theorem.

\begin{proof}[Proof of Theorem~\ref{thm:main_grids}]
    Let $n\geq 2$ be an integer, let $\mbf{k}\in\{2,3,\dots\}^n$, and let $\eps>0$. Define $K=\max_i(k_i)$ and $N=(1+\eps)K\log(K)$. Let $E_\eps$ be the event that every vertex of $\Lnk$ can either be reached from at most $N$ vertices or can be reached from every non-sink; we want to show that $E_\eps$ occurs wvhp.

    Let events $A_\eps$, $B$, $C$, and $D$ be as above, and suppose that they all occur simultaneously in $\Lnk$. Let $x,y\in V(n,\mbf{k})$ where $x$ is a non-sink and $y$ can be reached from more than $N$ vertices. Since $A_\eps$ occurs, $x$ and $y$ can reach and be reached from more than $n/2$ vertices respectively. Thus, since $C$ occurs, there exists a good vertex $v\in V(n,\mbf{k})$ which can reach $y$ in $\Lnk$. Next, since $D$ occurs, there exists a good vertex $u\in V(n,\mbf{k})$ which can be reached from $x$ in $\Lnk$. Since $B$ occurs, all good vertices of $\Lnk$ are in the same strongly connected component, so in particular $u$ can reach $v$.
    It follows that there is a directed walk from $x$ to $y$ in $\Lnk$ via $u$ and $v$, that is, $x$ can reach $y$. In other words, if $A_\eps$, $B$, $C$, and $D$ occur, then so does $E_\eps$.
    By Lemmas~\ref{lem:good_cmpnt},~\ref{lem:event_C}, and~\ref{lem:event_D}, each of $B$, $C$, and $D$ occurs wehp, so in particular wvhp, and $A_\eps$ occurs wvhp by Lemma~\ref{lem:foothold}. Hence, by (repeated applications of) Lemma~\ref{lem:add_probs}, we conclude that $E_\eps$ occurs wvhp, as required.
\end{proof}

\section{On the tightness of Theorems~\ref{thm:main_game} and~\ref{thm:main_grids}}\label{sec:tech_tight}

In this section we discuss to what extent various aspects of Theorem~\ref{thm:main_game} and Theorem~\ref{thm:main_grids} are tight. First, with regards to the relationship between $n$ and $K$, it is entirely possible that this condition could be weakened considerably while still allowing results in the spirit of Theorem~\ref{thm:main_game} and Theorem~\ref{thm:main_grids}. However, the condition given in Theorem~\ref{thm:main_grids} seems to be at the limit of our methods, and a new approach would be needed to improve it. See also point (i) of our open questions in Section~\ref{sec:conc}.

Next, with regards to the tightness of the failure probability in the theorems, first note that the failure probability in Theorem~\ref{thm:main_grids} carries through to Theorem~\ref{thm:main_game} by the proof of Corollary~\ref{cor:conn}. The following theorem shows that this slightly stronger lower bound is tight up to the value of the exponent $c$. In fact, the theorem shows that even if Theorem~\ref{thm:main_game} were weakened to only consider weakly acyclic games rather than connected games, the probability inherited from Theorem~\ref{thm:main_grids} would still be tight up to the value of the constant in the exponent.

\begin{restatable}{theorem}{probBound}\label{thm:prob_bound}
    There is a constant $c' > 0$ such that
    \[
        \frac{| \{g\in\cG(n,\mbf{k})\colon g \text{\textnormal{ is weakly acyclic}}\} |}{| \{ g\in\cG(n,\mbf{k})\colon g \text{\textnormal{ is has a pure Nash equilibrium}}\} |} \leq 1- \prod_{i=1}^{n}{k_i^{-c'}}
    \]
    for all integers $n\geq 4$ and all $\mbf{k}\in\{2,3,\dots\}^n$.
\end{restatable}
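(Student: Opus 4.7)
The plan is to exhibit a concrete local event $F$ on $\Lnk$ that is contained in the event ``the game has a pure Nash equilibrium and is not weakly acyclic'', with $\Prb(F) \geq \prod_i k_i^{-5}$, and then transfer this to the statement via the identification of $\Lnk$ with the uniform measure on $\cG(n,\mbf{k})$. Writing $R_{n,\mbf{k}}$ for weak acyclicity and $S_{n,\mbf{k}}$ for existence of a PNE, this yields
\[
\Prb(R_{n,\mbf{k}} \mid S_{n,\mbf{k}}) = 1 - \frac{\Prb(\ov{R_{n,\mbf{k}}} \cap S_{n,\mbf{k}})}{\Prb(S_{n,\mbf{k}})} \leq 1 - \Prb(F) \leq 1 - \prod_i k_i^{-5},
\]
so $c'=5$ will suffice.

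The event $F$ combines a ``trap'' on a $2\times 2$ face with a ``witness sink'' placed far from it. Let $T = \{1,2\} \times \{1,2\} \times \{1\}^{n-2}$ and $v^* = (1,1,2,\ldots,2)$. The trap part of $F$ consists of: (a) for each $v \in T$ and each coordinate $i \in \{3,\ldots,n\}$, $v$ is the winner of its line in direction $i$; and (b) for each of the two lines of $\vv{H}(n,\mbf{k})$ through $T$ in coordinate $1$, and similarly in coordinate $2$, the winner lies in $T$, with the four resulting in-$T$ winners forming one of the two ``cyclic'' configurations in which no vertex of $T$ wins both its in-$T$ direction-$1$ and in-$T$ direction-$2$ lines. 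The witness-sink part of $F$ requires $v^*$ to win all $n$ of its own lines. The hypothesis $n\geq 4$ enters precisely to guarantee that the lines constrained by the trap are disjoint from those constrained by the witness sink: the trap's lines in directions $\geq 3$ pass through vertices whose unchanged coordinates in $\{3,\dots,n\}$ are identically $1$, whereas $v^*$'s lines in those directions have unchanged coordinates identically $2$ (and the coordinate-$1,2$ lines at $v^*$ differ from the trap's by having tail $(2,\ldots,2)$ instead of $(1,\ldots,1)$).

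Since the winners of distinct lines of $\Lnk$ are mutually independent, $\Prb(F)$ factorises. The direction-$i$ constraints for each $i \geq 3$ in~(a) contribute $k_i^{-4}$ (four independent lines); the in-$T$ constraints in~(b) contribute $(2/k_1)^2(2/k_2)^2 \cdot (2/16) = 2/(k_1 k_2)^2$, where the factor $2/16$ comes from an inclusion-exclusion on the $16$ equally likely in-$T$ sub-configurations showing that exactly two of them leave no vertex of $T$ as a sink of the induced subgraph on $T$ (all pairwise intersections of the four ``$v_{ij}$ is a sink'' events collapse except the two antipodal ones, and all triples collapse, yielding $4\cdot 4 - 2 = 14$ bad configurations); and the sink condition on $v^*$ contributes $\prod_i k_i^{-1}$. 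Multiplying gives $\Prb(F) = 2 k_1^{-3} k_2^{-3} \prod_{i=3}^n k_i^{-5} \geq \prod_i k_i^{-5}$.

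It remains to verify that $F$ forces both conditions. If $F$ holds, then $v^*$ is a sink of $\Lnk$ and the game has a PNE. By (a) no vertex of $T$ has an outgoing edge in any direction $i \geq 3$, and by (b) any outgoing edge of a vertex of $T$ in direction $1$ or $2$ lands back in $T$; so each vertex of $T$ can reach only vertices of $T$. By the cyclic configuration in (b), no vertex of $T$ is a sink of $\Lnk$ (it fails to win at least one in-$T$ line and hence fails to win that full line). Therefore no vertex of $T$ can reach any sink of $\Lnk$, and weak acyclicity fails. There is no substantial obstacle: the only real bookkeeping is the inclusion-exclusion count of $2/16$ and the disjointness check that requires $n \geq 4$.
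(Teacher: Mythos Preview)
Your proof is correct and follows essentially the same approach as the paper's. Both arguments construct what the paper calls a \emph{sticky 4-cycle} (your ``trap'' on $T$) in coordinates $1$ and $2$---four vertices that win all their lines in directions $\geq 3$ and whose direction-$1$ and direction-$2$ winners form a cycle---together with an independent sink at a vertex whose lines are disjoint from those of the 4-cycle, which is exactly where $n\geq 4$ is used. The paper fixes one cyclic orientation (giving probability $k_1^{-2}k_2^{-2}\prod_{i\geq 3}k_i^{-4}$) whereas you count both (getting a factor of $2$), and you spell out the inclusion--exclusion and disjointness checks in more detail, but the construction and the resulting constant $c'=5$ are the same.
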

\begin{proof}
    Let $n$ and $\mbf{k}$ be as in the statement of the theorem. Define a 4-cycle in a subgraph of $\vv{H}(n,\mbf{k})$ to be \emph{sticky} if each of its vertices can only reach the other vertices of the 4-cycle. The probability that a given sticky 4-cycle (where the edges in the cycle are the first and second coordinate directions, say) appears in $\Lnk$ is $k_1^{-2}k_2^{-2}\prod_{i=3}^n{k_i^{-4}}\geq\prod_{i=1}^n{k_i^{-4}}$. Since $n\geq 4$, there is a vertex none of whose lines intersect this sticky 4-cycle. The event that this vertex is a sink, which occurs with probability $\prod_{i=1}^n{k_i^{-1}}$, is therefore independent of whether or not the sticky 4-cycle appears, so with probability at least $\prod_{i=1}^n{k_i^{-5}}$ there is both a sink and a sticky 4-cycle in $\Lnk$.

    The vertices of a sticky 4-cycle cannot reach a sink, so the result now follows from arguments similar to those used to prove Corollary~\ref{cor:conn}.
\end{proof}

One might ask whether taking a larger value for $N$ in the statement of Theorem~\ref{thm:main_grids} would allow a significantly smaller failure probability, but a similar argument to the proof of Theorem~\ref{thm:prob_bound} shows that it would not. Indeed, there is some $c' > 0$ such that if $n$ is large relative to $\max_i(k_i)$, then with probability at least $1 - \prod_{i=1}^n{k_i^{-c'}}$ there is a vertex in $\Lnk$ which can be reached from $\prod_{i=1}^{n-1}{k_i}$ vertices but not from every non-sink.

This follows from an argument similar to the above: suppose that the desired sticky 4-cycle has vertices $(1,1,1,\dots,1)$, $(1,2,1,\dots,1)$, $(2,1,1,\dots,1)$, and $(2,2,1,\dots,1)$, then the subgraph $G$ of $\Lnk$ induced on $\prod_{i=1}^{n-1}[k_i] \times \{2\}$ has the same distribution as $\vv{L}(n-1,(k_1,\dots,k_{n-1}))$, and behaves independently of whether the desired sticky 4-cycle appears or not. Applying Theorem~\ref{thm:main_grids} to $G$ and using work of~\cite{rinott2000number}, one can show that there exists $p>0$ such that if $n$ is large enough relative to $\max_i(k_i)$, then with probability at least $p$, $G$ contains exactly one sink and this can be reached from every vertex in~$G$. It follows that there exists $c'>0$ such that if $n$ is large relative to $\max_i(k_i)$, then with probability at least $\prod_{i=1}^n{k_i^{-c'}}$ there is a vertex in $\Lnk$ which can be reached from $\prod_{i=1}^{n-1}{k_i}$ vertices but not from every non-sink.

Finally, to what extent is it possible to take a smaller $N$ in the statement of Theorem~\ref{thm:main_grids}?
It turns out that, for large $K$, the value of $N$ cannot be substantially improved as a function of $K$: it is possible to take $r$ not much smaller than $\log(K - 1)$ in the following theorem,\footnote{By applying the mean value theorem to $\log$ one can show that $\log(K)-\log(K-1) = 1/K + O(1/K^2)$.} so one cannot hope for a value of $N$ any better than $K \log(K) - O(\log(K))$. Here we let $\mbf{K} = (K, \dots, K)$ denote the all $K$'s vector of the appropriate length.

\begin{restatable}{theorem}{lowerBound}\label{thm:lower_bound}
    There is a constant $c > 0$ such that for all integers $n \geq 2$, $2\leq K \leq \sqrt{n}$, and
    \[
        1 \leq r \leq \frac{\log(K-1)}{(K - 1) (\log(K) - \log(K-1))},
    \]
    the probability that there is a vertex in ${\vv{L}(n, \mbf{K})}$ which can be reached from exactly $r(K-1) + 1$ vertices is at least $1 - c/n$.
\end{restatable}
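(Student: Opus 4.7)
The argument is a second-moment method on a count of carefully chosen witness structures. For each $v \in V(n, \mbf{K})$ and each $r$-subset $S \subseteq [n]$, let $E_{v,S}$ be the event that $v$ wins precisely the lines through it in the directions of $S$, and that every one of the $r(K-1)$ vertices at Hamming distance~$1$ from $v$ in a direction of $S$ wins none of its $n$ lines. When $E_{v,S}$ holds, the set of vertices that can reach $v$ in $\vv{L}(n, \mbf{K})$ is precisely $\{v\}$ together with these $r(K-1)$ vertices, giving $|R(v)| = r(K-1) + 1$ as required. Let $F_v = \bigcup_S E_{v,S}$ and let $X = |\{v \in V(n, \mbf{K}) : F_v \text{ holds}\}|$; it suffices to prove $\mathbb{P}(X = 0) \leq c/n$.

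For the first moment, a short check shows that the $n + r(K-1)(n-1)$ lines referenced in $E_{v,S}$ are pairwise distinct, so their winners are independent and $\mathbb{P}(E_{v,S}) = K^{-r}(1 - 1/K)^{(n-r) + r(K-1)(n-1)}$. Since different $S$ give disjoint events, $\mathbb{P}(F_v) = \binom{n}{r}\mathbb{P}(E_{v,S})$, and multiplying by $K^n$ gives
\[
    \mathbb{E}[X] = \binom{n}{r} A^n \cdot \frac{K^{r(K-1)}}{(K-1)^{rK}}, \qquad \log A = \log(K-1) - r(K-1)\bigl(\log K - \log(K-1)\bigr).
\]
The stated upper bound on $r$ is precisely the condition $\log A \geq 0$, so $A^n \geq 1$. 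A short case analysis then shows $\mathbb{E}[X] \geq c_1 n$ for an absolute constant $c_1 > 0$: when $r = 1$, one verifies that $\log A$ is increasing in $K$ (using $\log(1 + 1/(K-1)) \leq 1/(K-1) < 1/(K-1) + 1/K$), so that $\log A \geq \log 3 - 3\log(4/3) > 0$ uniformly on the valid range $K \geq 4$ and $A^n$ is exponentially large; when $r \geq 2$, the elementary bound $K^{K-1}/(K-1)^K \geq e/K$ together with $K \leq \sqrt n$ gives $\binom{n}{r}(e/K)^r \geq (n/r)^r(e/K)^r \geq (e\sqrt n/r)^r \geq c_1 n$ by direct computation.

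For the second moment we split $\mathbb{E}[X^2] = \sum_{v,v'}\mathbb{P}(F_v \cap F_{v'})$ according to the Hamming distance $d = d_H(v,v')$. Any line referenced in $E_{v,S}$ has its base (the fixed coordinates in all directions but the line's) within Hamming distance~$1$ of the corresponding projection of $v$, so for $d \geq 4$ no single line can affect both $F_v$ and $F_{v'}$ and these events are independent; these pairs contribute at most $\mathbb{E}[X]^2$. For $d \in \{1,2,3\}$ one checks line by line that $\mathbb{P}(F_v \cap F_{v'}) \leq C_K\, \mathbb{P}(F_v)\mathbb{P}(F_{v'})$ for a factor $C_K$ depending only on $K$, arising from the $O(1)$ shared lines per pair. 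Since the number of pairs at distance $d \leq 3$ is $O(K^n (nK)^3)$, their contribution to $\mathbb{E}[X^2]$ divided by $\mathbb{E}[X]^2 = K^{2n}\mathbb{P}(F_v)^2$ is bounded by $(nK)^3 C_K/K^n = o(1/n)$ under $K \leq \sqrt n$. Combining, $\mathrm{Var}(X) \leq \mathbb{E}[X] + o(\mathbb{E}[X]^2/n)$, and Chebyshev's inequality gives $\mathbb{P}(X = 0) \leq \mathrm{Var}(X)/\mathbb{E}[X]^2 \leq c/n$.

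The main obstacle is the case analysis at Hamming distances $2$ and $3$. In these configurations a line through an in-neighbour of $v$ may coincide with a line passing through $v'$ itself, and the joint requirement on such a line can be \emph{more} likely than the product of the individual requirements, with a ``synergy'' factor of $K/(K-1)$. Bounding the number of such synergies per pair by a small constant and showing that the resulting multiplicative factor is easily dominated by the $K^{-n}$ arising from the small fraction of close pairs among all pairs is the key combinatorial computation, after which the desired second-moment bound follows.
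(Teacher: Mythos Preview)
Your witness structure and second-moment strategy match the paper's proof exactly; the divergence is in how the close-pair variance terms are handled, and this is where your argument has a gap.

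Your claim that there are $O(1)$ shared lines per close pair is false. Take $d(v,v')=1$, say differing in coordinate~$1$, and assume $1\notin S\cup S'$ (otherwise one event forces the other's centre to be a source and the intersection is empty). For each $i\in S\cap S'$ and each value $x\neq v_i$, the line in direction~$1$ through the vertex with $i$th coordinate $x$ and all other coordinates equal to those of $v$ is referenced as a source constraint by both $E_{v,S}$ and $E_{v',S'}$; this already gives $(K-1)\,|S\cap S'|$ shared lines, which can be $\Theta(rK)$. It so happens that in this particular configuration each shared line has synergy ratio $\le 1$, but establishing a uniform bound on $C_K$ across all $d\le 3$ and all $(S,S')$ is real work that your sketch does not do.

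The paper sidesteps this entirely with the crude bound $\mathbb{P}(X_aX_b=1)\le\mathbb{P}(X_a=1)$ for $d(a,b)\le3$, giving $\mathbb{E}[X^2]\le\mathbb{E}[X]^2+n^{9/2}\,\mathbb{E}[X]$. This in turn demands a stronger first-moment lower bound than your $\mathbb{E}[X]\ge c_1 n$, and the paper obtains it by splitting at $r\le10$ versus $r\ge11$ rather than at $r=1$ versus $r\ge2$. For $r\le10$ there are no integer solutions to $f(K)=r$, so the base $K(1-1/K)^{1+(K-1)r}$ is bounded uniformly above~$1$ and $\mathbb{E}[X]\ge(1+\eps)^n$; for $r\ge11$ the factor $(\sqrt n/r)^r$ alone gives at least $n^{11/2}/11^{11}$. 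Either way $\mathbb{E}[X]\gg n^{9/2}$, and Chebyshev finishes in one line.
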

\begin{proof}
    Let $f(K)$ denote the expression upper bounding $r$ in the theorem. One can show that this is increasing for $K\geq 2$ and that $f(\sqrt{n})\leq n$ for $n\geq 2$, so letting $n$, $K$, and $r$ be as in the statement, we have $r\leq n$. Let $X_a$ be the indicator random variable of the event that $a\in V(n,\mbf{K})$ wins exactly $r$ of its lines and every vertex on those $r$ lines except~$a$ is a source, and write $X = \sum_{a \in [K]^n} X_a$. We wish to upper bound the probability that $X=0$, for which we will use a second moment calculation.

    First, note that $X_a$ and $X_b$ are independent if the Hamming distance between $a$ and~$b$ (i.e.\ the number of coordinates on which $a$ and $b$ differ), denoted by $d(a,b)$, is at least four. It follows that
    \begin{align*}
        \mathbb{E}[X^2]
         & = \sum_{a \in [K]^n} \sum_{b \in [K]^n} \mathbb{P}(X_a X_b = 1)\\
         & \leq \sum_{a \in [K]^n} \sum_{b \in [K]^n} \mathbb{P}(X_a = 1) \mathbb{P}(X_b = 1) + \sum_{a \in [K]^n} \sum_{b \in [K]^n : d(a,b) \leq 3} \mathbb{P}(X_a=1)\\
         & \leq \mathbb{E}[X]^2 + K^3 n^3\, \mathbb{E}[X]\\
         & \leq \mathbb{E}[X]^2 + n^{9/2}\, \mathbb{E}[X].
    \end{align*}
    Hence, to apply Chebyshev's inequality, we need to show that $\mathbb{E}[X]$ grows more quickly than $n^{9/2}$. We have
    \begin{align*}
        \mathbb{E}[X]
         & = K^n \binom{n}{r} \frac{1}{K^r} \left( 1 - \frac{1}{K} \right)^{n -r + r(n-1)(K-1)}\\
         & \geq \left(\frac{n}{Kr}\right)^r \left[K\left( 1 - \frac{1}{K} \right)^{1 + (K-1)r}\right]^n\\
         & \geq  \left(\frac{\sqrt{n}}{r}\right)^r \left[K\left( 1 - \frac{1}{K} \right)^{1 + (K-1)r}\right]^n,
    \end{align*}
    where we have used $\binom{n}{r}\geq (n/r)^r$ in the second line and $K\leq \sqrt{n}$ in the last line.

    We will analyse the two terms in this product separately. For fixed $n$, the first term, $(\sqrt{n}/r)^r$, is increasing for $r\in[0,\sqrt{n}/e]$. Since $f(\sqrt{n})\leq \sqrt{n}/e$ for all $n\geq 2$, it follows that this term is always at least $\sqrt{n}\geq 1$, and if $r\geq 11$, then it is at least $n^{11/2}/{11}^{11}$.
    
    Turning to the second term, it is straightforward to check that if $f(K)=r$, then the expression in square brackets is equal to $1$, and that for fixed $K\geq 2$ this expression is strictly decreasing for $r\in [1,f(K)]$. It follows that the second term is always at least~$1$. There are no integer solutions $K$ to $f(K)=r$ for any $r\in[10]$, so for $r\leq 10$ the expression in square brackets is always strictly greater than $1$. Moreover, for fixed~$r$ we have $K(1-1/K)^{1+r(K-1)}\to \infty$ as $K\to \infty$, so in fact there exists some universal $\eps>0$ such that the second term is at least $(1 + \eps)^n$ whenever $r \leq 10$.
    
    Combining, we have $\mathbb{E}[X]\geq \min\big\{(1 + \eps)^n, n^{11/2}/{11}^{11}\big\}$ for all admissible $n$, $K$, and~$r$. By Chebyshev's inequality, this yields
    \[
        \mathbb{P}\left(|X - \mathbb{E}[X]| \geq \tfrac{1}{2}\mathbb{E}[X] \right)
        \leq 4\cdot \frac{\mathbb{E}[X^2] - \mathbb{E}[X]^2}{\mathbb{E}[X]^2}
        \leq \frac{4n^{9/2}}{\min\left\{ (1 + \eps)^n, n^{11/2}/{11}^{11} \right\}},
    \]
    and since $\mathbb{E}[X]>0$ this is, in turn, an upper bound on $\Prb(X=0)$.
    There is a constant $c'>0$ such that this upper bound is at most $c'/n$ for all sufficiently large $n$, and we can choose $c>0$ large enough to accommodate the (finitely many) remaining cases.
\end{proof}

If, however, we are prepared to allow the exponent in the failure probability to depend on $K$, then we can adapt the proof of Theorem~\ref{thm:main_grids} to slightly improve the value of $N$.

\begin{restatable}{theorem}{constantK}\label{thm:constant_K}
    For all integers $K\geq 2$, there exists $c_K>0$ such that for all integers $n\geq 2$ and all $\mbf{k}\in\{2,3,\dots K\}^n$, every vertex of $\Lnk$ can either be reached from at most
    \[
        N'=\frac{\log(K)}{\log(K)-\log(K-1)}
    \]
    vertices or from every non-sink, with failure probability at most $e^{-c_K n}$.
\end{restatable}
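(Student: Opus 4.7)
The plan is to rerun the proof of Theorem~\ref{thm:main_grids} almost verbatim, with a single quantitative tightening to Lemma~\ref{lem:Y_small}. Since $K$ is now a fixed constant, the hypothesis $K\leq \delta\sqrt{n/\log(n)}$ required by Theorem~\ref{thm:main_grids} is satisfied for all $n$ above some threshold $n_K$, and so the auxiliary results from Appendices~\ref{sec:foothold}--\ref{sec:main_proof} can all be invoked directly. Lemma~\ref{lem:X_small} handles the lower tail with no change; Lemmas~\ref{lem:good_cmpnt},~\ref{lem:event_C}, and~\ref{lem:event_D} give an extremely connected strongly connected component of good vertices that is accessible from any vertex that reaches or is reached from more than $n/2$ vertices. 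The only piece to be modified is the foothold bound for vertices reached from many others, i.e.\ the statement that $Y_m=0$ for $m$ strictly above the desired threshold.

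The key observation driving the improvement is that the threshold $N'$ is exactly the real number at which the quantity appearing in the proof of Lemma~\ref{lem:Y_small} changes sign:
\[
    K^{1/m}\left(1-\tfrac{1}{K}\right)<1 \quad\iff\quad \tfrac{\log(K)}{m}<\log\!\left(\tfrac{K}{K-1}\right) \quad\iff\quad m>N'.
\]
For every integer $m\geq \lfloor N'\rfloor+1$, the constant $\beta_K \coloneqq \log(K/(K-1))-\log(K)/(\lfloor N'\rfloor+1)$ is strictly positive and depends only on $K$, so $K^{1/m}(1-1/K)\leq e^{-\beta_K}$. Running the calculation in the proof of Lemma~\ref{lem:Y_small} but replacing the bound in \eqref{eq:K_constant_different} with this stronger one yields
\[
    \mathbb{E}[Y_m]\leq \bigl[enK^2\cdot e^{-\beta_K(n-m)}\bigr]^{m}.
\]
For $m\leq n/2$ and $n$ sufficiently large relative to $K$ (so that $enK^2\leq e^{\beta_K n/4}$), this is at most $e^{-\beta_K n m/4}\leq e^{-\beta_K n/4}$, and the sum over $\lfloor N'\rfloor+1\leq m\leq n/2$ is still at most $e^{-c_K n}$ for some $c_K>0$ depending only on $K$. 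Lemma~\ref{lem:markov} then gives $Y_m=0$ for every $m\in (N',n/2]$ with this same probability.

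Combining these events exactly as in the proof of Theorem~\ref{thm:main_grids}: any vertex $y$ that is reached from more than $N'$ vertices is reached from more than $n/2$ vertices (by the modified Lemma~\ref{lem:Y_small}), hence reached from a good vertex (by $C$); any non-sink $x$ reaches more than $n/2$ vertices (by Lemma~\ref{lem:X_small}), hence reaches a good vertex (by $D$); and all good vertices lie in a single strongly connected component (by $B$). Concatenating these paths provides a directed walk from $x$ to $y$, as required. Finally, the finitely many values $n<n_K$ can be absorbed by shrinking $c_K$ so that $e^{-c_K n}\geq 1$ trivially for those $n$.

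The one place that is genuinely different from the proof of Theorem~\ref{thm:main_grids}, and thus the main technical point to get right, is verifying that the slack between $m$ and $N'$ remains large enough to yield an \emph{exponential} (rather than merely sub-exponential) failure probability when we take $m$ to be the smallest integer exceeding $N'$. This is precisely what forces $c_K$ to depend on $K$: the constant $\beta_K$ shrinks like $\Theta(1/K^2)$ as $K\to\infty$, and this is consistent with Theorem~\ref{thm:lower_bound}, which shows that the bound $N'$ itself cannot be substantially improved. No new ideas beyond those in Appendices~\ref{sec:foothold}--\ref{sec:main_proof} are needed.
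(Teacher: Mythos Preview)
Your proposal is correct and takes essentially the same approach as the paper: both hinge on the observation that $K^{1/m}(1-1/K)$ is bounded strictly below $1$ by a $K$-dependent constant for every integer $m>N'$, which is exactly the sharpening of Lemma~\ref{lem:Y_small} needed. The paper organizes this slightly differently---it first invokes Theorem~\ref{thm:main_grids} with $\eps=1$ as a black box and then only bridges the residual range $N'<m\leq 2K\log(K)$, whereas you rerun the full argument with the modified lemma covering $N'<m\leq n/2$---but the technical content is identical.
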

\begin{proof}
    Note first that it is sufficient to show that the result holds when $n$ is large relative to $K$, since this covers all but finitely many cases for each $K$, and $c_K$ can be chosen to handle these.

    By the $\eps=1$ case of Theorem~\ref{thm:main_grids}, there exists $c>0$ such that if $n$ is large relative to~$K$, then for all $\mbf{k}\in\{2,\dots,K\}^n$, the failure probability of the event that every vertex of $\Lnk$ can either be reached from at most $2K\log(K)$ vertices, or from every non-sink, is at most $e^{-c n}$.
    Hence, to prove the theorem it is enough to show that for all $K$ there exists $c'_K>0$ such that if $n$ is large enough relative to $K$, then for all $\mbf{k}\in\{2,\dots,K\}^n$, with failure probability at most $e^{-c'_K n}$, no vertices of $\Lnk$ can be reached from more than $N'$ vertices but at most $2K\log(K)$ vertices.

    This can be achieved by modifying the proof of Lemma~\ref{lem:Y_small}. Defining $Y_m$ as in Appendix~\ref{sec:foothold}, we need to show that $Y_m=0$ for all $N'<m\leq 2K\log(K)$ with failure probability $e^{-c'_K n}$. Fix such an $m$, then as in the proof of Lemma~\ref{lem:Y_small} we have
    \[
        \mathbb{E}[Y_m]
        \leq \left[4nK^2 \cdot \left(K^{1/m}\left(1-\frac{1}{K}\right)\right)^{n-m}\right]^m.
    \]
    In place of~\eqref{eq:K_constant_different}, it is not difficult to check that $m>N'$ ensures that $K^{1/m}(1-1/K)<1-\eta_K$ for some $\eta_K\in(0,1)$. Thus, for $n$ large in terms of $K$ (uniformly in $m$), we have
    \[
        \mathbb{E}[Y_m]
        \leq \left(4nK^2(1-\eta_K)^{n-m}\right)^m
        \leq \left(1-\eta_K^2\right)^{m(n-m)}
        \leq e^{-c''_K n},
    \]
    for some $c''_K>0$.
    Then, if $n$ is large relative to $K$,
    \[
        \sum_{N<m\leq 2K\log(K)}\mathbb{E}[Y_m]\leq 2K\log(K)\cdot e^{-c''_K n}\leq e^{-c'_K n},
    \]
    for some $c'_K>0$, so by Lemma~\ref{lem:markov} we have that $Y_m=0$ for all $N'<m\leq 2K\log(K)$ with failure probability at most $e^{-c'_K n}$, as required.
\end{proof}

Together, Theorem~\ref{thm:lower_bound} and Theorem~\ref{thm:constant_K} essentially determine the `correct' value for $N$ as a function of $K$. Indeed, if we ignore the fact that $r$ must be an integer in Theorem~\ref{thm:lower_bound}, then that theorem implies that when $n$ is much larger than $K$, the graph $\vv{L}(n,\mbf{K})$ typically contains a vertex that can be reached from `exactly' $\log(K)/(\log(K)-\log(K-1))$ vertices. Meanwhile Theorem~\ref{thm:constant_K} implies that typically every vertex which can be reached from more than this many vertices can be reached from every non-sink.

Although the improvement to the value of $N$ represented by Theorem~\ref{thm:constant_K} is modest, it has the following consequence for $\vv{L}(n,\mbf{2})$ and $\vv{L}(n,\mbf{3})$ which is of independent interest.

\begin{corollary}\label{cor:constant_K}
    There exists a constant $c > 0$ such that with failure probability at most $e^{-c n}$, every non-sink in $\vv{L}(n,\mbf{2})$ can reach every non-source. If $n\geq 2$, the same is true for $\vv{L}(n,\mbf{3})$. Conversely,
    for each $K \geq 4$, the probability that there is a non-sink in $\vv{L}(n,\mbf{K})$ which cannot reach every non-source tends to 1 as $n\to\infty$.
\end{corollary}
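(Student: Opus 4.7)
For the first two statements the plan is a direct application of Theorem~\ref{thm:constant_K}. When $K=2$ the quantity $N' = \log 2 / (\log 2 - \log 1) = 1$, so Theorem~\ref{thm:constant_K} asserts that with failure probability at most $e^{-cn}$, every vertex of $\vv{L}(n,\mbf{2})$ is either reachable only from itself---hence a source---or reachable from every non-sink. The second alternative says every non-source is reachable from every non-sink, which is precisely the statement that every non-sink reaches every non-source. For $K=3$ we have $N' = \log 3/\log(3/2) \approx 2.71$, so the bound ``reachable from at most $N'$ vertices'' means ``reachable from at most $2$ vertices''. In $\vv{L}(n,\mbf{3})$, any vertex that wins even one of its lines picks up two incoming edges from the two other vertices on that line, and so is reachable from at least three vertices. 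Hence backward reachability of size at most $2$ forces the vertex to win no lines and therefore be a source, and the conclusion follows as in the $K=2$ case.

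For the converse with $K\geq 4$, the plan is to use Theorem~\ref{thm:lower_bound} with $r=1$ to locate a non-source having very small backward reachability. To verify that $r=1$ is admissible, use $-\log(1-1/K) \leq 1/(K-1)$, which gives $(K-1)(\log K - \log(K-1)) \leq 1$, so
\[
\frac{\log(K-1)}{(K-1)(\log K - \log(K-1))} \geq \log(K-1) \geq \log 3 > 1.
\]
Theorem~\ref{thm:lower_bound} then yields, with probability at least $1 - c/n$, a vertex $v \in \vv{L}(n,\mbf{K})$ whose backward reachability set $B_v$ has size exactly $r(K-1) + 1 = K$. Because $|B_v| > 1$, the vertex $v$ has at least one incoming edge and is therefore a non-source.

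It remains to produce a non-sink $u$ outside $B_v$: such a $u$ cannot reach $v$, violating the property that every non-sink reaches every non-source. Since $|B_v|=K$ is a fixed constant while the total number of vertices is $K^n$, it suffices to know that there are many non-sinks. The expected number of sinks is $K^n(1-1/K)^n = (K-1)^n = o(K^n)$, so Markov's inequality gives $\Prb(\#\text{sinks} \geq K^n/2) \leq 2(1-1/K)^n \to 0$. Hence with probability tending to $1$ the number of non-sinks is at least $K^n/2$, which is eventually larger than $K$, so some non-sink lies outside $B_v$. A union bound with the event from Theorem~\ref{thm:lower_bound} shows that with probability tending to $1$ both events hold simultaneously, producing the desired obstruction.

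The key input is Theorem~\ref{thm:lower_bound}, whose second-moment proof does the heavy lifting of locating the small backward reachability set; the rest of the argument is a short combinatorial observation that small backward reachability forces source status for $K \in \{2,3\}$, together with a simple first-moment count of sinks for $K \geq 4$.
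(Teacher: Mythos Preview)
Your proof is correct and follows essentially the same route as the paper: the positive direction comes from Theorem~\ref{thm:constant_K} together with the observation that a non-source in $\vv{L}(n,\mbf{K})$ is reachable from at least $K$ vertices (which exceeds $N'$ for $K\in\{2,3\}$), and the negative direction comes from Theorem~\ref{thm:lower_bound} with $r=1$. Your added detail---producing a non-sink outside the small backward set $B_v$---fills in what the paper leaves as ``follows immediately''. One slip to fix: the probability that a given vertex is a sink is $(1/K)^n$, not $(1-1/K)^n$ (you have computed the expected number of \emph{sources}); the expected number of sinks is therefore $1$, not $(K-1)^n$, but your Markov argument goes through unchanged and in fact with a stronger bound.
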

Corollary \ref{cor:constant_K} is restated as Proposition \ref{prop:234} in the main text.

The positive direction of the corollary follows straightforwardly from Theorem~\ref{thm:constant_K} and the observation that every non-source in $\Lnk$ can be reached from at least $\min_i(k_i)$ vertices, and the negative direction follows immediately from setting $r = 1$ in Theorem~\ref{thm:lower_bound}. While the $\mbf{k}=\mbf{2}$ case of the corollary follows from Theorem~\ref{thm:main_grids}, the $\mbf{k}=\mbf{3}$ case does not.

\section{Acyclicity of directed grids}
\label{sec:acyclic_grids_proof}\label{sec:tech_acyclic}

Using the terminology and notation of Section~\ref{sec:tech_conn} we state the following strengthening of Proposition~\ref{prop:acyclic_games}.
\begin{restatable}{proposition}{acyclicGrids}\label{prop:acyclic_grids}
    There exists $c>0$ such that for all integers $n\geq 2$ and all $\mbf{k}\in\{2,3,\dots\}^n$, the probability that $\Lnk$ is acyclic is at most $\exp(-cnk^{n-2})$, where $k\coloneqq \min_i(k_i)$.
\end{restatable}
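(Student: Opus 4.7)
The plan is to find $\Omega(nk^{n-2})$ pairwise line-disjoint $2$-dimensional slices of $V(n,\mbf{k})$ and show that each contains a $4$-cycle in $\Lnk$ with probability bounded below by a universal positive constant. Since $\Lnk$ being acyclic forces every slice to be acyclic (and in particular $4$-cycle-free), and since events depending on disjoint sets of lines are independent, the required exponential bound follows by multiplying the failure probabilities.

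First, partition $[n]$ into $m=\lfloor n/2\rfloor$ disjoint pairs, say $\{1,2\},\{3,4\},\dots$. For each such pair $\{i,j\}$ and each tuple $\mbf{s}\in\prod_{\ell\notin\{i,j\}}[k_\ell]$, let
\[
S_{i,j,\mbf{s}}=\{a\in V(n,\mbf{k}):a_\ell=s_\ell\text{ for all }\ell\notin\{i,j\}\};
\]
on this set $\Lnk$ induces a copy of $\vv{L}(2,(k_i,k_j))$. The lines of $V(n,\mbf{k})$ contained in $S_{i,j,\mbf{s}}$ are exactly the direction-$i$ and direction-$j$ lines with other coordinates fixed at $\mbf{s}$. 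Two slices with the same coordinate pair but different $\mbf{s}$ use disjoint line sets (their line parameters must differ in some coordinate outside $\{i,j\}$), and two slices whose coordinate pairs are drawn from our partition differ in the directions of their lines and are therefore also line-disjoint. Thus the events ``$S_{i,j,\mbf{s}}$ contains a $4$-cycle in $\Lnk$'' are mutually independent, and the total number of these slices is at least $m\cdot k^{n-2}\geq (n/4)k^{n-2}$.

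The core technical step is to show that for all $k_1,k_2\geq 2$, the random graph $\vv{L}(2,(k_1,k_2))$ contains a $4$-cycle with probability at least some universal $p_0>0$. Let $X$ be the number of $4$-cycles in this slice; a direct count gives $\mathbb{E}[X]=(k_1-1)(k_2-1)/(2k_1 k_2)\geq 1/8$, with the minimum attained at $k_1=k_2=2$. For the second moment I will establish the lemma: if two distinct potential $4$-cycles $C_1\neq C_2$ in the slice share at least one line, then $\Prb(C_1\cap C_2)=0$. Indeed, each $4$-cycle uses four lines (two rows, two columns) and imposes a specific winner on each, and a short case analysis over the possible row/column overlap patterns (and over the two orientations of each cycle) shows that agreement on the shared-line winner propagates to force $C_1=C_2$. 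Therefore
\[
\mathbb{E}[X^2]=\mathbb{E}[X]+\sum_{\substack{C_1\neq C_2\\\text{line-disjoint}}}\Prb(C_1)\Prb(C_2)\leq \mathbb{E}[X]+\mathbb{E}[X]^2,
\]
and Paley--Zygmund yields $\Prb(X>0)\geq \mathbb{E}[X]/(1+\mathbb{E}[X])\geq 1/9$.

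Combining everything, since $\Lnk$ being acyclic forces no slice to contain a $4$-cycle and the slice events are independent,
\[
\Prb(\Lnk\text{ is acyclic})\leq \bigl(1-\tfrac{1}{9}\bigr)^{(n/4)k^{n-2}}\leq \exp(-cnk^{n-2})
\]
for $c=\log(9/8)/4>0$. The main obstacle is the lemma that line-sharing between distinct potential $4$-cycles in a $2$-dimensional slice always produces incompatible winner requirements: the claim is elementary but must be verified by a finite case analysis over how two pairs of rows and two pairs of columns can overlap. Once this is in hand, the bound follows immediately from independence and the Paley--Zygmund step.
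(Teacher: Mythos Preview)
Your proof is correct and follows the same global architecture as the paper's: partition $[n]$ into $\lfloor n/2\rfloor$ coordinate pairs, take all the corresponding $2$-dimensional slices, observe they are pairwise line-disjoint (hence the slice events are independent), and show that each slice is cyclic with probability bounded below by a universal constant. The final multiplication step is identical.

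The one genuine difference is in how you establish the constant lower bound on $\Prb(\vv{L}(2,(k_1,k_2))\text{ contains a cycle})$. The paper runs an exploration process: starting from a fixed vertex, alternately follow the outgoing edge in direction~$1$ then direction~$2$; the walk eventually re-enters a previously visited line, and a short analysis of the first such time shows a non-trivial cycle appears with probability at least $1/8$. You instead count $4$-cycles directly and apply Paley--Zygmund, which requires your incompatibility lemma: two distinct potential $4$-cycles sharing a line cannot both occur. That lemma is true and admits a clean proof --- sharing row $r$ forces a common column $c$ (the row-$r$ winner), then a common second row $r'$ (the column-$c$ winner, which must differ from $r$), then a common second column, so the cycles coincide --- but it is work you have to do that the paper's random-walk argument sidesteps. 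Conversely, your approach is more transparent about \emph{which} cycles are being found (always $4$-cycles), and the second-moment machinery is standard rather than bespoke. Both routes yield comparable constants ($1/8$ versus $1/9$).
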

The proof below actually yields a slightly better upper bound on the probability that $\Lnk$ is acyclic, but for clarity we have not included this in the statement.

For distinct $i,j\in[n]$, we define an \emph{$\{i,j\}$-plane} of $V(n,\mbf{k})$ to be a subset of $V(n,\mbf{k})$ of size $k_ik_j$ whose elements pairwise differ in at most their $i$th and $j$th coordinates. A subset of $V(n,\mbf{k})$ will be called a \emph{plane} of $V(n,\mbf{k})$ if it is an $\{i,j\}$-plane for some $i$ and $j$.

\begin{proof}[Proof of Proposition~\ref{prop:acyclic_grids}]
    We begin with the following claim.
    \begin{claim}
        Let $k_1,k_2\geq 2$ be integers, then $\vv{L}(2,(k_1,k_2))$ contains a cycle with probability at least $1/8$.
    \end{claim}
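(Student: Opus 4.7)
The plan is to estimate, via a second-moment computation, the probability that $\vv{L}(2,(k_1,k_2))$ contains a specific type of 4-cycle. For each $y\in[k_2]$ write the winner of the row $\{(x,y):x\in[k_1]\}$ as $(W_1(y),y)$, and for each $x\in[k_1]$ write the winner of the column $\{(x,y):y\in[k_2]\}$ as $(x,W_2(x))$; these $k_1+k_2$ coordinates are independent and uniform. Given a pair $\{y_1,y_2\}\subseteq[k_2]$ of distinct elements, set $x_i=W_1(y_i)$: provided $W_2(W_1(y_1))=y_2$ and $W_2(W_1(y_2))=y_1$ (which automatically forces $x_1\neq x_2$, since otherwise $y_2=W_2(x_1)=W_2(x_2)=y_1$), a direct check of each of the four edges shows that $(x_1,y_1)\to(x_1,y_2)\to(x_2,y_2)\to(x_2,y_1)\to(x_1,y_1)$ is a directed 4-cycle in $\vv{L}(2,(k_1,k_2))$. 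Let $N$ count the pairs $\{y_1,y_2\}$ for which this 4-cycle appears; it suffices to show $\Prb(N\geq 1)\geq 1/8$.

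Conditioning on $W_1(y_1)$ and $W_1(y_2)$ gives $\Prb(\{y_1,y_2\}\text{ contributes})=(k_1-1)/(k_1 k_2^2)$, and hence
\[
    \mathbb{E}[N] = \binom{k_2}{2}\frac{k_1-1}{k_1 k_2^2} = \tfrac12\Big(1-\tfrac1{k_1}\Big)\Big(1-\tfrac1{k_2}\Big) \geq \tfrac18,
\]
with equality exactly at $k_1=k_2=2$. The key observation is that if pairs $\{y_1,y_2\}$ and $\{y_1,y_3\}$ with $y_2\neq y_3$ both contributed, then $W_2(W_1(y_1))$ would have to equal both $y_2$ and $y_3$; so any two distinct contributing pairs must be disjoint. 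Furthermore, for two disjoint pairs to both contribute, the four associated values of $W_1$ must all be distinct, which forces $k_1\geq 4$, and symmetrically $k_2\geq 4$.

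When $\min(k_1,k_2)\leq 3$ this means $N\in\{0,1\}$, so $\Prb(N\geq 1)=\mathbb{E}[N]\geq 1/8$ immediately. In the remaining case $k_1,k_2\geq 4$, the plan is to apply the Bonferroni lower bound $\Prb(N\geq 1)\geq \mathbb{E}[N]-\tfrac12\,\mathbb{E}[N(N-1)]$. Only ordered pairs of disjoint $2$-subsets contribute to $\mathbb{E}[N(N-1)]$, and a routine count gives
\[
    \mathbb{E}[N(N-1)] = \frac{(k_1-1)(k_1-2)(k_1-3)(k_2-1)(k_2-2)(k_2-3)}{4\,k_1^3 k_2^3} \leq 2\,\mathbb{E}[N]^3,
\]
using the elementary inequality $(k_i-2)(k_i-3)\leq (k_i-1)^2$. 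Writing $m=\mathbb{E}[N]$, we have $m\in[9/32,\,1/2)$ in this regime and $\Prb(N\geq 1)\geq m-m^3$; since $m-m^3$ is increasing on $[0,1/\sqrt{3})$, this is at least $9/32-(9/32)^3>1/8$. The main thing to watch is the Bonferroni accounting, and the mutual exclusivity of overlapping pairs is precisely what keeps the cross term manageable.
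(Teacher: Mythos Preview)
Your proof is correct. The key computations all check out: the contribution probability $(k_1-1)/(k_1k_2^2)$, the observation that overlapping pairs cannot both contribute (forcing disjointness and hence $k_1,k_2\geq 4$ for two simultaneous contributions), the second-moment formula, and the final Bonferroni estimate $m-m^3\geq 9/32-(9/32)^3>1/8$.

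Your route is genuinely different from the paper's. The paper argues via an exploration process: starting from $(1,1)$, it follows best-response edges alternately in the two coordinate directions, observing that the first time the walk re-enters a previously visited line it either closes a cycle or becomes stationary; a short conditional-probability calculation then shows the former happens with probability at least $1/8$. That argument is case-free in $k_1,k_2$ and detects cycles of arbitrary length. Your approach instead targets only $4$-cycles of a particular combinatorial form and proceeds by a first/second moment (Bonferroni) computation, trading conceptual directness for an entirely elementary calculation. The exploration argument is slicker and more uniform, while yours is self-contained and requires no ingenuity beyond setting up the right indicator variables; the mild case split on $\min(k_1,k_2)$ is the price you pay for restricting to $4$-cycles.
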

    \begin{proof}
        We will define a random process $\mbf{X}=(X_0,X_1,X_2,\dots)$ coupled to $\vv{L}(2,(k_1,k_2))$. Let $X_0=(1,1)$ and for each $t\geq 1$, given $X_{t-1}\in [k_1]\times[k_2]$, if $t$ is odd, let $X_t$ be the winner of the line in coordinate~$1$ which contains~$X_{t-1}$. If $t$ is even, let $X_t$ be the winner of the line in coordinate $2$ which contains~$X_{t-1}$. Thus, $\mbf{X}$ is a random walk on $[k_1]\times[k_2]$ starting at $(1,1)$, which at odd time steps traverses the available edge of $\vv{L}(2,(k_1,k_2))$ in the first coordinate direction (if there is one), and at even time steps traverses the available edge in the second.

        Let $T$ be the least time $t$ at which there exists $1\leq i<t$ such that $X_i$ and $X_t$ have the same first coordinate if $t$ is odd, or the same second coordinate if $t$ is even. It is not difficult to check (bearing in mind that we do not explore from $X_0$ in the second dimension at the start of the process) that for each $t<T$, when we explore from $X_t$ we do so in an unexplored line.

        Once we reach $X_T$, we already know the winner in the line we want to explore: it is the $X_i$ with $1\leq i<T$ that has the same first or second coordinate (depending on the parity of $T$) as $X_T$. Hence, $X_{T+1}=X_i$ for this $i$ and the process is deterministic from here, with $X_{T+2}=X_{i+1}$ and so on. The process either becomes stationary at this point or enters a (non-trivial) cycle. It is straightforward to see that the process is stationary exactly when $X_T=X_{T-1}$, so if $X_T\neq X_{T-1}$, then $\vv{L}(2,(k_1,k_2))$ contains a cycle. Hence, let $A$ be the event that $X_T\neq X_{T-1}$. We will show that $\Prb(A)\geq 1/8$.

        Given $T$ and $X_{T-1}$, $X_T$ is chosen uniformly at random from among those vertices in the unexplored line through $X_{T-1}$ that also belong to a previously explored line. There are $\floor{T/2}$ such vertices to choose from. Writing $\tau=\tau(k_1,k_2)$ for the maximum possible value of $T$, it follows that, for each $t\in\{2,\dots,\tau\}$, we have $\Prb(A\,|\,T=t)=(\floor{t/2}-1)/\floor{t/2}$. This is at least $1/2$ for $t\geq 4$, so
        \[
            \Prb(A)=\sum_{t=2}^{\tau}{\Prb(A\,|\,T=t)\Prb(T=t)\geq \frac{\Prb(T\geq 4)}{2}.}
        \]
        We have $T\geq 4$ exactly when $X_1$ does not win its line in the second dimension and $X_2$ does not win its line in the first dimension. This event occurs with probability $(1-1/k_1)(1-1/k_2)\geq 1/4$, so $\Prb(A)\geq 1/8$, as required.
    \end{proof}
    Let $n\geq 2$ be an integer and let $\mbf{k}\in\{2,3,\dots\}^n$.
    By the claim, any given plane of $V(n,\mbf{k})$ induces a cyclic subgraph of $\Lnk$ with probability at least~$1/8$.
    In a family of planes which pairwise intersect in at most one vertex, each plane induces a cyclic subgraph of $\Lnk$ independently.
    The collection consisting of all $\{1,2\}$-planes, all $\{3,4\}$-planes, and so on, up to the $\{2\floor{n/2}-1, 2\floor{n/2}\}$-planes, is such a family. For distinct $i,j\in[n]$, the number of $\{i,j\}$-planes in $V(n,\mbf{k})$ is $\prod_{a\in[n]\setminus\{i,j\}}{k_a}$, so this family has size at least $\floor{n/2}\min(k_i)^{n-2}$, and the proposition follows.
\end{proof}

Proposition~\ref{prop:acyclic_games} follows from Proposition~\ref{prop:acyclic_grids} by the same reasoning with which we deduced Theorem~\ref{thm:main_game} from Theorem~\ref{thm:main_grids}: since $\Lnk$ has the same distribution as the best-response graph of a game drawn uniformly at random from all games in $\cG(n,\mbf{k})$, we have that $\Prb\big(\Lnk \text{ is acyclic}\big) / \Prb(S_{n,\mbf{k}})$ is equal to
\[
\frac{| \{g\in \cG(n,\mbf{k})\colon g \text{ is acyclic}\} |}{| \{ g\in\cG(n,\mbf{k})\colon g \text{ has a pure Nash equilibrium}\} |},
\]
where $S_{n,\mbf{k}}$ is the event that $\Lnk$ contains a sink, as in Section~\ref{sec:tech_conn}.
Thus, since $k_i\geq 2$ for all $i$, Proposition~\ref{prop:acyclic_games} follows from Proposition~\ref{prop:acyclic_grids} and the fact that $\Prb(S_{n,\mbf{k}})$ is at least a positive constant for all $n$ and $\mbf{k}$ under consideration, as noted in Section~\ref{sec:tech_conn}.

\section{Simulations}\label{sec:simulations}

\begin{figure*}
\centering
\begin{tikzpicture}

\begin{groupplot}[
        group style={
            group name=my plots,
            group size=1 by 3,
            vertical sep=1.5cm,
        },
        width=0.75\textwidth,
        height=5cm,
        xlabel={Number of agents},
        ylabel={Proportion},
        xmin=1.5, xmax=15.5,
        ymin=-0.10, ymax=1.10,
        xtick={2, 3, 4, 5, 6, 7, 8, 9, 10, 11, 12, 13, 14, 15},
        ytick={0,0.2,0.4,0.6,0.8,1},
        grid style=dashed,
        axis x line = bottom,
        axis y line = left,
        scale only axis,
    ]
    
\nextgroupplot[ylabel={Proportion connected}, legend to name=leg, legend columns=-1, legend style={draw=none, /tikz/every even column/.append style={column sep=2ex}}]

\addplot[
    color=color2,
    mark=*,
    mark size = 4pt,
    thick,
    fill opacity = 0.5,
    draw opacity = 1,
    ]
    coordinates {
    (2,1)
    (3,0.7951)
    (4, 0.7436)
    (5, 0.797)
    (6, 0.8550)
    (7, 0.9013)
    (8, 0.9477)
    (9, 0.9673)
    (10, 0.9796)
    (11, 0.9882)
    (12, 0.9946)
    (13, 0.9978) 
    (14, 0.9985)
    (15, 0.9994)
    };

\addplot[
    color=color3,
    mark=triangle*,
    thick,
    fill opacity = 0.5,
    draw opacity = 1,
    mark size = 4pt,
    ]
    coordinates {
    (2, 0.7068)
    (3, 0.6463)
    (4, 0.7852)
    (5, 0.8924)
    (6, 0.9541)
    (7, 0.9794)
    (8, 0.9917)
    (9, 0.9977)
    (10, 0.9989)
    (11, 0.9995)
    (12, 0.9999)
    (13, 1)
    (14, 1)
    (15, 1)
    };

\addplot[
    color=color4,
    mark=square*,
    thick,
    fill opacity = 0.5,
    draw opacity = 1,
    mark size = 4pt,
    ]
    coordinates {
    (2, 0.58531)
    (3, 0.6522)
    (4, 0.8498)
    (5, 0.9464)
    (6, 0.9858)
    (7, 0.9940)
    (8, 0.9986)
    (9, 0.9993)
    (10, 0.9998)
    (11, 1)
    (12, 0.9999)
    };

\addplot[
    color=color5,
    mark=pentagon*,
    thick,
    fill opacity = 0.5,
    draw opacity = 1,
    mark size = 4pt,
    ]
    coordinates {
    (2, 0.5078)
    (3, 0.6739)
    (4, 0.8871)
    (5, 0.9655)
    (6, 0.9936)
    (7, 0.9972)
    (8, 0.9996)
    (9, 0.9999)
    (10, 1)
    (11, 1)
    };

    \addlegendentry{2 actions};    
    \addlegendentry{3 actions};    
    \addlegendentry{4 actions};
    \addlegendentry{5 actions};    
    
\nextgroupplot[ylabel={Proportion acyclic}]

\addplot[
    color=color2,
    mark=*,
    mark size = 4pt,
    thick,
    fill opacity = 0.5,
    draw opacity = 1,
    ]
    coordinates {
    (2,1)
    (3,0.5941)
    (4, 0.0665)
    (5, 0)
    (6, 0)
    (7, 0)
    (8, 0)
    (9, 0)
    (10, 0)
    (11, 0)
    (12, 0)
    (13, 0)
    (14, 0)
    (15, 0) 
    };

\addplot[
    color=color3,
    mark=triangle*,
    thick,
    fill opacity = 0.5,
    draw opacity = 1,
    mark size = 4pt,
    ]
    coordinates {
    (2, 0.9686)
    (3, 0.1175)
    (4, 0)
    (5, 0)
    (6, 0)
    (7, 0)
    (8, 0)
    (9, 0)
    (10, 0)
    (11, 0)
    (12, 0)
    (13, 0)
    (14, 0)
    (15, 0)
    };

\addplot[
    color=color4,
    mark=square*,
    thick,
    fill opacity = 0.5,
    draw opacity = 1,
    mark size = 4pt,
    ]
    coordinates {
    (2, 0.8996)
    (3, 0.0054)
    (4, 0)
    (5, 0)
    (6, 0)
    (7, 0)
    (8, 0)
    (9, 0)
    (10, 0)
    (11, 0)
    (12, 0)
    };

\addplot[
    color=color5,
    mark=pentagon*,
    thick,
    fill opacity = 0.5,
    draw opacity = 1,
    mark size = 4pt,
    ]
    coordinates {
    (2, 0.8378)
    (3, 0.0002)
    (4, 0)
    (5, 0)
    (6, 0)
    (7, 0)
    (8, 0)
    (9, 0)
    (10, 0)
    (11, 0)
    };
    
\nextgroupplot[ylabel={Proportion super-connected}]

\addplot[
    color=color2,
    mark=*,
    mark size = 4pt,
    thick,
    fill opacity = 0.5,
    draw opacity = 1,
    ]
    coordinates {
    (2,0.1429)
    (3,0.1225)
    (4, 0.2312)
    (5, 0.4165)
    (6, 0.6056)
    (7, 0.7408)
    (8, 0.8542)
    (9, 0.9147)
    (10, 0.9525)
    (11, 0.9714)
    (12, 0.9858)
    (13, 0.9931)
    (14, 0.9947)
    (15, 0.9986) 
    };

\addplot[
    color=color3,
    mark=triangle*,
    thick,
    fill opacity = 0.5,
    draw opacity = 1,
    mark size = 4pt,
    ]
    coordinates {
    (2, 0)
    (3, 0.0166)
    (4, 0.0302)
    (5, 0.0334)
    (6, 0.0316)
    (7, 0.0279)
    (8, 0.0297)
    (9, 0.0304)
    (10, 0.0317)
    (11, 0.0350)
    (12, 0.0367)
    (13, 0.0407)
    (14, 0.0472)
    (15, 0.0587)
    };

\addplot[
    color=color4,
    mark=square*,
    thick,
    fill opacity = 0.5,
    draw opacity = 1,
    mark size = 4pt,
    ]
    coordinates {
    (2, 0)
    (3, 0.0013)
    (4, 0.0001)
    (5, 0)
    (6, 0)
    (7, 0)
    (8, 0)
    (9, 0)
    (10, 0)
    (11, 0)
    (12, 0)
    };

\addplot[
    color=color5,
    mark=pentagon*,
    thick,
    fill opacity = 0.5,
    draw opacity = 1,
    mark size = 4pt,
    ]
    coordinates {
    (2, 0)
    (3, 0)
    (4, 0)
    (5, 0)
    (6, 0)
    (7, 0)
    (8, 0)
    (9, 0)
    (10, 0)
    (11, 0)
    };

\end{groupplot}

\node[below] at (current bounding box.south) {\hspace{1.5cm}\pgfplotslegendfromname{leg}};

\end{tikzpicture}
\\
\emph{Note}. Error bars are omitted from the plots because they are narrow.
\caption{\emph{Simulations}. We drew ten thousand games uniformly at random from the set $\{ g \in \cG( n,  (k,\dots,k) ) : g \text{ has a pure Nash equilibrium} \}$ for various combinations of $n$ and $k$. The fraction of the drawn games that are connected, acyclic, and super-connected is shown on the vertical axes. }
\label{fig:simulations}
\end{figure*}

We have investigated the relative sizes of the game classes defined in Section~\ref{sec:games} for generic games with few players and few actions per player via simulation. Figure~\ref{fig:simulations} presents the simulation outcomes. We focused on games in which each of $n$ players has the same number of actions, $k$, and looked at various combinations of $n$ and $k$. For each combination, we drew ten thousand games uniformly at random from among all generic games with at least one pure Nash equilibrium, and we recorded the fraction of the drawn games that were connected, acyclic, or super-connected. As can be seen in Figure~\ref{fig:simulations}, the fraction of connected games gets close to 1 even for relatively small values of $n$ and $k$, and the fraction of acyclic games gets close to 0 even for relatively small values of $n$ and $k$. The fraction of super-connected games gets close to 1 as $n$ increases when $k=2$ and it is close to 0 for all $n$ when $k \geq 4$. For $k=3$, the fraction hovers above (but is close to) 0 for the small values of $n$ that we computed; based on our theoretical analysis, we expect it to eventually get close to 1 from about $n > 30$, but simulations rapidly become computationally expensive.

\end{document}